\def\EE{\mathbb{E}}
\def\NN{\mathbb{N}}
\def\PP{\mathbb{P}}
\def\RR{\mathbb{R}}
\def\calD{\mathcal{D}}
\def\calI{\mathcal{I}}
\def\calN{\mathcal{N}}
\def\calP{\mathcal{P}}
\def\calX{\mathcal{X}}
\def\calY{\mathcal{Y}}
\def\calZ{\mathcal{Z}}
\newcommand{\Tr}{\textup{Tr}}
\newcommand{\uu}{\textup{u}}
\newcommand{\cc}{\textup{c}}
\newcommand{\ip}{\textup{ip}}
\newcommand{\causal}{\textup{causal}}
\newcommand{\Let}{\triangleq}
\newcommand{\Wass}{\mathds W}
\def\1{\mathbbm{1}}
\newcommand{\argmin}{\mathop{\mathrm{argmin}}}
\newcommand{\diff}{{\rm d}}
\newcommand{\lnorm}[2]{\left\|{#1} \right\|_{{#2}}}
\newcommand{\iprod}[2]{\left \langle #1, #2 \right\rangle}
\theoremstyle{plain}
\newtheorem{theorem}{Theorem}[section]
\newtheorem{proposition}[theorem]{Proposition}
\newtheorem{lemma}[theorem]{Lemma}
\newtheorem{example}[theorem]{Example}
\theoremstyle{definition}
\newtheorem{definition}[theorem]{Definition}
\newtheorem{assumption}[theorem]{Assumption}
\theoremstyle{remark}
\newtheorem{remark}[theorem]{Remark}
\icmltitlerunning{Covariate-Aware Transport for Causal Bounds}
\begin{document}

\twocolumn[
\icmltitle{Tightening Causal Bounds via Covariate-Aware Optimal Transport}



\icmlsetsymbol{equal}{*}

\begin{icmlauthorlist}
\icmlauthor{Sirui Lin}{equal,yyy}
\icmlauthor{Zijun Gao}{equal,zzz}
\icmlauthor{Jos\'{e} Blanchet}{yyy}
\icmlauthor{Peter Glynn}{yyy}
\end{icmlauthorlist}

\icmlaffiliation{yyy}{Department of Management Science and Engineering, Stanford University, USA}
\icmlaffiliation{zzz}{Marshall School of Business, University of Southern California, USA}

\icmlcorrespondingauthor{Sirui Lin}{siruilin@stanford.edu}
\icmlcorrespondingauthor{Zijun Gao}{zijungao@marshall.usc.edu}

\icmlkeywords{Machine Learning, ICML}

\vskip 0.3in
]



\printAffiliationsAndNotice{\icmlEqualContribution} 

\begin{abstract}
Causal estimands can vary significantly depending on the relationship between outcomes in treatment and control groups, potentially leading to wide partial identification (PI) intervals that impede decision making. Incorporating covariates can substantially tighten these bounds, but requires determining the range of PI over probability models consistent with the joint distributions of observed covariates and outcomes in treatment and control groups. This problem is known to be equivalent to a conditional optimal transport (COT) optimization task, which is more challenging than standard optimal transport (OT) due to the additional conditioning constraints. In this work, we study a tight relaxation of COT that effectively reduces it to standard OT, leveraging its well-established computational and theoretical foundations. Our relaxation incorporates covariate information and ensures narrower PI intervals for any value of the penalty parameter, while becoming asymptotically exact as a penalty increases to infinity. This approach preserves the benefits of covariate adjustment in PI and results in a data-driven estimator for the PI set that is easy to implement using existing OT packages. We analyze the convergence rate of our estimator and demonstrate the effectiveness of our approach through extensive simulations, highlighting its practical use and superior performance compared to existing methods.

\end{abstract}

\section{Introduction}\label{sec:intro}

In the potential outcome model \cite{rubin1974estimating} with a binary treatment, each unit is associated with two potential outcomes: one under the control condition and one under the treatment condition. 
However, for each unit, only the potential outcome corresponding to the received treatment is observed, and the other potential outcome is missing.
As a result, the joint distribution of the potential outcomes is never observable, while the marginal distributions of the two potential outcomes can often be identified\footnote{The marginal distributions of the two potential outcomes can be identified in randomized experiments or in observational studies where the experiment is unconfounded---the potential outcomes are independent of the treatment assignment conditional on the observed covariates.}.
This poses a fundamental challenge: causal estimands that depend on the joint distribution are only partially identifiable, meaning that their exact values cannot be determined.

Previous works \cite{chernozhukov2017monge, torous2024optimal, gao2024bridging} have used OT, a rapidly growing field with various applications in machine learning \cite{agueh2011barycenters, choi2018stargan, hui2018unsupervised}, to obtain the PI set -- the range of plausible values that the causal estimand may take. 
This thread of research is built upon the key observation: for causal estimands defined as the expectation of a function of potential outcomes, the PI set is an interval whose boundaries are effectively the optimal objective values of two OT problems.
Explicitly, in these two OT problems, the function in the causal estimand (or its negative counterpart) serves as the cost function, and the marginal distributions of the two potential outcomes serve as the source and target distributions, respectively.

In many applications, each unit is associated with a set of covariates. Leveraging the covariate information typically can reduce the uncertainty in the outcome's joint distribution thus the width of the PI interval. 
From an OT perspective, this reduction in uncertainty involves optimizing an expected cost function over all couplings (i.e. joint distributions) that respect the conditional marginal outcome distributions -- a framework termed as Conditional Optimal Transport (COT). A comparison of optimal transport coupling for OT and COT is illustrated in \Cref{fig:motivating.example}. Compared to standard OT, which disregards covariates, COT enforces stricter coupling constraints, leading to tighter PI bounds. This refinement is particularly beneficial when covariates have a substantial influence on the distribution of potential outcomes.

\begin{figure}[ht]
        \centering
    \text{Unconditional (OT) \qquad \qquad Conditional (COT)}
    \begin{minipage}{0.23\textwidth}
                \centering
                \includegraphics[clip, trim = 0cm 0cm 0cm 0cm, height = 1\textwidth]{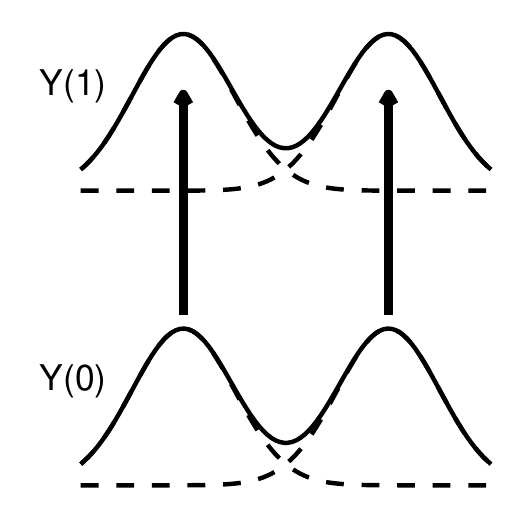}
        \end{minipage}
            \begin{minipage}{0.23\textwidth}
                \centering
                \includegraphics[clip, trim = 0cm 0cm 0cm 0cm, height = 1\textwidth]{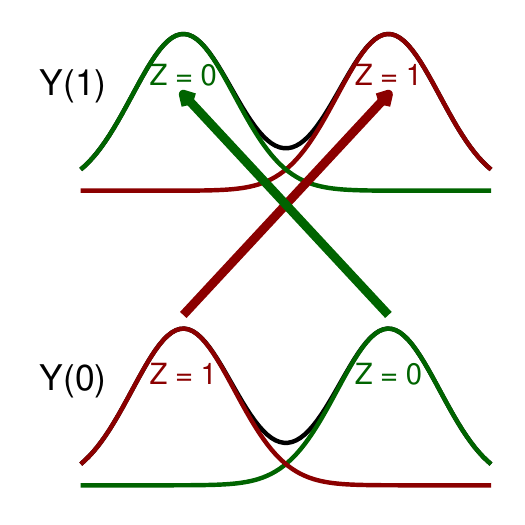}
        \end{minipage}
        \caption{Comparison of optimal transport coupling for OT and COT when the covariate $Z$ impacts the distribution of the potential outcomes $Y(0), Y(1)$.
        }
    \label{fig:motivating.example}
\end{figure}

However, COT is considerably more challenging than the already complex OT for the following reasons.

(i).~The conditional marginal distributions given the covariates, which come in the constraint of the COT problem, are unknown and often challenging to estimate from a finite sample. This difficulty is especially pronounced with continuous covariates, where each covariate value typically corresponds to only one observation. 

(ii).~Unlike OT, a straightforward plug-in estimator for COT using a finite sample may not converge to a meaningful limit as the sample size goes to infinity (see Example 9 in \cite{chemseddine2024conditional}). In this case, imposing well-specified modeling assumptions can enable the construction of a consistent estimator for the COT from finite samples, which, however, entails a significant risk of model misspecification.

(iii).~Even if the true conditional distribution is available, solving COT involves addressing the computationally intensive OT problem for each possible covariate value. For continuous covariates, enumerating all possible values is infeasible.
Approximation through fine covariate discretization is required, which is computationally demanding and introduces additional approximation errors.

In this work, we study a model-lean covariate-adjusted optimal transport method to obtain PI sets, which avoids solving COT problems. 
The key idea is to introduce ``mirror'' covariates and reformulate the problem as a standard (unconditional) OT, using the original cost function plus an additional penalty term that accounts for the gap between the mirror covariates. Our framework is rooted in the method by \cite{carlier2010knothe} for seeking a block triangular map using OT, and is further extended by \cite{hosseini2023conditional, baptista2024conditional, chemseddine2024conditional} for different tasks.
At the population level, the optimal objective value of our framework is shown to improve upon the standard OT optimal value that ignores covariates and increases to the COT optimal value as the penalty parameter approaches infinity. 
This implies that the PI interval of our framework with a finite penalty parameter is always valid and narrower than (or equal to) the interval obtained without using the covariates.
Provided with a finite sample, we construct a plug-in estimator for the optimal objective value of our framework and establish its convergence rate. 
Computationally, the proposed approach is compatible with off-the-shelf OT solvers \cite{flamary2021pot}. 
Empirically, we show that our method significantly improves the PI set from OT without covariates and demonstrate our method's efficiency (i.e., narrower PI intervals) and robustness (to the model misspecification of the conditional marginal distributions) compared to alternative COT-based methods.

\textbf{Contributions}.
\begin{enumerate}[label=(\roman*)]
    \item We propose a model-lean covariate-aware OT method for solving PI sets in causal inference with an efficient computation algorithm (Algorithm~\ref{alg:empirical_estimator}) and a rigorous statistical guarantee (Theorem~\ref{theo:estimation.rate}).

    \item We establish an interpolation between OT and COT (Proposition~\ref{prop:interpolation}), where COT is more challenging and less explored, enabling the study of both the statistical and computational properties of COT using well-developed tools from OT.
\end{enumerate}





\textbf{Organization}. Section~\ref{sec:prelim} provides preliminaries on standard OT and COT with their connection to PI. Section~\ref{sec:mirror} introduces our proposed method with population level analysis. Section~\ref{sec:finite-sample} presents a data-driven estimator and the finite-sample analysis. Section~\ref{sec:exp} presents the experiments on synthetic and real data. Section~\ref{sec:discussion} collects some final remarks and future research directions. Discussion on related work is deferred to the Appendix~\ref{sec:relatedwork}; Notations are collected in the Appendix~\ref{app:notation_table}; All the proofs are collected in the Appendix~\ref{app:proofs}.

\section{Preliminaries}\label{sec:prelim}
\subsection{Unconditional Optimal Transport}
Consider a unit associated with two potential outcomes \(Y(0)\) and \(Y(1)\). The causal estimand of interest is \(V^* := \mathbb{E}_{\pi^*}[h(Y(0), Y(1))]\) for some function \(h: \calY \times \calY \rightarrow \RR\), where $\pi^*$ denotes the joint distribution of \((Y(0), Y(1))\), $\calY \subseteq \RR^{d_{Y}}$.
Let $W$ denote the treatment indicator. If the unit receives the treatment $W = 1$, the observed outcome is \(Y = Y(1)\); otherwise, \(Y = Y(0)\).
A fundamental limitation in causal inference is that \(Y(0)\) and \(Y(1)\) are never observed simultaneously for the same unit, making the joint distribution $\pi^*$ of \((Y(0), Y(1))\) unidentifiable. Consequently, \(V^*\) is typically not identifiable, and we instead aim to obtain a partial identification set for \(V^*\).


The set of joint distributions consistent with the observable marginal distributions is
\begin{align*}
     \Pi_{\text{u}} = \left\{\pi \in \mathcal{P}(\mathcal{Y}^2): \pi_{Y(0)} = P_{Y(0)}, \pi_{Y(1)} = P_{Y(1)}\right\}, 
\end{align*}
where $\pi_{Y(0)}$ denotes the marginal distributions of $Y(0)$ under $\pi$, and similarly for $\pi_{Y(1)}$; $P_{Y(0)}, P_{Y(1)}$ are the marginal distribution of $Y(0), Y(1)$. Thus, the partial identification set of $V^*$ takes the form $\{\mathbb{E}_\pi[h(Y(0), Y(1))]: \pi \in \Pi_{\text{u}}\}$, which is convex and thus an interval. 
The lower boundary\footnote{The upper boundary can be obtained using $-h(Y(0), Y(1))$.} of the partial identification interval is the optimal objective value of the following optimization problem
\begin{align}\label{eq:Vu}
    (V_{\text{u}}):\quad \min_{\pi \in \Pi_{\text{u}}} \mathbb{E}_\pi[h(Y(0), Y(1))].
\end{align}

This formulation establishes a natural connection between the causal partial identification and the OT problem \cite{villani2009optimal}, where we want to couple the measures of $P_{Y(0)}$ and $P_{Y(1)}$ with the minimal cost measured by the function $h$. More motivating examples can be found in \cite{ji2023model, gao2024bridging}, which propose to obtain the partial identification sets by solving the related (multi-marginal) OT problems.



\subsection{Conditional Optimal Transport}
Suppose each unit is also associated with a set of covariates, denoted by \( Z \in \calZ \subseteq \RR^{d_Z} \), such as height or age. The covariate information can be leveraged to shorten the partial identification interval.
In a nutshell, the conditional marginal distribution of \(Y(0)\mid Z = z\) is identifiable under the unconfoundedness assumption, which imposes a condition on the joint distribution $\pi$ more stringent than only preserving the unconditional distribution $P_{Y(0)}$; similarly for $Y(1)$.
Consequently, the class of consistent joint distributions shrinks, leading to a shorter partial identification interval.
Mathematically, we define the set of joint distributions consistent with the observable conditional distributions of \( Y(0) \mid Z \), \( Y(1) \mid Z \), and $Z$ as:
\begin{align*}
     \Pi_{\text{c}} =& \left\{ \pi \in \mathcal{P}(\calY^2 \times \calZ): \pi_{Y(0), Z} = P_{Y(0), Z}, \right. \\
     & \quad \left. \pi_{Y(1), Z} = P_{Y(1), Z}\right\},
\end{align*}
where $P_{Y(0), Z}$ denotes the marginal distribution of $(Y(0), Z)$; similarly for $(Y(1), Z)$.
The partial identification interval \(\{\mathbb{E}_\pi[h(Y(0), Y(1))]: \pi \in \Pi_{\text{c}}\}\) is convex and thus again an interval, and the lower boundary of this interval is the optimal objective value of the COT problem:
\begin{align}\label{eq:Vc}
    (V_{\text{c}}): \quad \min_{\pi \in \Pi_{\text{c}}} \mathbb{E}_\pi[h(Y(0), Y(1))].
\end{align}
The subscript ``\( \textup{c} \)'' in $V_{\text{c}}$ stands for ``conditional'', and the subscript ``\( \textup{u} \)'' in \( V_{\text{u}} \) from \Cref{eq:Vu} stands for ``unconditional''.
In Figure 1, we illustrate that when the conditional distribution \( Y(\cdot) \mid Z = z \) varies with the value of \( z \), the optimal joint distribution of the the unconditional OT problem and that of the COT differ significantly, and the partial identification interval under COT is notably shorter as a result. Despite these benefits, the estimation and computation of COT are challenging, as discussed in Section~\ref{sec:intro}.

\section{Mirror Relaxation of COT}\label{sec:mirror}

In this section, we introduce our relaxation of COT using mirror covariates, namely, $V_{\ip}(\eta)$ as an alternative PI lower bound. We additionally present the interpolation of $V_{\ip}(\eta)$ between the PI lower bounds $V_{\uu}$ and $V_{\cc}$. The subscript ``$\ip$'' in $V_{\ip}(\eta)$ stands for ``interpolation''.

\subsection{Assumptions}
\begin{assumption}[Completely randomized design]\label{a:randomizedW}
    For each unit $i \in \calI$, $(Y_i(0), Y_i(1), Z_i)$ i.i.d.~follows $P_{Y(0), Y(1), Z}$. Suppose $m$ units are assigned to the treatment group ($W = 1$), and $n$ units are assigned to the control group ($W = 0$). 
\end{assumption}

\begin{assumption}[Compact support and density]\label{a:basic}
    Assume that $\calY, \calZ$ are convex and compact, and $P_{Y(0), Z}$, $P_{Y(1), Z}$ both admit densities with respect to the Lebesgue measure on $\RR^{d_{Y} + d_{Z}}$.
\end{assumption}
\begin{assumption}[Smooth cost]\label{a:cost}
    Assume that the cost function $h \in C^1(\RR^{2d_Y})$, and $\nabla_y h(y, \cdot)$ is injective for $\forall y \in \calY$.
\end{assumption}

Assumption~\ref{a:basic} is introduced to facilitate theoretical analysis while covering a wide range of practical scenarios. Assumption~\ref{a:cost} ensures the uniqueness of the optimal transport maps, which serves as the foundation for defining our estimand, and is satisfied by, e.g., quadratic costs.

\subsection{Definition of $V_{\ip}(\eta)$} 
\begin{definition}[Mirror relaxation $V_{\ip}(\eta)$]\label{defi:Vip}
    For $\eta \in [0, \infty)$, let $\pi_{\ip}^{\star}(\eta) =$
    \begin{align*}
    \argmin_{\pi \in \Pi_{\text{ip}}} \EE_{\pi}\left[h(Y(0), Y(1)) + \eta \lnorm{Z(0) - Z(1)}{2}^2\right],    
    \end{align*}
    where the set $\Pi_{\text{ip}}$ of joint distributions consistent with marginals of outcomes and associated mirror covariates is defined as
    \begin{align*}
        \Pi_{\text{ip}} =& \left\{\pi \in \mathcal{P}(\calY^2 \times \calZ^2): \pi_{Y(0), Z(0)} = P_{Y(0), Z}, \right.\\
        &\quad \left.\pi_{Y(1), Z(1)} = P_{Y(1), Z}\right\}.
    \end{align*}
    Then the interpolating OT lower bound $V_{\ip}(\eta)$ is defined as 
    \[
        V_{\ip}(\eta) = \EE_{\pi_{\ip}^{\star}(\eta)}[h(Y(0), Y(1))].
    \]
\end{definition}

\begin{table}[h!]
\centering
\renewcommand{\arraystretch}{1.8}
\begin{tabular}{|c|l|}
\hline

\textbf{Definition} & \textbf{Optimal coupling under constraints} \\ \hline
\(V_{\text{u}} = \iprod{\pi_{\uu}^\star}{h}\) &
\(\displaystyle \pi_{\uu}^\star = \argmin_{\pi \in \Pi_{\text{u}}} \EE_{\pi}[h(Y(0), Y(1))]\)\\ 
\hline
\(V_\text{c} = \iprod{\pi_{\cc}^{\star}}{h}\) &
\(\displaystyle \pi^{\star}_{\cc} \in \argmin_{\pi \in \Pi_{\text{c}}} \EE_{\pi}[h(Y(0), Y(1))]\)\\ 
\hline
\(V_{\text{ip}}(\eta) = \) &  \(\displaystyle \pi_{\ip}^{\star}(\eta) = \argmin_{\pi \in \Pi_{\text{ip}}} \EE_{\pi}\left[h(Y(0), Y(1)) \right.\)\\
\(\displaystyle \langle{\pi_{\ip}^{\star}(\eta)},{h}\rangle\) & \(\qquad\quad\quad + \eta \lnorm{Z(0) - Z(1)}{2}^2 \big]\)\\ 
\hline
\end{tabular}
\caption{Definitions of partial identification OT lower bounds.}
\label{tab:definitions}
\end{table}

The following proposition shows that $V_{\ip}(\eta)$ is well-defined.
\begin{proposition}[Existence and uniqueness of $\pi^\star$]\label{prop:existuni}
    Under Assp.~\ref{a:basic},~\ref{a:cost}, $\pi_{\ip}^{\star}(\eta)$ exists and is uniquely determined.
\end{proposition}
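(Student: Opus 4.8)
The plan is to prove existence by the direct method in the calculus of variations and uniqueness by exploiting the strict structure induced by Assumptions~\ref{a:basic} and~\ref{a:cost}, essentially via a Knothe--Rosenblatt-type / block-triangular argument in the spirit of \parencite{carlier2010knothe}.

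\emph{Existence.} First I would observe that $\Pi_{\text{ip}}$ is nonempty (the independent coupling of $P_{Y(0),Z}$ and $P_{Y(1),Z}$ lies in it) and is a weakly compact subset of $\mathcal{P}(\calY^2 \times \calZ^2)$: since $\calY$ and $\calZ$ are compact (Assumption~\ref{a:basic}), the ambient space is compact, so $\mathcal{P}(\calY^2\times\calZ^2)$ is weakly compact, and $\Pi_{\text{ip}}$ is closed under weak convergence because the marginal constraints $\pi_{Y(0),Z(0)} = P_{Y(0),Z}$ and $\pi_{Y(1),Z(1)} = P_{Y(1),Z}$ are preserved under weak limits (marginal projection is weakly continuous). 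Next, the objective $\pi \mapsto \EE_\pi[h(Y(0),Y(1)) + \eta\|Z(0)-Z(1)\|_2^2]$ is the integral of a continuous, bounded function on the compact support (here $h \in C^1$ by Assumption~\ref{a:cost}, hence continuous, hence bounded on $\calY^2$; the penalty term is continuous and bounded on $\calZ^2$), so it is weakly continuous in $\pi$. A continuous function on a nonempty weakly compact set attains its minimum, giving existence of a minimizer $\pi_{\ip}^\star(\eta)$.

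\emph{Uniqueness.} This is the main obstacle and requires using the regularity of the marginals and the injectivity in Assumption~\ref{a:cost}. The key point is that the problem is, after including the mirror covariates, a genuine (unconditional) two-marginal OT problem on $\calY\times\calZ$ with source $P_{Y(0),Z}$, target $P_{Y(1),Z}$, and cost $c\big((y_0,z_0),(y_1,z_1)\big) = h(y_0,y_1) + \eta\|z_0-z_1\|_2^2$. The plan is: (i) suppose $\pi_1,\pi_2$ are two minimizers; by convexity of $\Pi_{\text{ip}}$ and linearity of the objective, $\bar\pi := (\pi_1+\pi_2)/2$ is also a minimizer. (ii) Use a Twist/generalized-Spence--Mirrlees argument: by Assumption~\ref{a:cost}, $\nabla_{y_0} h(y_0,\cdot)$ is injective, and $\nabla_{z_0}(\eta\|z_0-z_1\|_2^2) = 2\eta(z_0-z_1)$ is injective in $z_1$ for $\eta>0$; hence $\nabla_{(y_0,z_0)} c$ is injective in $(y_1,z_1)$ (for $\eta=0$ one handles the $z$-block separately, since the optimal coupling must then still match $z_0$ and $z_1$ off a set determined by the $y$-marginal structure — I would treat $\eta=0$ as a limiting/degenerate case or note that the penalty-free problem still forces $z_0 = z_1$ a.s.\ under the marginal constraints only if the $Z$-marginals agree; more carefully, for $\eta = 0$ uniqueness of the $Y$-part follows from the twist condition on $h$ applied fiberwise over $Z$). (iii) Since $P_{Y(0),Z}$ is absolutely continuous w.r.t.\ Lebesgue on $\RR^{d_Y+d_Z}$ (Assumption~\ref{a:basic}), the standard Gangbo--McCann / Levin theory under the twist condition gives that any optimal $\pi \in \Pi_{\text{ip}}$ is induced by a deterministic map $T$ from the source; applying this to $\bar\pi$, $\pi_1$, $\pi_2$ and using that a convex combination of distinct maps cannot be a map a.e.\ (it would be supported on the union of two distinct graphs, contradicting being a graph) forces $\pi_1 = \pi_2$.

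\emph{Remarks on execution.} I expect the bulk of the technical care to go into verifying the twist (generalized Spence--Mirrlees) condition for the composite cost $c$ and the case $\eta = 0$, and into citing the correct form of the Gangbo--McCann existence-of-Monge-map theorem for cost functions that are only $C^1$ (rather than $C^2$) with an injective gradient — this is exactly the setting covered by Levin's theorem and by results of Gangbo--McCann / R\"uschendorf, so I would cite one of these. The measurability and weak-continuity steps in the existence part are routine given compactness. A clean alternative for uniqueness, which I would mention as a fallback, is to note that the objective restricted to $\Pi_{\text{ip}}$ can be made \emph{strictly} convex after the map representation (the set of transport maps is the relevant domain and the induced functional is strictly convex there under Assumption~\ref{a:cost}), but the graph-union argument above is the most transparent.
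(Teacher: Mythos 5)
Your approach matches the paper's essentially step for step. For existence, the paper invokes Theorem~4.1 of \parencite{villani2009optimal} (tightness and closedness of the constraint set) together with boundedness and continuity of the cost on the compact domain; that is the same direct-method argument you outline, and your observation that $\Pi_{\text{ip}}$ is nonempty via the independent coupling is the standard first step. For uniqueness, the paper applies Theorem~10.28 of \parencite{villani2009optimal}, whose hypotheses are exactly the twist (injective-gradient) condition you identify plus absolute continuity of the source marginal $P_{Y(0),Z}$; your unpacking via convex combinations of minimizers, Monge-map induction under the twist condition (Gangbo--McCann/Levin), and the graph-union contradiction is one standard way to \emph{prove} that theorem, so the substance is identical, merely unrolled rather than cited as a black box.

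One place where you add genuine value is the $\eta = 0$ degeneracy. You correctly note that the $z$-component of the gradient, $2\eta(\tilde z - z)$, vanishes identically at $\eta = 0$, so the twist condition fails there; in fact for $\eta = 0$ the cost does not depend on $(Z(0),Z(1))$ at all, so only the $(Y(0),Y(1))$-marginal of the optimal coupling is pinned down and the full coupling on $\calY^2 \times \calZ^2$ is not unique in general. The paper's proof silently requires $\eta > 0$ when it asserts injectivity of $(\nabla_y h(y,\tilde y),\, 2\eta(\tilde z - z))$, so this is a real gap in the proposition as stated (which allows $\eta \in [0,\infty)$). Your proposed patch, however, only recovers uniqueness of the $(Y(0),Y(1))$-marginal, which is strictly weaker than the claim; the cleaner fix is to restrict the uniqueness assertion to $\eta > 0$ and note that $V_{\ip}(0)$ is still well defined because the objective value is determined by that marginal alone (cf.\ Proposition~\ref{prop:interpolation}(ii)).
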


We remark that $\pi_{\uu}^\star, \pi_{\cc}^\star$ exist as well, and $\pi_{\uu}^\star$ is also uniquely determined. In Table~\ref{tab:definitions}, we summarize the three PI lower bounds for comparison, where we denote $\iprod{\pi}{h}:= \EE_{\pi}[h]$.

\subsection{Interpolation Between PI Lower Bounds}
The following result shows that $V_{\ip}(\eta)$ interpolates the lower bounds $V_{\uu}$ and $V_{\cc}$ as $\eta$ ranges from $0$ to $\infty$.
\begin{proposition}[Interpolation between lower bounds]\label{prop:interpolation}
    Under Assp.~\ref{a:basic},~\ref{a:cost}, for $\eta \geq 0$, we have
    
    (i). (Monotonicity) $V_{\uu}\leq V_{\ip}(\eta) \leq V_{\cc}$, and  $V_{\ip}(\eta)$ is non-decreasing and continuous with respect to $\eta$.
    
    (ii). (Interpolation) $V_{\ip}(0) = V_{\uu},~\lim_{\eta \rightarrow \infty} V_{\ip}(\eta) = V_{\cc}$.
\end{proposition}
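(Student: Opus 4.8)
The plan is to treat the three problems $(V_{\uu})$, $(V_{\cc})$, $(V_{\ip}(\eta))$ uniformly as optimal transport problems and exploit a pushforward correspondence between $\Pi_{\text{c}}$ and $\Pi_{\text{ip}}$. The key structural observation is that any $\pi \in \Pi_{\text{c}}$ (a coupling on $\calY^2 \times \calZ$ with a \emph{single} shared covariate) lifts canonically to an element of $\Pi_{\text{ip}}$ by duplicating the covariate, i.e.\ under the map $(y_0, y_1, z) \mapsto (y_0, y_1, z, z)$; conversely any $\pi \in \Pi_{\text{ip}}$ that is supported on the diagonal $\{Z(0) = Z(1)\}$ descends to an element of $\Pi_{\text{c}}$. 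On this diagonal subset the penalty $\eta \lnorm{Z(0)-Z(1)}{2}^2$ vanishes, so the $\Pi_{\text{ip}}$-objective restricted to (lifts of) $\Pi_{\text{c}}$ coincides with the $\Pi_{\text{c}}$-objective. Meanwhile, forgetting the covariates entirely gives a map $\Pi_{\text{ip}} \to \Pi_{\text{u}}$ (marginalize out $Z(0), Z(1)$), and $\Pi_{\text{c}} \to \Pi_{\text{u}}$ likewise, so every $\Pi_{\text{ip}}$ coupling is a feasible $\Pi_{\text{u}}$ coupling after projection. These three observations drive everything.

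For part (i), \emph{monotonicity of the bound}: the inequality $V_{\uu} \le V_{\ip}(\eta)$ follows because projecting $\pi_{\ip}^\star(\eta)$ onto its $(Y(0), Y(1))$-marginal yields a feasible point for $(V_{\uu})$ with the same value of $\iprod{\pi}{h}$, while $h + \eta\lnorm{Z(0)-Z(1)}{2}^2 \ge h$ pointwise shows $V_{\uu} \le \EE_{\pi_{\uu}^\star\text{-lift}}[h + \eta\|\cdot\|^2]$ is not quite the argument — more carefully, I would argue $V_{\ip}(\eta) = \EE_{\pi_{\ip}^\star(\eta)}[h] \ge \min_{\pi \in \Pi_{\uu}} \EE_\pi[h] = V_{\uu}$ since the $(Y(0),Y(1))$-projection of $\pi_{\ip}^\star(\eta)$ lies in $\Pi_{\uu}$. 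For $V_{\ip}(\eta) \le V_{\cc}$: take $\pi_{\cc}^\star$, lift it to $\tilde\pi \in \Pi_{\text{ip}}$ by covariate duplication; then $\EE_{\tilde\pi}[h + \eta\lnorm{Z(0)-Z(1)}{2}^2] = \EE_{\pi_{\cc}^\star}[h] = V_{\cc}$, and by a \emph{sandwich/comparison} argument on the optimal value of the penalized problem one shows $\EE_{\pi_{\ip}^\star(\eta)}[h] \le \EE_{\tilde\pi}[h] = V_{\cc}$ — this needs a short argument that at the optimum the unpenalized part cannot exceed $V_{\cc}$, which follows because the penalty is nonnegative so the penalized optimum is $\le V_{\cc}$, hence $\EE_{\pi_{\ip}^\star(\eta)}[h] \le \EE_{\pi_{\ip}^\star(\eta)}[h + \eta\|\cdot\|^2] \le V_{\cc}$. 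For \emph{monotonicity in $\eta$}: a standard interchange argument for penalized optimization — if $\eta_1 < \eta_2$, write out the two optimality inequalities for $\pi_{\ip}^\star(\eta_1)$ and $\pi_{\ip}^\star(\eta_2)$, add them, and conclude that the penalty term $\EE[\lnorm{Z(0)-Z(1)}{2}^2]$ is nonincreasing in $\eta$ and that $V_{\ip}(\eta) = \EE[h]$ is nondecreasing. For \emph{continuity}: I expect to use that $\eta \mapsto$ (optimal value of penalized problem) is concave (infimum of affine functions of $\eta$) hence continuous on $(0,\infty)$, plus monotonicity and boundedness to get continuity at $0$; combined with monotonicity of $V_{\ip}(\eta)$ itself and the envelope/monotonicity structure, continuity of $V_{\ip}(\eta)$ follows, possibly invoking Proposition \ref{prop:existuni} (uniqueness of $\pi_{\ip}^\star(\eta)$) to upgrade to continuity via a subsequential-limit argument with weak convergence of couplings on the compact space $\calY^2 \times \calZ^2$.

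For part (ii): $V_{\ip}(0) = V_{\uu}$ is immediate, since at $\eta = 0$ the covariates $Z(0), Z(1)$ are wholly unconstrained relative to $(Y(0),Y(1))$ and the objective ignores them, so the problem reduces (after projection) exactly to $(V_{\uu})$; the only subtlety is matching optima, handled by noting any $\pi_{\uu}^\star$ can be extended to $\Pi_{\text{ip}}$ with arbitrary (e.g.\ duplicated) covariate coupling. The limit $\lim_{\eta \to \infty} V_{\ip}(\eta) = V_{\cc}$ is the substantive claim and \textbf{the main obstacle}. Here I would argue: by part (i) the limit $L := \lim_{\eta\to\infty} V_{\ip}(\eta)$ exists and $L \le V_{\cc}$. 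For the reverse, take any sequence $\eta_k \to \infty$ and the corresponding optimizers $\pi_k := \pi_{\ip}^\star(\eta_k)$; since $\calY^2 \times \calZ^2$ is compact (Assumption \ref{a:basic}), $\{\pi_k\}$ is tight, so along a subsequence $\pi_k \rightharpoonup \pi_\infty$ weakly. The penalized optimal value is bounded (e.g.\ by $V_{\cc}$ via the lift of $\pi_{\cc}^\star$), so $\eta_k \EE_{\pi_k}[\lnorm{Z(0)-Z(1)}{2}^2] \le V_{\cc} - \min h$ stays bounded, forcing $\EE_{\pi_k}[\lnorm{Z(0)-Z(1)}{2}^2] \to 0$; by lower semicontinuity of $\pi \mapsto \EE_\pi[\lnorm{Z(0)-Z(1)}{2}^2]$ under weak convergence (the integrand is continuous and nonnegative) we get $\EE_{\pi_\infty}[\lnorm{Z(0)-Z(1)}{2}^2] = 0$, i.e.\ $\pi_\infty$ is supported on the diagonal $\{Z(0) = Z(1)\}$ and its descent lies in $\Pi_{\text{c}}$ (the marginal constraints pass to the limit because $\Pi_{\text{ip}}$ is weakly closed). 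Hence $\EE_{\pi_\infty}[h] \ge V_{\cc}$. Finally, since $h$ is continuous and bounded on the compact $\calY^2$, $\EE_{\pi_k}[h] \to \EE_{\pi_\infty}[h]$, so $L = \lim_k \EE_{\pi_k}[h] = \EE_{\pi_\infty}[h] \ge V_{\cc}$, giving $L = V_{\cc}$. The delicate points to get right are: (a) that $\Pi_{\text{c}}$ is exactly the image under covariate-duplication of diagonal-supported elements of $\Pi_{\text{ip}}$ — here one must check the marginal conditions $\pi_{Y(0),Z} = P_{Y(0),Z}$ transfer correctly, which is a measure-theoretic disintegration check; and (b) uniform integrability / boundedness of $h$ to pass $\EE_{\pi_k}[h] \to \EE_{\pi_\infty}[h]$, which is free from compactness and continuity of $h$ (Assumption \ref{a:cost} gives $h \in C^1$, hence continuous, hence bounded on the compact domain).
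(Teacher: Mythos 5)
Your proposal follows essentially the same route as the paper's proof: set-inclusion arguments for $V_{\uu}\le V_{\ip}(\eta)\le V_{\cc}$, the standard exchange/interchange inequality for monotonicity of $V_{\ip}$ in $\eta$, extension of $\pi_{\uu}^\star$ to an element of $\Pi_{\ip}$ to get $V_{\ip}(0)=V_{\uu}$, and compactness/Prokhorov plus lower semicontinuity of the penalty to identify the $\eta\to\infty$ limit with $V_{\cc}$. Two spots deserve a closer look.

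First, continuity. Your primary suggestion, that $\eta\mapsto v(\eta):=\min_{\pi\in\Pi_{\ip}}\EE_\pi\!\left[h+\eta\lnorm{Z(0)-Z(1)}{2}^2\right]$ is concave hence continuous, is correct for $v$, but it does not on its own yield continuity of $V_{\ip}(\eta)$. Writing $v(\eta)=V_{\ip}(\eta)+\eta\, g(\eta)$ with $g(\eta):=\EE_{\pi^\star_{\ip}(\eta)}\!\left[\lnorm{Z(0)-Z(1)}{2}^2\right]$, both $V_{\ip}$ (nondecreasing) and $g$ (nonincreasing) are monotone and could a priori have matched jumps $\delta=\eta_0\varepsilon$ at some $\eta_0$ that cancel in $v$; concavity of $v$ does not rule this out. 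What actually closes the argument is the hedged alternative you append: for a sequence $\eta_k\to\eta_0$, take a weak subsequential limit $\pi_\infty$ of $\pi^\star_{\ip}(\eta_k)$ on the compact $\calY^2\times\calZ^2$, pass to the limit in the optimality inequality, and invoke the \emph{uniqueness} of $\pi^\star_{\ip}(\eta_0)$ from Proposition~\ref{prop:existuni} to conclude $\pi_\infty=\pi^\star_{\ip}(\eta_0)$, hence $V_{\ip}(\eta_k)\to V_{\ip}(\eta_0)$. That is precisely the paper's argument (done separately for left and right limits), and it is the one you should promote from ``possibly'' to the actual proof; the concavity observation is not needed.

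Second, a small but genuine slip in the $V_{\ip}(0)=V_{\uu}$ step: you say $\pi_{\uu}^\star$ can be extended to $\Pi_{\ip}$ ``with arbitrary (e.g.\ duplicated) covariate coupling.'' The duplicated choice $Z(0)=Z(1)$ is generally infeasible: given $\pi_{Y(0),Y(1)}=\pi_{\uu}^\star$, a common $Z$ satisfying both $\pi_{Y(0),Z}=P_{Y(0),Z}$ and $\pi_{Y(1),Z}=P_{Y(1),Z}$ need not exist. The correct construction is the gluing lemma, as the paper does: glue $\pi_{\uu}^\star$ with $P_{Y(0),Z}$ along $Y(0)$ and with $P_{Y(1),Z}$ along $Y(1)$, yielding an element of $\Pi_{\ip}$ with the right marginals and with $(Y(0),Y(1))$-marginal $\pi_{\uu}^\star$. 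The overall idea (``extend $\pi_{\uu}^\star$'') is correct; only the parenthetical example is not.
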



\begin{remark}[Implications in PI]
    For partial identification in causal inference, the interpolation of \( V_{\ip}(\eta) \) between \( V_{\uu} \) and \( V_{\cc} \) demonstrates that \( V_{\ip}(\eta) \) strictly improves upon the lower bound \( V_{\uu} \), which disregards the covariate information. 
    As \( \eta \) increases, the improvement becomes more pronounced, with \( V_{\ip}(\eta) \) converging to \( V_{\cc} \)---the sharpest lower bound achievable leveraging the covariates.
\end{remark}

When the compactness assumption (Assp.~\ref{a:basic}) is not satisfied, the interpolation can still hold as shown in the next example, where the lower bounds possess closed-form expressions.
\begin{example}[Gaussian linear model]\label{exp:gaussian}
    Consider the data generating mechanism \(Y(0) = \beta_0 Z + \sigma_0 \varepsilon_0,~ Y(1) = \beta_1 Z + \sigma_1 \varepsilon_1,\)
    where $Z, \varepsilon_0, \varepsilon_1$ are i.i.d. $\calN(0,1)$. Let $h(y_0, y_1) = (y_0 + y_1)^2$. Applying the formula of 2-Wasserstein distance between Gaussian random variables (see e.g. \cite{dowson1982frechet}),
    we get the following results (the proof can be found in the Appendix~\ref{sec:computeExp1}): 
    \begin{align*}
        & V_{\uu} = \left(\sqrt{\beta_0^2 + \sigma_0^2} - \sqrt{\beta_1^2 + \sigma_1^2}\right)^2,\\
        & V_{\cc} = \left(\beta_0 + \beta_1\right)^2 + (\sigma_0 - \sigma_1)^2,\\
        & V_{\ip}(\eta) = (\beta_0^2 + \beta_1^2 + \sigma_0^2 + \sigma_1^2) \\
        & - 2 \frac{(\beta_0^2 + \sigma_0^2)(\beta_1^2 + \sigma_1^2) - \eta \beta_0 \beta_1 + \eta \sigma_0 \sigma_1}{\left((\beta_0^2 + \sigma_0^2)(\beta_1^2 + \sigma_1^2) - 2\eta \beta_0 \beta_1  + 2\eta\sigma_0\sigma_1 + \eta^2\right)^{\frac{1}{2}}}.
    \end{align*}
    A straightforward analysis shows that the properties in Proposition~\ref{prop:interpolation} hold in this case. Further, in this case we have $V_{\cc} - V_{\ip}(\eta) = \frac{2(\beta_0\sigma_1 + \beta_1\sigma_0)^2}{\eta} + O(\frac{1}{\eta^2})$ as $\eta \rightarrow \infty$, as illustrated in Fig.~\ref{fig:trueVc}.
\end{example}

\begin{figure}[ht]
\begin{center}
\centerline{\includegraphics[width=0.9\columnwidth]{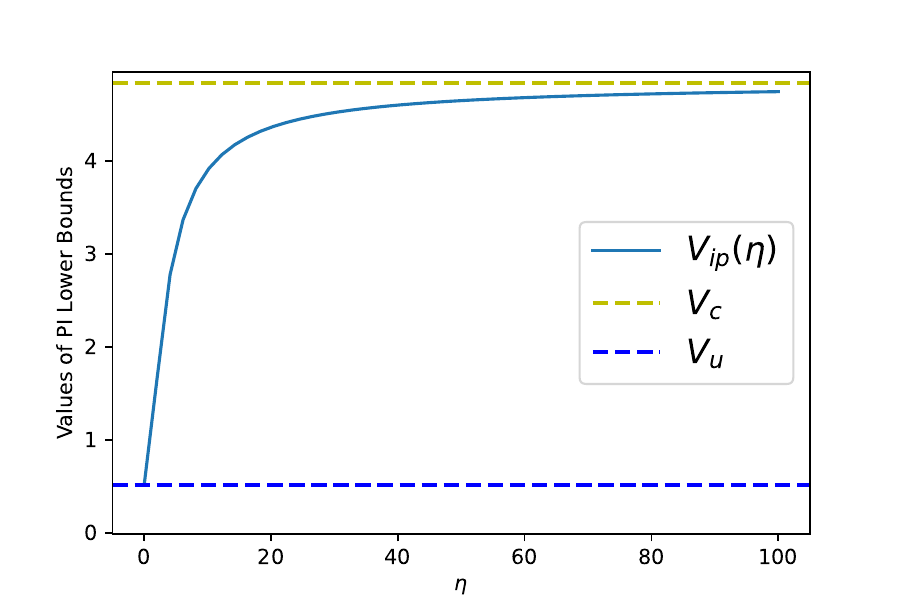}}
\caption{Comparison of $V_{\ip}(\eta), V_{\cc}$ and $V_{\uu}$ for $h(y_0, y_1) = (y_0 + y_1)^2$. The model is $Y(0) = 0.8 Z + \varepsilon_0, Y(1) = 1.6 Z + \varepsilon_1$, $Z, \varepsilon_0, \varepsilon_1$ are i.i.d.~$\sim\calN(0,1)$.}
\label{fig:trueVc}
\end{center}
\end{figure}

\section{Finite-Sample Analysis}\label{sec:finite-sample}
In this section, we introduce our empirical estimator of $V_{\ip}(\eta)$ based on a finite sample and discuss the statistical convergence rate of this estimator. 

\subsection{Data-Driven plug-in Estimator}
Given a finite sample $((Y_i, Z_i, W_i), i \in \calI)$ where there are $n$ units with $W_i = 0$ and $m$ units with $W_i = 1$, we denote the empirical marginal distributions as
\begin{subequations}
    \begin{align}
    P_{n, Y(0), Z} &=  \frac{1}{n} \sum_{i\in \calI} \delta_{Y_i, Z_i} \mathbf{1}(W_i = 0),\label{eq:Pnyz}\\
    P_{m, Y(1), Z} &= \frac{1}{m}\sum_{i \in \calI} \delta_{Y_i, Z_i} \mathbf{1}(W_i = 1).\label{eq:Pmyz}
\end{align}
\end{subequations}

Our empirical estimator $V_{\ip, n, m}(\eta)$ is then based on the plug-in of $P_{n, Y(0), Z}$ (resp.~$P_{m, Y(1), Z} $) to replace $P_{Y(0), Z}$ (resp.~$P_{Y(1), Z} $) in the definition of $V_{\ip}(\eta)$. We demonstrate the estimator in Algorithm~\ref{alg:empirical_estimator}. Note that \eqref{eq:LP} may not have a unique optimal solution, while our analysis in this section is valid for any optimal solution $\pi_{\ip, n, m}^{\star}$. The next proposition states that the plug-in estimator is consistent. 

\begin{proposition}[Consistency of the plug-in estimator]\label{prop:consistency}
    Under Assp.~\ref{a:randomizedW}-\ref{a:cost}, $V_{\ip, n, m}(\eta)$ is a consistent estimator of $V_{\ip}(\eta)$ as $n,m \rightarrow \infty$.
\end{proposition}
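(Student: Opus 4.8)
The plan is to run the classical stability (or $\Gamma$-convergence) argument for optimal transport, adapted to the fact that the target $V_{\ip}(\eta)$ is the value of $h$ \emph{under the $c_\eta$-optimal coupling} --- where $c_\eta\big((y_0,z_0),(y_1,z_1)\big) = h(y_0,y_1) + \eta\lnorm{z_0-z_1}{2}^2$ --- rather than the optimal transport cost itself; consequently convergence must be established at the level of optimal \emph{couplings}, not merely optimal \emph{values}. Fix $\eta \in [0,\infty)$ and abbreviate $\mu = P_{Y(0),Z}$, $\nu = P_{Y(1),Z}$, $\mu_n = P_{n,Y(0),Z}$, $\nu_m = P_{m,Y(1),Z}$, and let $T_\eta(\cdot,\cdot)$ denote the optimal transport cost with ground cost $c_\eta$ between measures on $\calY\times\calZ$. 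First I would record that, under Assumption~\ref{a:randomizedW}, conditioning on the assignment shows the control pairs $\{(Y_i,Z_i): W_i = 0\}$ are $n$ i.i.d.\ draws from $\mu$ and the treated pairs are $m$ i.i.d.\ draws from $\nu$ (and the two groups are independent); since $\calY\times\calZ$ is compact (Assumption~\ref{a:basic}), the law of large numbers for empirical measures gives $\mu_n \Rightarrow \mu$ and $\nu_m \Rightarrow \nu$ almost surely as $n,m\to\infty$ (equivalently $\Wass_1(\mu_n,\mu), \Wass_1(\nu_m,\nu) \to 0$ a.s.). The rest of the argument is deterministic on the almost sure event that these convergences hold.

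The core step is to show $\pi^{\star}_{\ip,n,m} \Rightarrow \pi_{\ip}^{\star}(\eta)$, where $\pi^{\star}_{\ip,n,m}$ is any optimal solution of the empirical linear program \eqref{eq:LP} --- hence, by construction, $c_\eta$-optimal between $\mu_n$ and $\nu_m$. Since $c_\eta$ is continuous and bounded on the compact set $(\calY\times\calZ)^2$, the stability of optimal transport \parencite{villani2009optimal} yields $T_\eta(\mu_n,\nu_m) \to T_\eta(\mu,\nu)$. The family $\{\pi^{\star}_{\ip,n,m}\}$ is tight (it is supported on a fixed compact set), so by Prokhorov's theorem every subsequence admits a further subsequence converging weakly to some $\pi_\infty \in \mathcal P\big((\calY\times\calZ)^2\big)$. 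Passing to the weak limit in the marginal constraints --- marginal projections being weakly continuous --- together with $\mu_n\Rightarrow\mu$, $\nu_m\Rightarrow\nu$, shows $\pi_\infty \in \Pi_{\ip}$; and since $c_\eta$ is bounded continuous, $\iprod{\pi_\infty}{c_\eta} = \lim \iprod{\pi^{\star}_{\ip,n,m}}{c_\eta} = \lim T_\eta(\mu_n,\nu_m) = T_\eta(\mu,\nu)$, so $\pi_\infty$ is $c_\eta$-optimal between $\mu$ and $\nu$. By Proposition~\ref{prop:existuni} that optimizer is unique, hence $\pi_\infty = \pi_{\ip}^{\star}(\eta)$. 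As every subsequential weak limit coincides with $\pi_{\ip}^{\star}(\eta)$, the whole sequence converges: $\pi^{\star}_{\ip,n,m} \Rightarrow \pi_{\ip}^{\star}(\eta)$. The limit does not depend on which empirical optimizer is selected, which is precisely why, as remarked before the statement, consistency holds for an arbitrary optimal solution of \eqref{eq:LP}.

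Finally, $h$ is continuous (Assumption~\ref{a:cost}) and therefore bounded on the compact set $\calY^2$, so weak convergence of couplings transfers to the objective: $V_{\ip,n,m}(\eta) = \iprod{\pi^{\star}_{\ip,n,m}}{h} \to \iprod{\pi_{\ip}^{\star}(\eta)}{h} = V_{\ip}(\eta)$ on the almost sure event above. Almost sure convergence implies convergence in probability, which is the asserted consistency; the case $\eta = 0$ is included and reduces to plug-in consistency for ordinary OT.

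I expect the main obstacle to be the core step, i.e.\ upgrading convergence of optimal transport \emph{values} to convergence of optimal \emph{couplings}: the uniqueness of the population optimizer provided by Proposition~\ref{prop:existuni} is essential, since without it one could only conclude that subsequential limits are \emph{some} optimal coupling, which would not pin down $\iprod{\pi^{\star}_{\ip,n,m}}{h}$. A secondary nuisance is that the empirical program \eqref{eq:LP} need not have a unique minimizer (and different minimizers may yield different values of $\iprod{\cdot}{h}$), but the compactness-plus-uniqueness argument forces the weak limit regardless of the selection, so the estimator is consistent whatever the solver returns.
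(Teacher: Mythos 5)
Your proof is correct and follows essentially the same route as the paper's: empirical measures converge weakly a.s., stability of optimal couplings plus uniqueness (Proposition~\ref{prop:existuni}) forces $\pi^{\star}_{\ip,n,m}(\eta)\Rightarrow\pi^{\star}_{\ip}(\eta)$, and boundedness/continuity of $h$ on the compact support transfers this to the value. The only difference is presentational: the paper cites Theorem~5.20 of \textcite{villani2009optimal} for the stability of optimal couplings, whereas you reprove that stability from scratch (tightness, Prokhorov, passage of marginals and of the objective to the limit) before invoking uniqueness and the subsubsequence principle.
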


\begin{algorithm}[tb]
   \caption{Derivation of Empirical Estimator $V_{\ip, n, m}(\eta)$}
   \label{alg:empirical_estimator}
\begin{algorithmic}
   \STATE {\bfseries Input:} sample $((Y_i, Z_i, W_i), i \in \calI)$, cost function $h$, parameter $\eta$
    \STATE Construct $P_{n, Y(0),Z}, P_{m, Y(1), Z}$ by \eqref{eq:Pnyz}, \eqref{eq:Pmyz}.
    \STATE Compute the cost matrix $H \in \RR^{n \times m}$, with 
    \begin{align*}
        H(j,k)=h(y_0^{(j)}, y_1^{(k)}) + \eta \lnorm{z_0^{(j)} - z_1^{(k)}}{2}^2,
    \end{align*}
    between the $n$ points (denoted as $(y^{(j)}_0, z^{(j)}_0), j \in [n]$) in the support of $P_{n, Y(0),Z}$ and the $m$ points (denoted as $(y^{(k)}_1, z^{(k)}_1), k\in [m]$) in the support of $P_{m, Y(1), Z}$.
    \STATE Solve the optimal transport problem: for $\mathbf{1} = (1,...,1)$,
    \begin{align}\label{eq:LP}
        \min_{\mathbf{1}^{\top}\pi = \frac{1}{m} \mathbf{1}^{\top}, \pi \mathbf{1} = \frac{1}{n} \mathbf{1}} \sum_{j,k} \pi(j,k) H(j,k),
    \end{align}
    and denote the optimal solution as $\pi_{\ip, n, m}^{\star}(\eta)$.
    \STATE Compute 
    \[
        V_{\ip, n, m}(\eta) = \sum_{j,k} \pi_{\ip, n, m}^{\star}(\eta)(j,k)~h(y_0^{(j)}, y_1^{(k)}).
    \]
    \STATE {\bfseries Output:} $V_{\ip, n, m}(\eta)$
\end{algorithmic}
\end{algorithm}



\subsection{Convergence rate}
Now, we present the result on the convergence rate of our plug-in estimator for $V_{\ip}(\eta)$. Our analysis relies on \textit{quadratic} cost functions so that the optimal transport map enjoys a convex characterization \cite{knott1984optimal, brenier1991polar}.  
\begin{proposition}[Brenier's theorem]
    Under Assp.~\ref{a:basic}, for the cost function $\lnorm{y-y'}{2}^2 + \lnorm{z - z'}{2}^2$, there exists a $P_{Y(0), Z}$-a.s.~unique OT map $T$ between $P_{Y(0), Z}$ and $P_{Y(1), Z}$ such that $T = \nabla \varphi$ for a convex function $\varphi$.
\end{proposition}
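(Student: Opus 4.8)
The plan is to recognize the statement as a direct instance of the Brenier--McCann theorem and to check that Assumption~\ref{a:basic} furnishes exactly its hypotheses. Writing $x=(y,z)$ and $x'=(y',z')$, observe that the cost $\lnorm{y-y'}{2}^2+\lnorm{z-z'}{2}^2$ is precisely $\lnorm{x-x'}{2}^2$, the squared Euclidean distance on the product space $\calY\times\calZ\subseteq\RR^{d_Y+d_Z}$, which by Assp.~\ref{a:basic} is convex and compact. Set $\mu=P_{Y(0),Z}$ and $\nu=P_{Y(1),Z}$; both are compactly supported and absolutely continuous with respect to Lebesgue measure. The claim is therefore the classical assertion that quadratic OT between such measures admits a $\mu$-a.s.\ unique optimal map of the form $\nabla\varphi$ with $\varphi$ convex.

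I would then proceed along the standard route. First, existence of an optimal coupling $\pi$ follows from Kantorovich's theorem: the cost is continuous and the admissible set of couplings is weakly compact since the marginals are compactly supported. Second, expanding $\lnorm{x-x'}{2}^2=\lnorm{x}{2}^2-2\langle x,x'\rangle+\lnorm{x'}{2}^2$ and using that $\int\lnorm{x}{2}^2\,d\mu$ and $\int\lnorm{x'}{2}^2\,d\nu$ are finite constants (again by compact support), minimizing $\int\lnorm{x-x'}{2}^2\,d\pi$ is equivalent to maximizing $\int\langle x,x'\rangle\,d\pi$. Third, Kantorovich duality for this inner-product cost produces a potential $\psi$ whose $c$-transform structure makes $\varphi(x):=\tfrac12\lnorm{x}{2}^2-\psi(x)$ convex (a supremum of affine functions of $x$), and the support of any optimal $\pi$ is $c$-cyclically monotone, which for the cost $-\langle x,x'\rangle$ says exactly that $\mathrm{supp}\,\pi$ lies in the graph of the subdifferential $\partial\varphi$. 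Since $\varphi$ is convex and finite on a convex neighborhood of $\calY\times\calZ$, it is differentiable Lebesgue-a.e.; absolute continuity of $\mu$ then gives $\partial\varphi(x)=\{\nabla\varphi(x)\}$ for $\mu$-a.e.\ $x$, so $\pi$ is concentrated on the graph of $\nabla\varphi$, i.e.\ $\pi=(\id,\nabla\varphi)_\#\mu$ and $T:=\nabla\varphi$ transports $\mu$ to $\nu$ optimally.

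For $\mu$-a.s.\ uniqueness, if $T_1=\nabla\varphi_1$ and $T_2=\nabla\varphi_2$ were two such optimal maps, then $\tfrac12[(\id,T_1)_\#\mu+(\id,T_2)_\#\mu]$ is again an optimal coupling, hence supported in a single subdifferential graph; two monotone maps whose graphs together remain monotone and which share the same source measure must coincide $\mu$-a.e., so $T_1=T_2$ $\mu$-a.s. (Equivalently, one invokes McCann's uniqueness of the Brenier map under absolute continuity of the source.) The only genuinely nontrivial ingredients are Rockafellar's identification of cyclically monotone sets with subdifferentials of convex functions and the a.e.\ differentiability of convex functions; everything else is bookkeeping. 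Since the result is entirely classical, the cleanest write-up verifies the hypotheses above and then cites \parencite{knott1984optimal, brenier1991polar} for existence and the convex-gradient form, together with McCann's refinement for the $\mu$-a.s.\ uniqueness.
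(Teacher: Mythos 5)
Your proposal is correct and takes essentially the same approach as the paper: the paper presents this proposition as a direct instance of the classical Brenier theorem (restated as Theorem B.1 in Appendix B.1) and cites \parencite{knott1984optimal, brenier1991polar}, with Assp.~\ref{a:basic} supplying absolute continuity and compact, convex support for both marginals; your write-up reconstructs the standard proof via cyclical monotonicity and Rockafellar's theorem before arriving at the same conclusion to verify hypotheses and cite, which is a reasonable elaboration but not a different route.
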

Here, $\varphi$ is the so-called Brenier's potential between $P_{Y(0), Z}$ and $P_{Y(1), Z}$. Also, the Brenier's theorem can be extended to general quadratic functions. Then, inspired by the proof strategy in \cite{manole2024plugin} that uses the smoothness and strong convexity of $\varphi$, we demonstrate the convergence rate of $V_{\ip, n, m}(\eta)$ in the case of general quadratic functions. 



\begin{theorem}[Finite-sample complexity]\label{theo:estimation.rate}
Suppose that $h(y_1, y_2) = \mathbf{y}^{\top}A \mathbf{y}$,  where $\mathbf{y} = (y_1, y_2)$ and $A = [[A_{11}, A_{12}]; [A_{12}^{\top}, A_{22}]]$, $A_{12} \in \RR^{d_Y\times d_Y}$ is non-singular. Under Assp.~\ref{a:randomizedW}, \ref{a:basic} (excluding the compactness assumption), further assume that the Brenier potential between $P_{A_{12}^{\top}Y(0), \eta Z}$ and $P_{Y(1), Z}$, denoted as $\tilde \varphi$, satisfies $\tilde \varphi \in C^2(\calY, \calZ)$ and $\frac{1}{\lambda}I_{d} \preceq \nabla^2 \tilde \varphi(x) \preceq \lambda I_{d}$ for $\lambda > 0$. Then, we have
    \[
        \EE\left[\left|V_{\ip, n, m}(\eta) - V_{\ip}(\eta)\right|\right] \leq C \lambda \eta \cdot \gamma_{N,d},
    \]
    \begin{align*}
    \text{where}\quad
        \gamma_{N,d}=
        \left\{
        \begin{array}{ll}
           N^{-\frac{1}{4}}  & d \leq 3 \\
           N^{-\frac{1}{4}} \sqrt{\log(N)} & d=4\\
           N^{-\frac{1}{d}} & d \geq 5 
        \end{array}
        \right.
        .
    \end{align*}
Here, $N = \min(n,m)$, $d = d_{Y} + d_{Z}$, and $C$ is a constant depending on $d, \lnorm{A_{12}}{\textup{op}}, \lnorm{A_{11}}{\textup{op}}, \lnorm{A_{22}}{\textup{op}}$ as well as the fourth moments of $Y(0), Y(1), Z$.
\end{theorem}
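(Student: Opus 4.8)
The plan is to reduce the problem to a plug-in estimation bound for a \emph{standard} quadratic-cost OT value between two measures on $\calY \times \calZ$, and then invoke a known plug-in rate for the Wasserstein-type functional under a smooth, strongly-convex Brenier potential. First I would rewrite the objective. Writing $\mathbf{y} = (y_0, y_1)$ and using $h(y_0,y_1) = \mathbf{y}^\top A \mathbf{y} = y_0^\top A_{11} y_0 + 2 y_0^\top A_{12} y_1 + y_1^\top A_{22} y_1$, observe that under any $\pi \in \Pi_{\ip}$ the marginals $\pi_{Y(0),Z(0)} = P_{Y(0),Z}$ and $\pi_{Y(1),Z(1)} = P_{Y(1),Z}$ are fixed, so the terms $\EE[y_0^\top A_{11} y_0]$ and $\EE[y_1^\top A_{22} y_1]$ are constants independent of $\pi$. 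Likewise $\eta \lnorm{Z(0)-Z(1)}{2}^2$ expands as $\eta\lnorm{Z(0)}{2}^2 + \eta\lnorm{Z(1)}{2}^2 - 2\eta \Iprod{Z(0)}{Z(1)}$, whose first two terms are again constants. Hence minimizing $\EE_\pi[h + \eta\lnorm{Z(0)-Z(1)}{2}^2]$ over $\Pi_{\ip}$ is, up to an additive constant, the same as minimizing $\EE_\pi[2 y_0^\top A_{12} y_1 - 2\eta \Iprod{Z(0)}{Z(1)}]$, i.e.\ (after flipping sign and completing the square) a quadratic OT problem between the pushforwards $P_{A_{12}^\top Y(0), \eta Z}$ and $P_{Y(1), Z}$ — exactly the pair whose Brenier potential $\tilde\varphi$ the theorem hypothesizes is $C^2$ and $\lambda$-regular. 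The same algebra applies verbatim to the empirical version with $P_{n,Y(0),Z}, P_{m,Y(1),Z}$, so $V_{\ip,n,m}(\eta) - V_{\ip}(\eta)$ becomes a linear functional (with coefficients controlled by $\lnorm{A_{12}}{\textup{op}}$ and $\eta$) of the difference between the empirical and population optimal couplings for this standard OT problem, plus error terms coming from the fluctuation of the empirical second moments, which are $O_P(N^{-1/2})$ by the CLT and contribute only lower-order terms once the fourth-moment assumptions are in place.

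Next I would quantify the coupling difference. Let $T = \nabla\tilde\varphi$ be the OT map between $\mu := P_{A_{12}^\top Y(0), \eta Z}$ and $\nu := P_{Y(1),Z}$, and let $\hat T_{n,m}$ be an optimal coupling (or its barycentric projection) for the empirical problem between $\hat\mu_n := (P_{n,Y(0),Z})_{\#}(\cdot)$ appropriately scaled and $\hat\nu_m$. The estimand $V_{\ip}(\eta)$ is a smooth functional of the optimal coupling, and by the argument above the estimation error is bounded by $C(\lnorm{A_{12}}{\textup{op}},\eta)$ times a quantity of the form $\EE \int \lnorm{\hat T_{n,m} - T}{2}^2 \, d\mu$ or the closely related $W_2$-type discrepancy. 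This is precisely the object controlled in \parencite{manole2024plugin}: under the assumed two-sided bound $\lambda^{-1} I_d \preceq \nabla^2\tilde\varphi \preceq \lambda I_d$ and compact-support-type moment control, the plug-in estimator of the squared OT cost (equivalently the barycentric-projection map estimator) converges at the rate $\gamma_{N,d}$ stated, with $N = \min(n,m)$ and dimension $d = d_Y + d_Z$; the strong convexity provides the quadratic-growth lower bound that upgrades an objective-value gap to an $L^2$ map bound, and the smoothness provides the matching upper bound. I would verify that their hypotheses (stated there under compact support) are compatible with our setting after the non-compact extension is handled, using the fourth-moment assumptions on $Y(0), Y(1), Z$ to truncate; the factor $\lambda$ in the final bound is exactly the condition-number dependence inherited from that argument, and the factor $\eta$ is the Lipschitz constant of the linear functional relating the objective to the coupling, since the $Z$-block of the cost carries weight $\eta$.

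There are two places requiring care. The first, and I expect the main obstacle, is that \parencite{manole2024plugin} is stated for compactly supported measures, whereas Theorem~\ref{theo:estimation.rate} explicitly drops the compactness part of Assumption~\ref{a:basic} and only assumes finite fourth moments. Bridging this gap requires a truncation argument: restrict to the event that all sample points lie in a ball of radius $R_N \to \infty$ chosen so that the tail probability (controlled via the fourth moments and Markov's inequality) times the worst-case cost on the complement is $o(\gamma_{N,d})$, apply the compact-support rate on the truncated measures with an effective diameter $R_N$, and track how $R_N$ enters the constant; choosing $R_N$ polylogarithmic in $N$ keeps the deterioration subpolynomial, which is why the rate $\gamma_{N,d}$ is unchanged and only the constant $C$ absorbs the fourth moments. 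One must also confirm that $\tilde\varphi$'s regularity is assumed on all of $\RR^d$ (or a sufficiently large region), so the truncated problem still has a well-behaved Brenier potential. The second, more routine point is bookkeeping the additive constants and the $A_{11}, A_{22}$ quadratic terms: these are genuinely $\pi$-independent at the population level, but at the empirical level they are evaluated against $P_{n,Y(0),Z}$ and $P_{m,Y(1),Z}$ rather than the truth, so their contribution to $V_{\ip,n,m}(\eta) - V_{\ip}(\eta)$ is $\lnorm{A_{11}}{\textup{op}}$ (resp.\ $\lnorm{A_{22}}{\textup{op}}$) times an empirical-mean fluctuation of a fourth-moment-bounded random variable, hence $O(N^{-1/2})$, which is dominated by $\gamma_{N,d}$ in every regime ($N^{-1/2} \le N^{-1/4}$ and $N^{-1/2} \le N^{-1/d}$ for $d \ge 5$ fails only for... no: $N^{-1/2} \le N^{-1/d}$ holds for all $d \ge 2$), so these terms are harmless. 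Assembling the pieces — algebraic reduction, the $L^2$ map rate from \parencite{manole2024plugin}, the truncation for non-compactness, and the $O(N^{-1/2})$ control of the constant terms — yields the claimed bound $C\lambda\eta\,\gamma_{N,d}$.
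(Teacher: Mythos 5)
Your overall plan matches the paper's: expand the quadratic cost, strip off the $\pi$-independent marginal terms (which fluctuate at $O(N^{-1/2})$), reduce the remaining cross-term to a standard quadratic OT problem between $P_{A_{12}^\top Y(0),\eta Z}$ and $P_{Y(1),Z}$, and control the plug-in error using the two-sided curvature bound on the Brenier potential. The paper makes this precise via three auxiliary propositions: an exact algebraic stability identity (their Prop.\ B.2), a ``coordinate-shift'' lemma showing that strong convexity of $\tilde\varphi$ converts the empirical OT-value gap $\Delta_{n,m}$ into an $L^2(\hat\pi)$ bound on $\|\bar X_j - T(X_i)\|^2$ (their Prop.\ B.3), and the deterministic stability inequality from Proposition~13 of \parencite{manole2024plugin} bounding $\Delta_{n,m}$ by $\lambda\,\EE[(\Wass_2(\hat P,P)+\Wass_2(\hat Q,Q))^2]$ (their Prop.\ B.4). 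Fournier--Guillin then supplies the rate $\gamma_{N,d}$.

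Where you depart from the paper — and where I think there is a genuine gap in your proposal — is the treatment of non-compactness. You identify compactness as ``the main obstacle'' and propose a truncation argument. But the paper does not truncate, and does not need to, because the two ingredients it extracts from \parencite{manole2024plugin} and \parencite{fournier2015rate} already work without compactness: Manole's Proposition~13 is a deterministic second-order Taylor inequality that only uses the assumed global curvature $\lambda^{-1}I \preceq \nabla^2\tilde\varphi \preceq \lambda I$ (which is a hypothesis of the theorem, not something to be derived from compactness via Caffarelli regularity), and Fournier--Guillin's Theorem~1 is stated under a finite-moment condition $M_q(\mu)<\infty$, not compactness. Moreover, your truncation plan has a subtle difficulty you only partially flag: the Brenier potential between \emph{truncated} measures is not the restriction of $\tilde\varphi$, because boundary conditions change, so assuming $\tilde\varphi$ regular on all of $\RR^d$ does not by itself yield a well-conditioned Brenier potential for the truncated problem. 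You would need to re-establish curvature for each truncated pair, with a constant that stays uniform as $R_N\to\infty$ — a nontrivial task that your proposal does not resolve. The cleaner route, and the one the paper takes, is to avoid truncation entirely: keep the curvature hypothesis global, apply the deterministic stability lemma as-is, and let Fournier--Guillin handle the unbounded support via moment conditions.

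One smaller point: you mention controlling $\EE\int \|\hat T_{n,m}-T\|^2\,d\mu$ or a barycentric-projection map estimator. The paper never forms a map estimator; it works directly with the empirical coupling $\hat\pi$ and bounds $\EE\bigl[\sum_{ij}\hat\pi_{ij}\|\bar X_j - T(X_i)\|^2\bigr]$ (Term~C in their proof), which is the natural transport-plan analogue. Your formulation is morally equivalent but would require an extra step to connect a barycentric map estimator back to the LP solution $\hat\pi$ appearing in $V_{\ip,n,m}$.
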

\begin{remark}[Examples of quadratic costs]
    Many causal quantities are the expectation of quadratic functions of potential outcomes, whose associated $V_{\ip}(\eta)$ falls into the umbrella of \Cref{theo:estimation.rate}.
    Examples include the Neymanian confidence interval \cite{neyman1923application}, testing a null effect using \(\mathbb{E}[(Y_i(1) - Y_i(0))^2] = 0\), the correlation index of potential outcomes \cite{fan2023partial}, and the covariance of treatment effects of different potential outcomes.
    We include the details of the examples in Appendix~\ref{appe:sec:example}.
\end{remark}

The compactness assumption in Assp.~\ref{a:basic}, together with certein smoothness assumptions on the supports and densities of the marginal distributions (introduced in~\cite{caffarelli1992boundary,caffarelli1992regularity,caffarelli1996boundary}), can be used to guarantee the existence of a positive curvature $\lambda$ (Corollary 3.2 of~\cite{gigli2011holder}).
\begin{lemma}[Caffarelli’s regularity theory for curvature]
    Assume that the supports of both $P_{A_{12}^{\top}Y(0), \eta Z}$ and $P_{Y(1), Z}$ are $C^2$ and uniformly convex, and Assp.~\ref{a:basic} holds. Further, assume that the densities of both are $C^{0,\beta}$ for some $\beta \in (0,1)$ and bounded away from zero and infinity. Then the associated Brenier potential $\tilde \varphi \in C^{2,\beta}$ and $\frac{1}{\lambda}I_{d} \preceq \nabla^2 \tilde \varphi(x) \preceq \lambda I_{d}$, for some $\lambda > 0$.
\end{lemma}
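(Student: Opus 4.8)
The plan is to recognize this lemma as a repackaging of Caffarelli's global regularity theory for the Monge--Ampère equation (\parencite{caffarelli1992boundary,caffarelli1992regularity,caffarelli1996boundary}), followed by an elementary linear-algebra argument that converts $C^{2,\beta}$ smoothness into the quantitative two-sided Hessian bound.

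\textbf{Step 1: the Monge--Ampère equation.} Write $\mu$ (resp.\ $\nu$) for $P_{A_{12}^{\top}Y(0),\eta Z}$ (resp.\ $P_{Y(1),Z}$), with densities $f,g$ and supports $X,Y$; by hypothesis $f,g\in C^{0,\beta}$ are bounded above and below, and $X,Y$ are $C^2$ and uniformly convex (note $\mu$ does admit a density, since $P_{Y(0),Z}$ does by Assp.~\ref{a:basic} and $(y,z)\mapsto(A_{12}^{\top}y,\eta z)$ is a linear bijection, $A_{12}$ being non-singular and $\eta>0$). Since $\tilde\varphi$ is the Brenier potential, $\nabla\tilde\varphi$ pushes $\mu$ forward to $\nu$, and the change-of-variables identity gives, in the Brenier/Alexandrov sense,
\[
    \det \nabla^2 \tilde\varphi(x) \;=\; \frac{f(x)}{g(\nabla\tilde\varphi(x))}, \qquad x \in X .
\]

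\textbf{Step 2: invoke Caffarelli.} Because $\nu$ has convex support and both densities are bounded away from $0$ and $\infty$, Caffarelli's theory yields that $\tilde\varphi$ is strictly convex and $C^{1,\alpha}_{\mathrm{loc}}(X)$; the Schauder-type estimate for the Monge--Ampère equation then upgrades this to interior $C^{2,\beta}$ regularity using $f,g\in C^{0,\beta}$, and the boundary regularity results of \parencite{caffarelli1996boundary} (with the companion works cited above), applicable since $X$ and $Y$ are $C^2$ and uniformly convex, promote this to $\tilde\varphi \in C^{2,\beta}(\overline{X})$. This is precisely the statement summarized in Corollary~3.2 of \parencite{gigli2011holder}.

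\textbf{Step 3: the two-sided bound.} On the compact set $\overline{X}$ the map $x \mapsto \nabla^2\tilde\varphi(x)$ is continuous, hence bounded in operator norm by some $\Lambda < \infty$, so $\nabla^2 \tilde\varphi(x) \preceq \Lambda I_d$ for all $x$; convexity of $\tilde\varphi$ gives $\nabla^2\tilde\varphi(x)\succeq 0$. Writing the eigenvalues as $0 \le \mu_1(x)\le\cdots\le\mu_d(x)\le\Lambda$ and using the Monge--Ampère equation together with the density bounds, $\det\nabla^2\tilde\varphi(x) \ge c := (\inf f)/(\sup g) > 0$, whence $\mu_1(x) = \det\nabla^2\tilde\varphi(x) / \prod_{i\ge 2}\mu_i(x) \ge c\,\Lambda^{-(d-1)} > 0$. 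Taking $\lambda := \max\{\Lambda,\ \Lambda^{d-1}/c\}$ yields $\tfrac1\lambda I_d \preceq \nabla^2\tilde\varphi(x)\preceq \lambda I_d$ on $\overline{X}$, as claimed.

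\textbf{Main obstacle.} There is no genuine mathematical obstruction here: the lemma is a direct consequence of deep but classical regularity results. The one point demanding care is matching the hypotheses to the precise form of Caffarelli's \emph{boundary} regularity theorem --- this is why the lemma assumes $C^2$ uniform convexity of \emph{both} supports rather than merely convexity of the target. Interior regularity alone would only control $\nabla^2\tilde\varphi$ on compact subsets of the interior of $X$, which is insufficient to produce the uniform curvature constant $\lambda$ used in \Cref{theo:estimation.rate}; it is the up-to-the-boundary statement that makes the global bound possible.
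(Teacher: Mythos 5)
Your argument is correct and follows essentially the same route the paper takes: the paper states this lemma without a separate proof, simply pointing to Caffarelli's regularity papers and Corollary~3.2 of \parencite{gigli2011holder}, and your proposal fleshes out exactly what lies behind that citation. Step~1 correctly identifies the Monge--Amp\`ere equation in Alexandrov/Brenier form, and you rightly flag that $\mu = P_{A_{12}^{\top}Y(0),\eta Z}$ inherits a density from Assp.~\ref{a:basic} via the nonsingular linear change of variables. Step~2 is the right invocation: interior $C^{1,\alpha}$ strict convexity from Caffarelli, Schauder upgrade to interior $C^{2,\beta}$ from the $C^{0,\beta}$ density assumption, and the boundary results (which genuinely require the $C^2$ uniform convexity of \emph{both} supports) to push the estimate up to the boundary of $\overline{X}$ --- your observation that interior regularity alone would not suffice for a global $\lambda$ is exactly the point of including those hypotheses. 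Step~3 is the small piece the paper leaves implicit: continuity of $\nabla^2\tilde\varphi$ on the compact $\overline{X}$ gives the upper eigenvalue bound $\Lambda$, convexity gives nonnegativity, and the Monge--Amp\`ere identity together with the density bounds gives $\det\nabla^2\tilde\varphi \ge c = \inf f / \sup g > 0$, whence $\mu_1 \ge c\,\Lambda^{-(d-1)}$; taking $\lambda = \max\{\Lambda,\ \Lambda^{d-1}/c\}$ closes the argument. No gaps.
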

Here, the definitions of $C^2$, uniformly convex sets and $C^{k,\beta}$ functions and will be explained in Appendix~\ref{app:notation_table}.

Unfortunately, we currently lack a general method to bound the curvature $\lambda$ with respect to $\eta$. Ideally, as $\eta \rightarrow \infty$, $\lambda$ would grow at most linearly with respect to $\eta$ (see Appendix~\ref{app:curv_mainthm} for a discussion), as we will prove for jointly Gaussian r.v.'s.
\begin{lemma}[Curvature of OT maps for Gaussian marginals]\label{lemm:curvature}
    Suppose that $(Y(0), Z)$, $(Y(1), Z)$ are both Gaussian vectors. For $\eta > 0$, there is a constant $\bar C, \bar c$ that only depends on the covariance matrices of $(Y(0), Z)$ and $(Y(1), Z)$, such that $ \bar c(\eta \wedge 1) I_{d} \preceq \nabla^2 \tilde \varphi(x) \preceq \bar C (\eta \vee 1) I_d$.
\end{lemma}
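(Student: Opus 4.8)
The plan is to exploit the fact that the quadratic-cost optimal transport map between two non-degenerate Gaussians is affine, so that \(\tilde\varphi\) is a quadratic polynomial and \(\nabla^2\tilde\varphi\) is a \emph{constant} positive-definite matrix; the lemma then reduces to a purely linear-algebraic eigenvalue estimate in which the only \(\eta\)-dependence sits in the diagonal rescaling \(D_\eta := \mathrm{diag}(I_{d_Y}, \eta I_{d_Z})\).

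\emph{Step 1 (reduce to a Bures identity).} Since \(\nabla^2\tilde\varphi\) is unchanged by adding an affine function to \(\tilde\varphi\), I may assume \(Y(0), Y(1), Z\) are centered. Set \(\Sigma_0 := \mathrm{Cov}(A_{12}^{\top} Y(0), Z)\) and \(\Sigma_1 := \mathrm{Cov}(Y(1), Z)\); both are fixed, do not depend on \(\eta\), and are positive-definite (non-degeneracy of the Gaussians is guaranteed by the density assumption in Assumption~\ref{a:basic}, and conjugating by \(\mathrm{diag}(A_{12}^{\top}, I)\) preserves it because \(A_{12}\) is non-singular). Then \(P_{A_{12}^{\top} Y(0),\eta Z}\) is the Gaussian law with covariance \(\Sigma_0(\eta) := D_\eta \Sigma_0 D_\eta\), and \(P_{Y(1),Z}\) is the Gaussian law with covariance \(\Sigma_1\). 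By the explicit form of the optimal transport map between Gaussian measures (Brenier's theorem; see above), \(\tilde\varphi\) is a quadratic polynomial with \(\nabla^2\tilde\varphi \equiv M_\eta\), where \(M_\eta\) is the unique positive-definite matrix solving the Bures identity \(M_\eta\,\Sigma_1\,M_\eta = \Sigma_0(\eta)\) --- this being the orientation in which \(\nabla^2\tilde\varphi\) enters the estimand of Theorem~\ref{theo:estimation.rate}. In particular \(\tilde\varphi \in C^\infty\) (so the asserted two-sided bound holds at every \(x\)), and it suffices to bound the eigenvalues of \(M_\eta\).

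\emph{Step 2 (eigenvalue bounds by operator monotonicity).} Write \(\underline\sigma_j, \overline\sigma_j\) for the smallest and largest eigenvalues of \(\Sigma_j\), \(j\in\{0,1\}\) (fixed and positive). Conjugating \(\underline\sigma_1 I_d \preceq \Sigma_1 \preceq \overline\sigma_1 I_d\) by \(M_\eta\) gives \(\underline\sigma_1 M_\eta^2 \preceq M_\eta\Sigma_1 M_\eta = \Sigma_0(\eta) \preceq \overline\sigma_1 M_\eta^2\), hence \(\tfrac{1}{\overline\sigma_1}\Sigma_0(\eta) \preceq M_\eta^2 \preceq \tfrac{1}{\underline\sigma_1}\Sigma_0(\eta)\). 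On the other hand, conjugating \(\underline\sigma_0 I_d \preceq \Sigma_0 \preceq \overline\sigma_0 I_d\) by \(D_\eta\) and using \(\lambda_{\min}(D_\eta) = \eta\wedge 1\), \(\lambda_{\max}(D_\eta) = \eta\vee 1\) gives \(\underline\sigma_0(\eta\wedge1)^2 I_d \preceq \Sigma_0(\eta) = D_\eta\Sigma_0 D_\eta \preceq \overline\sigma_0(\eta\vee1)^2 I_d\). Combining and reading off the smallest and largest eigenvalues of the positive-definite matrix \(M_\eta^2\) (whose eigenvalues are the squares of those of \(M_\eta\)) yields \(\sqrt{\underline\sigma_0/\overline\sigma_1}\,(\eta\wedge1) \le \lambda_{\min}(M_\eta) \le \lambda_{\max}(M_\eta) \le \sqrt{\overline\sigma_0/\underline\sigma_1}\,(\eta\vee1)\), i.e.\ the claim with \(\bar c = \sqrt{\lambda_{\min}(\Sigma_0)/\lambda_{\max}(\Sigma_1)}\) and \(\bar C = \sqrt{\lambda_{\max}(\Sigma_0)/\lambda_{\min}(\Sigma_1)}\), which depend only on the covariance matrices of \((Y(0),Z)\) and \((Y(1),Z)\) (and the fixed matrix \(A_{12}\)).

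\emph{Main obstacle.} The computation above is routine once the Gaussian closed form is available; the genuinely delicate point is the bookkeeping of the \emph{orientation} and \emph{scaling}. The two natural candidates for \(\nabla^2\tilde\varphi\) --- the Bures map \(M_\eta\) with \(M_\eta\Sigma_1 M_\eta = \Sigma_0(\eta)\) and its inverse (the transport map in the reverse direction) --- scale oppositely in \(\eta\): one is sandwiched by \((\eta\wedge1)\) and \((\eta\vee1)\), the other by \((\eta\vee1)^{-1}\) and \((\eta\wedge1)^{-1}\). One therefore has to unwind the reduction behind Theorem~\ref{theo:estimation.rate} carefully --- in particular to confirm that it produces exactly the measures \(P_{A_{12}^{\top}Y(0),\eta Z}\) and \(P_{Y(1),Z}\) with the \(\eta\)-factor placed as stated, and that the Brenier potential enters in the orientation used in Step~1 --- since it is this that decides whether the bounds come out in the form \((\eta\wedge1), (\eta\vee1)\) claimed by the lemma. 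A secondary, more benign point is ensuring \(\Sigma_0, \Sigma_1 \succ 0\): this is exactly where non-degeneracy of the Gaussians and non-singularity of \(A_{12}\) are needed, and it is also what makes \(\bar c > 0\).
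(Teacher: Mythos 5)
Your proof is correct and follows essentially the same route as the paper --- reduce to the Gaussian closed form so that $\nabla^2\tilde\varphi$ is a constant matrix, factor the $\eta$-dependence into the diagonal rescaling $D_\eta$, and bound eigenvalues --- and the orientation you pick (the Bures map pushing the $\Sigma_1$-law forward to the $\Sigma_0(\eta)$-law) is exactly the one the paper uses in its own proof of this lemma, which is what makes the bound come out in the stated form $\bar c(\eta\wedge 1)$, $\bar C(\eta\vee 1)$ rather than with reciprocals. A minor plus of your variant: sandwiching $M_\eta^2$ directly from the defining identity $M_\eta\Sigma_1 M_\eta = \Sigma_0(\eta)$ via conjugation is a little tidier than the paper's factor-by-factor eigenvalue bound on $\tilde\Sigma_1^{-1/2}(\tilde\Sigma_1^{1/2}\tilde\Sigma_0\tilde\Sigma_1^{1/2})^{1/2}\tilde\Sigma_1^{-1/2}$, and it yields the slightly sharper constants $\bar c = \sqrt{\lambda_{\min}(\Sigma_0)/\lambda_{\max}(\Sigma_1)}$, $\bar C = \sqrt{\lambda_{\max}(\Sigma_0)/\lambda_{\min}(\Sigma_1)}$ (the paper's bound carries an extra $\sqrt{\lambda_{\max}(\tilde\Sigma_1)/\lambda_{\min}(\tilde\Sigma_1)}$ factor).
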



In certain scenarios, potential outcomes can be dependent. 
One example is sequential data collection, where earlier units influence which units are selected in later stages. 
Another example is where units are connected through a network, and their potential outcomes are dependent based on the network structure. In view of these situations, we extend \Cref{theo:estimation.rate} to a special case of stationary but non-i.i.d.~samples.
\begin{theorem}[Finite-sample complexity, non-i.i.d.]\label{theo:estimation.rate.noniid}
Under the same assumptions as Theorem~\ref{theo:estimation.rate}, suppose that the units satisfy the $\alpha$-mixing condition for some function $\alpha: \NN\rightarrow\RR_{+}$, satisfying $\sum_{k} \alpha(k) < \infty$ and $\textup{corr}(f(Y_i, Z_i), g(Y_j, Z_j)) \leq \alpha(|i - j|)$ for $\forall i,j$ and all bounded functions $f,g$. Then, we have
    \[
        \EE\left[\left|V_{\ip, n, m}(\eta) - V_{\ip}(\eta)\right|\right] \leq C \lambda \eta \cdot \gamma_{N,d},
    \]
where $C$ is a constant depending on $\alpha$, $d$, $\lnorm{A_{12}}{\textup{op}}$, $\lnorm{A_{11}}{\textup{op}}$, $\lnorm{A_{22}}{\textup{op}}$ as well as the fourth moments of $Y(0), Y(1), Z$.
\end{theorem}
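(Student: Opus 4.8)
The plan is to re-run the proof of Theorem~\ref{theo:estimation.rate} and isolate every place where the i.i.d.\ assumption is invoked; each such place turns out to rely only on second-moment (cell-wise) bounds for empirical averages, which the stated mixing condition supplies up to a constant factor. First recall the reduction used there: expanding $h(y_0,y_1)+\eta\lnorm{z_0-z_1}{2}^2$ and discarding the additive pieces that depend on $(y_0,z_0)$ alone or on $(y_1,z_1)$ alone (these contribute a constant to every $\pi\in\Pi_{\ip}$), the penalized problem defining $V_{\ip}(\eta)$ becomes a quadratic optimal transport problem between the affine images $P_{A_{12}^{\top}Y(0),\,\eta Z}$ and $P_{Y(1),Z}$, whose a.s.\ unique optimal map is $\nabla\tilde\varphi$ by Brenier's theorem. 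Following \parencite{manole2024plugin}, the plug-in error then splits as
\[
  V_{\ip,n,m}(\eta)-V_{\ip}(\eta)=\underbrace{\bigl\langle P_{n,Y(0),Z}-P_{Y(0),Z},\,\psi_0\bigr\rangle+\bigl\langle P_{m,Y(1),Z}-P_{Y(1),Z},\,\psi_1\bigr\rangle}_{\text{(a)}}\;+\;\underbrace{(\text{stability term})}_{\text{(b)}},
\]
where $\psi_0,\psi_1$ are fixed quadratic functions built from $h$ and $\tilde\varphi$, and the stability term is controlled, via the curvature bound $\tfrac1\lambda I_d\preceq\nabla^2\tilde\varphi\preceq\lambda I_d$ and the $\eta$-rescaling of the $Z$-block, by $C\lambda\eta$ times the $\calW_2^2$-error of the two empirical marginals; in the i.i.d.\ case the latter has expectation $O(\gamma_{N,d}^2)$, which yields the claimed rate.

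For term (a), since $\operatorname{corr}\!\bigl(f(Y_i,Z_i),f(Y_j,Z_j)\bigr)\le\alpha(|i-j|)$ for bounded $f$, I would truncate $\psi_0$ (resp.\ $\psi_1$) at a polynomial level, bound the tail through the assumed fourth moments of $Y(0),Y(1),Z$, and observe that for the truncated parts
\[
  \var\!\Bigl(\tfrac1N\sum_{i}\psi_0(Y_i,Z_i)\Bigr)\;\le\;\tfrac1{N^2}\sum_{i,j}\bigl|\cov\bigl(\psi_0(Y_i,Z_i),\psi_0(Y_j,Z_j)\bigr)\bigr|\;\le\;\tfrac{C_\alpha}{N}\,\var(\psi_0),\qquad C_\alpha:=1+2\!\sum_{k\ge1}\!\alpha(k)<\infty,
\]
so that, by Chebyshev, term (a) is $O_{C_\alpha}(N^{-1/2})$, which is dominated by $\gamma_{N,d}$ in every regime of $d$.

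For term (b), the only non-trivial point is that $\EE\bigl[\calW_2^2(P_{n,Y(0),Z},P_{Y(0),Z})\bigr]\lesssim\gamma_{N,d}^2$ (and likewise for the treatment marginal) must survive the loss of independence. Here I would use that the dyadic-partitioning bound underlying the Fournier--Guillin rate (and its use in \parencite{manole2024plugin}) reduces $\calW_2^2(P_{n,Y(0),Z},P_{Y(0),Z})$ to a sum over dyadic cells $Q$ of $(\operatorname{diam}Q)^2\,|P_{n}(Q)-P(Q)|$, and $\EE|P_n(Q)-P(Q)|\le\sqrt{\var(\tfrac1N\sum_i\mathbf 1\{(Y_i,Z_i)\in Q\})}\le\sqrt{C_\alpha P(Q)/N}$ by the same correlation computation applied to the bounded indicator $\mathbf 1\{\cdot\in Q\}$; summing over cells reproduces the i.i.d.\ bound, inflated only by $\sqrt{C_\alpha}$. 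The curvature-weighted stability of optimal maps (of Gigli type, as used in \parencite{manole2024plugin}) is a deterministic statement and is unaffected.

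Assembling the pieces, every i.i.d.\ estimate in the proof of Theorem~\ref{theo:estimation.rate} is replaced by its $C_\alpha$-inflated counterpart, giving $\EE[\,|V_{\ip,n,m}(\eta)-V_{\ip}(\eta)|\,]\le C\lambda\eta\,\gamma_{N,d}$ with $C$ now depending additionally on $\alpha$ through $C_\alpha=\sum_k\alpha(k)$. I expect the main obstacle to be step (b): if the map-estimation bound of \parencite{manole2024plugin} is routed through a concentration inequality (Bernstein/Talagrand type) rather than a bare variance bound, one must substitute a weak-dependence version---available for summable mixing coefficients but with constants that degrade in $C_\alpha$---and verify that this degradation does not spoil the $\gamma_{N,d}$ rate; a big-block/small-block reduction is the natural fallback if a direct variance-only argument is unavailable.
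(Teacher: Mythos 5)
Your proposal is essentially the same architecture as the paper's proof, but you take a longer route on one step and add a caveat that turns out to be unnecessary. The paper's proof simply re-runs the argument of Theorem~\ref{theo:estimation.rate}, keeps the same decomposition into (i) the $\Wass_2^2$-controlled stability term $\Delta_{n,m}$ and (ii) the empirical-mean-vs-population-mean error terms, and then replaces each i.i.d.\ ingredient by its mixing counterpart. For (i), the paper directly invokes Theorem~14 of \parencite{fournier2015rate} (recorded as Theorem~\ref{thm:fournier.guillin_noniid}), which states that the empirical Wasserstein rate $\gamma_{N,d}$ survives under $\alpha$-mixing with the constant inflated by $\alpha$. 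You propose to re-derive this from scratch via the dyadic-partitioning argument applied to the bounded indicator functions $\mathbf 1\{\cdot\in Q\}$; this is exactly how Fournier and Guillin prove it, so your approach is correct, just redundant given that the result is already in their paper. For (ii), the paper bounds each term via Jensen followed by the covariance expansion $\var\bigl(\tfrac1n\sum_i g(Y_i,Z_i)\bigr)\le \tfrac{1}{n}\var(g)\sum_{k\ge 0}\alpha(k)$, which matches your variance computation up to the trivial swap of Chebyshev for Jensen/Cauchy--Schwarz (you want an $L^1$ expectation bound, not a tail bound, so Jensen is the right tool; your intent is clear but the word ``Chebyshev'' is off).

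Your final caveat about the map-estimation bound possibly being ``routed through a concentration inequality'' is unfounded here: Proposition~\ref{prop:bndW2_curvature} (Proposition~13 of \parencite{manole2024plugin}) is a deterministic curvature-stability inequality, and the only probabilistic input to $\Delta_{n,m}$ is the expected squared Wasserstein distance of the empirical marginals, which Fournier--Guillin control under mixing without any concentration machinery. No big-block/small-block fallback is needed. On the other hand, you correctly flag a genuine technical subtlety that the paper's proof treats somewhat casually: the mixing hypothesis is stated for \emph{bounded} $f,g$, yet the covariance step is applied to unbounded quadratic functionals like $Y_i^\top A_{11} Y_i$. Your proposed truncation-plus-fourth-moment handling of that point is the more careful treatment; the paper simply writes down the covariance inequality as if it applied directly, implicitly assuming the extension to $L^2$ functions (which does hold for $\alpha$-mixing via standard covariance inequalities, e.g.\ Davydov's, but requires the truncation step you describe to make rigorous given only the bounded-function correlation hypothesis as stated).
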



\section{Experiments}\label{sec:exp}
The code can be found in the link: \url{https://github.com/siruilin1998/causalOT.git}.

\subsection{Computation}

In \Cref{alg:empirical_estimator}, \eqref{eq:LP} can be solved by a standard linear programming (LP) solver. Alternatively, a popular approximation algorithm for solving this problem is to apply the Sinkhorn algorithm \cite{cuturi2013sinkhorn}, which induces a small gap from the desired optimal transport solution. However, the LP method is sufficiently efficient to solve the above problem for typical datasets in causal inference (with data size $|\calI| = m + n \leq 10^4$). Thus, in this work, we directly compute $V_{\ip, n, m}(\eta)$ without using the Sinkhorn approximation. 

\subsection{Synthetic Data}\label{sec:syn_data}
Our synthetic data generating mechanism is based on the following generic location ($k = 1$) and scale ($k=2$) model: $Y(0) = G_{k}(f_0(Z), \varepsilon_0), $ $Y(1) = G_{k}(f_1(Z), \varepsilon_1)$, where $Y(0), Y(1), \varepsilon_0, \varepsilon_1 \in \RR^{d_Y}$, $Z \in \RR^{d_Z}$, $f_0, f_1: \RR^{d_Z} \rightarrow \RR^{d_Y}$, and $G_{1}(u,v) = u + v, G_2(u,v) = u \odot v = (u_j v_j, j \in [d_Y])$. Specifically, $Z, \varepsilon_0, \varepsilon_1$ are independent with each other.

When the noise variables $\varepsilon_0, \varepsilon_1$ follow Gaussian distributions and the cost function $h$ is quadratic, $V_{\cc}$ is computable in closed form. Thus, the true value of $V_{\cc}$ will serve as an oracle benchmark for evaluating the experimental results.
\begin{lemma}[$V_{\cc}$ of models with Gaussian noise]\label{lem:gauss_noise_Vc}
    Assume that $h(y, \bar y) = \lnorm{y + \bar y}{2}^2$, the noise variables $\varepsilon_0 \sim N(0, \Sigma_0), \varepsilon_1 \sim N(0, \Sigma_1)$ and $Z, \varepsilon_0, \varepsilon_1$ are independent with each other. Then,
    \begin{enumerate}[label=(\roman*)]
        \item For the location model ($k = 1$),
        \[
            V_{\cc} =\EE_{Z}\left[\lnorm{f_0(Z) + f_1(Z)}{2}^2\right] + S(\Sigma_0, \Sigma_1).
        \]
        \item For the scale model ($k = 2$), $V_{\cc} = $
        \begin{align*}
            \hspace{-0.2in} \EE_{Z}\left[S(\calD(f_0(Z))\Sigma_0\calD(f_0(Z)), \calD(f_1(Z))\Sigma_1\calD(f_1(Z)))\right],
        \end{align*}
    \end{enumerate}
    where $S(\Sigma_0, \Sigma_1) = \Tr\left(\Sigma_0 + \Sigma_1 - 2(\Sigma_0^{\frac{1}{2}} \Sigma_1 \Sigma_0^{\frac{1}{2}})^{\frac{1}{2}}\right)$, and $\calD(v)$ is the matrix with the vector $v$ placed at the diagonal.
\end{lemma}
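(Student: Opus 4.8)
The plan is to first reduce the conditional optimal transport problem defining $V_{\cc}$ to a family of \emph{unconditional} transport problems indexed by the covariate value, then to recognize each such problem as a squared $2$-Wasserstein distance between two Gaussians via a simple reflection, and finally to substitute the classical closed form for the $2$-Wasserstein distance between Gaussian laws. First I would invoke the disintegration: any $\pi \in \Pi_{\text{c}}$ factors as $\pi = P_Z \otimes \pi_{\cdot \mid Z}$, where for $P_Z$-a.e.\ $z$ the conditional law $\pi_{\cdot \mid Z = z}$ is a coupling of $P_{Y(0)\mid Z = z}$ and $P_{Y(1)\mid Z = z}$, and conversely gluing any measurable family of such couplings over $z$ returns an element of $\Pi_{\text{c}}$. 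Since $\EE_\pi[h(Y(0),Y(1))] = \EE_Z\big[\EE_{\pi_{\cdot\mid Z}}[h(Y(0),Y(1))\mid Z]\big]$ and the admissibility constraint acts separately on each fiber, one gets
\begin{align*}
    V_{\cc} = \EE_{Z}\Big[\ \inf_{\gamma \in \Pi(P_{Y(0)\mid Z},\, P_{Y(1)\mid Z})} \EE_{\gamma}\big[\lnorm{Y(0) + Y(1)}{2}^2\big]\ \Big],
\end{align*}
where $\Pi(\cdot,\cdot)$ denotes the set of couplings of the conditional laws at the realized value of $Z$, and the interchange of $\inf$ and $\EE_Z$ is justified by a standard measurable-selection argument (in our Gaussian setting the optimal coupling is linear and depends continuously on the conditional mean and covariance, hence measurably on $z$, so this is immediate).

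Next I would exploit the reflection $Y(1) \mapsto -Y(1)$: writing $\lnorm{Y(0) + Y(1)}{2}^2 = \lnorm{Y(0) - (-Y(1))}{2}^2$, the inner infimum equals $\mathcal W_2^2\big(P_{Y(0)\mid Z = z},\, P_{-Y(1)\mid Z = z}\big)$, the squared $2$-Wasserstein distance, because couplings of $(Y(0),Y(1))$ correspond bijectively to couplings of $(Y(0),-Y(1))$ with identical transport cost. Then I would identify the conditional laws. Conditional on $Z=z$: in the location model ($k=1$) one has $Y(0)\mid Z=z \sim \calN(f_0(z),\Sigma_0)$ and $-Y(1)\mid Z=z \sim \calN(-f_1(z),\Sigma_1)$; in the scale model ($k=2$), using $\calD(v)^{\top}=\calD(v)$ and that negating a centered Gaussian does not change its law, $Y(0)\mid Z=z \sim \calN\big(0,\calD(f_0(z))\Sigma_0\calD(f_0(z))\big)$ and $-Y(1)\mid Z=z \sim \calN\big(0,\calD(f_1(z))\Sigma_1\calD(f_1(z))\big)$. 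Applying the formula $\mathcal W_2^2(\calN(\mu_0,\Lambda_0),\calN(\mu_1,\Lambda_1)) = \lnorm{\mu_0-\mu_1}{2}^2 + \Tr\big(\Lambda_0+\Lambda_1 - 2(\Lambda_0^{1/2}\Lambda_1\Lambda_0^{1/2})^{1/2}\big)$ gives, fiberwise, $\lnorm{f_0(z)+f_1(z)}{2}^2 + S(\Sigma_0,\Sigma_1)$ in the location case and $S\big(\calD(f_0(z))\Sigma_0\calD(f_0(z)),\calD(f_1(z))\Sigma_1\calD(f_1(z))\big)$ in the scale case; taking $\EE_Z$ and pulling the constant $S(\Sigma_0,\Sigma_1)$ out of the expectation in case (i) yields the two claimed identities.

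The main obstacle is the fiberwise decomposition in the first step, i.e.\ rigorously interchanging the outer expectation over $Z$ with the inner transport minimization and verifying that the reassembled coupling is genuinely admissible and jointly measurable. For general marginals this needs a careful treatment of regular conditional distributions (here supplied by the density assumption on $P_{Y(0),Z},P_{Y(1),Z}$) together with a measurable selection theorem for the optimizers. In the Gaussian case, however, the optimal conditional coupling has an explicit linear (Brenier/Knott--Smith) form in terms of $z$, so measurability is transparent and the lower bound ($\EE_\pi[h]\ge \EE_Z[\text{fiberwise optimum}]$ for every $\pi\in\Pi_{\text{c}}$) together with the matching upper bound (glue the explicit conditional optimizers) closes the argument cleanly; the remaining steps are routine Gaussian computations.
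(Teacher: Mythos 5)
Your proposal is essentially the paper's own argument: disintegrate the COT over $Z$, exchange the expectation over $Z$ with the conditional minimization, rewrite the fiberwise cost via the reflection $Y(1)\mapsto -Y(1)$ as a squared $2$-Wasserstein distance between Gaussians, and plug in the Gelbrich/Dowson--Landau closed form. You are somewhat more explicit than the paper about the measurable-selection justification for the interchange of $\inf$ and $\EE_Z$, but that is a matter of exposition rather than a different route.
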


Now, we provide the experiment results for synthetic data, comparing our proposal with (i) an existing method for estimating $V_{\cc}$ in our causal setting, that is, \cite{ji2023model} using their python package \texttt{DualBounds} \footnote{\url{https://dualbounds.readthedocs.io/en/latest/index.html}}, and (ii) the unconditional OT method \cite{gao2024bridging} estimating $V_{\uu}$.
In Figure~\ref{fig:compare} and Appendix~\ref{app:syn_fig}, we plot the results for the following three models:
\begin{enumerate}[label=(\alph*), leftmargin=0.2in]
    \item Linear location model, $f_0(z) = 0.6 z, f_1(z) = 1.6z$.
    \item Quadratic location model, $f_0(z) = 0.2 z^2, f_1(z) = 0.6 z ^2$.
    \item Scale model, $f_0(z) = 0.5 z - 0.35, f_1(z)=1.1z + 0.35$.
\end{enumerate} 


From Figure~\ref{fig:compare}, we can see that the $V_{\ip, n,m}(\eta)$ curve decreases quickly for small $\eta$ and then stabilizes as $\eta$ increases, implying that incorporating covariate information significantly improves over the estimator for $V_{\uu}$  (denoted with label ``OT''). For the linear location model (Figure~\ref{fig:compare}(a)(d)), the \texttt{DualBounds} with ``ridge''-based nuisance estimator actually utilizes apriori outcome model knowledge; in contrast, our non-parametric method, not leveraging this knowledge, can effectively achieve comparable $L_1$ error with reasonably large $\eta$. 
For the quadratic location model (Figure~\ref{fig:compare}(b)(e)) and the scale model (Figure~\ref{fig:compare}(c)(f)), compared to \texttt{DualBounds} with ``ridge''-based nuisance estimator that relies on an invalid model assumption, our method demonstrates its robustness with significantly higher accuracy; compared to \texttt{DualBounds} with ``knn''-based nuisance estimator that is also non-parametric, our non-parametric method provides closer and more stable approximation to $V_{\cc}$.
Particularly, for the scale model (Figure~\ref{fig:compare}(c)(f)), \texttt{DualBounds} with ``knn''-based nuisance estimator may even yield negative-valued estimator for $V_{\cc} > 0$ (see Figure~\ref{fig:compare2} (c)(f) in Appendix~\ref{app:syn_fig}), indicating a significant bias.

\begin{figure*}[ht]
    \centering
    \subfigure[Linear location model]{
        \includegraphics[width=0.3\textwidth]{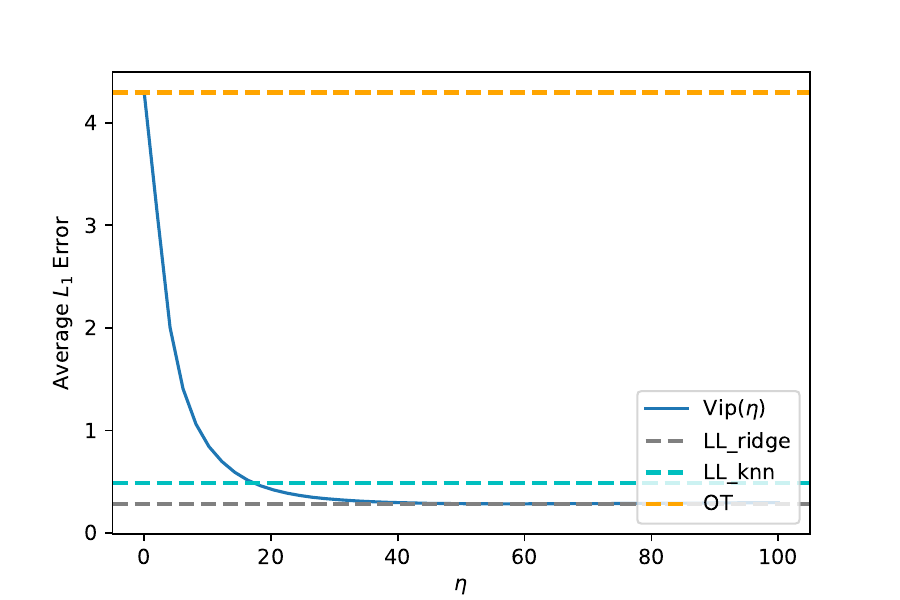} 
    }
    \subfigure[Quadratic location model]{
        \includegraphics[width=0.3\textwidth]{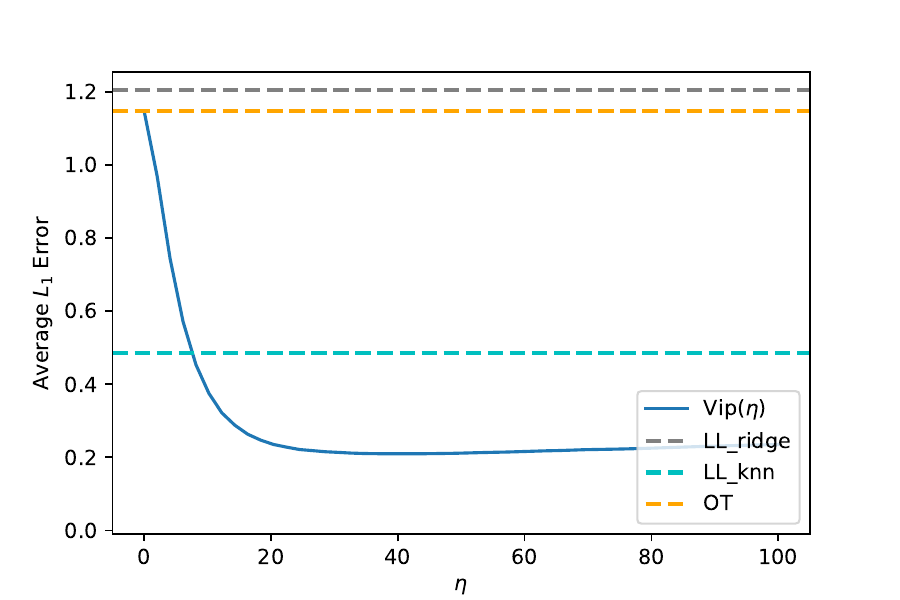} 
    }
     \subfigure[Scale model]{
        \includegraphics[width=0.3\textwidth]{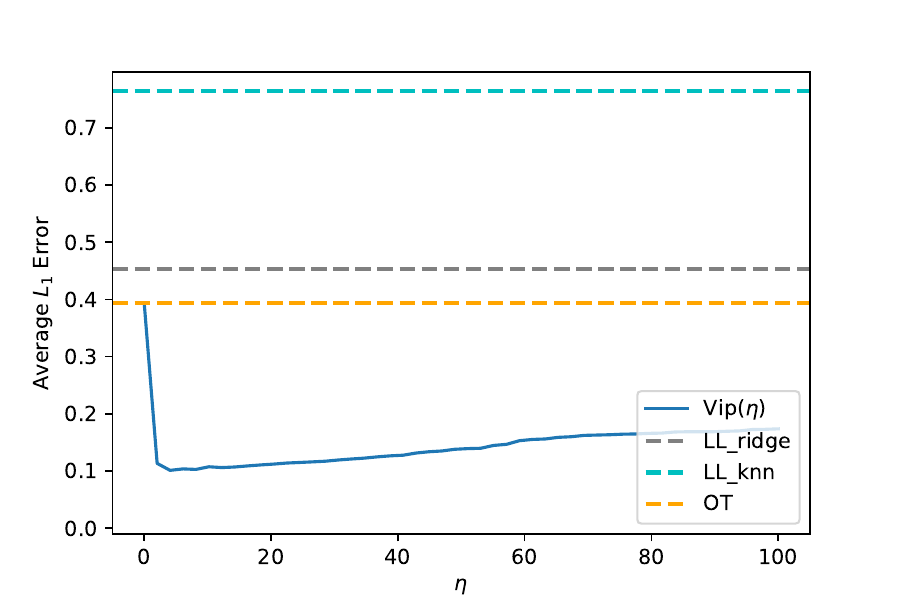} 
    }

    \subfigure[Linear location model]{
        \includegraphics[width=0.3\textwidth]{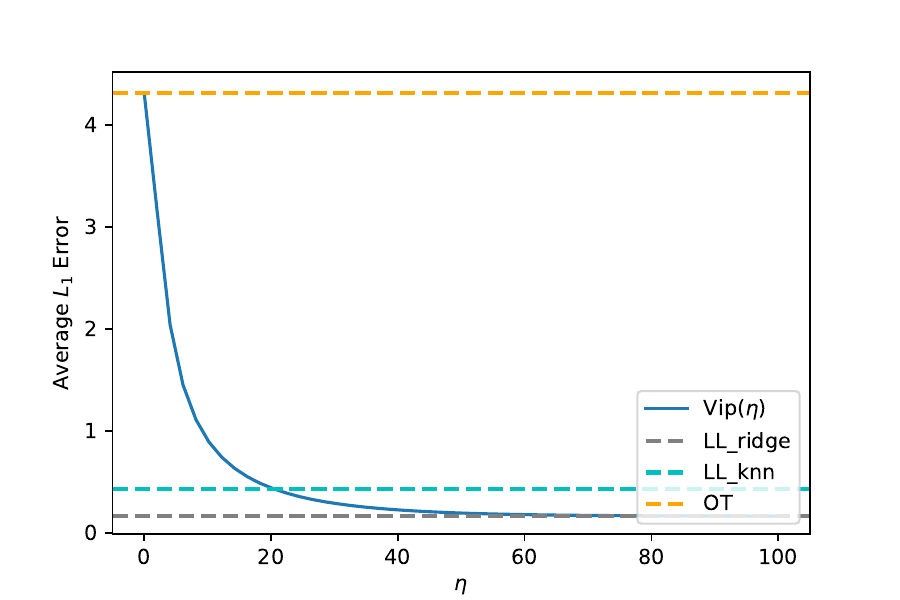} 
    }
    \subfigure[Quadratic location model]{
        \includegraphics[width=0.3\textwidth]{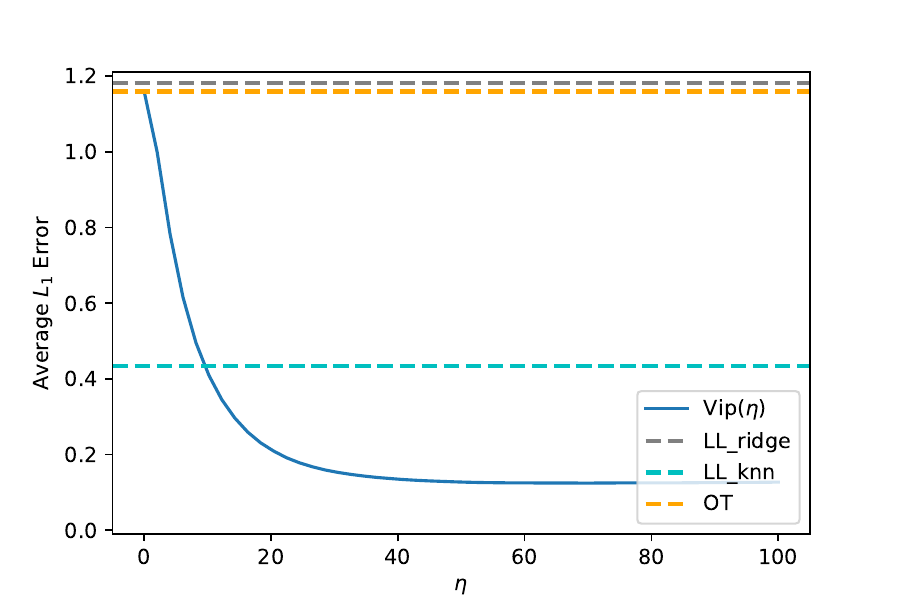} 
    }
     \subfigure[Scale model]{
        \includegraphics[width=0.3\textwidth]{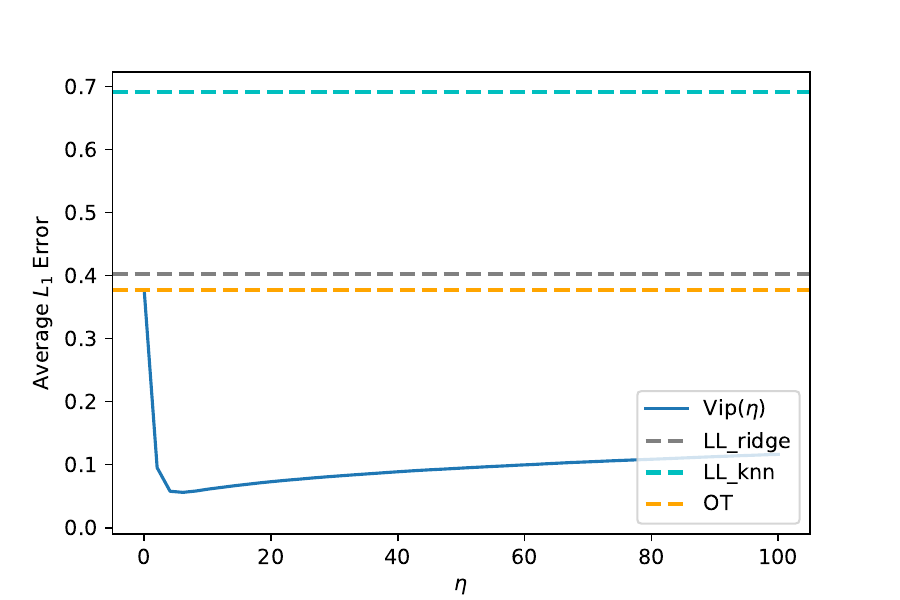} 
    }
    \caption{$L_1$ error for estimation of $V_{\cc}$ (i.e.~$\EE\left[\left|\text{estimator} - V_{\cc}\right|\right]$, $\EE$ is computed by averaging over $800$ repetitions) for a linear location model, a quadratic location model, and a scale model. The first row: $m=n=500$, the second row: $m=n=1500$. Each column is corresponding to a model. ``Vip($\eta$)'' stands for the $V_{\ip, n,m}(\eta)$ estimator, ``LL\_ridge'' stands for the method \texttt{DualBounds} \cite{ji2023model} with outcome model chosen as ``ridge'', ``LL\_knn'' stands for the method \texttt{DualBounds} with outcome model chosen as ``knn'', ``OT'' stands for the unconditional OT method \cite{gao2024bridging}. }
    \label{fig:compare}
\end{figure*}

\subsection{Real Data} 
\begin{table*}[ht]
    \centering
    \caption{Neymanian confidence upper bound.}
    \begin{tabular}{|c|c|c|c|c|c|c|c|c|}
        \hline
        & $\textup{LL}_{\textup{ridge}}$ & $\textup{LL}_{\textup{knn}}$ & $\eta = 0$ & $\eta = 10$ & $\eta = 20$ & $\eta = 30$ & $\eta = 40$ & $\eta = 50$     \\
        \hline
        \textup{upper bound} ($\times 10^{-3}$) & 3.334 & 3.596 & 2.762 & 2.683 & 2.657 & 2.645 & 2.638 & 2.628\\
        \hline 
        \textup{relative sample size} & 1.207 & 1.302 & 1.0 & 0.971 & 0.962 & 0.957 & 0.955 & 0.951\\
        \hline
    \end{tabular}
    \label{tab:neyman}
\end{table*}
\begin{table*}[ht]
    \centering
    \caption{PI sets for correlation.}
    \begin{tabular}{|c|c|c|c|c|c|c|c|c|}
        \hline
        & $\textup{LL}_{\textup{ridge}}$ & $\textup{LL}_{\textup{knn}}$ & $\eta = 0$ & $\eta = 10$ & $\eta = 20$ & $\eta = 30$ & $\eta = 40$ & $\eta = 50$     \\
        \hline
        \textup{upper bound} & 1.005 & 1.187 & 0.997 & 0.932 & 0.911 & 0.900 & 0.895 & 0.887\\
        \hline
        \textup{lower bound} & -0.707 & -1.036 & -0.997 & -0.646 & -0.623 & -0.607 & -0.597 & -0.592\\
        \hline
        \textup{PI interval length} & 1.712 & 2.223 & 1.994 & 1.578 & 1.534 & 1.507 & 1.492 & 1.479\\
        \hline
    \end{tabular}
    \label{tab:rho}
\end{table*}

We utilized data from the Student Achievement and Retention (STAR) Demonstration Project \cite{angrist2009incentives}, an initiative aimed at understanding the effects of scholarship incentive programs on academic performance. 
The treatment (i.e. access to a scholarship incentive) was randomly assigned, and approximately $30\%$ of the students received the treatment.
Academic performance in the STAR data is measured using the GPA for the first academic year. In addition to the treatment assignment and the outcome variable, the dataset also includes the baseline GPA (GPA collected before the treatment assignment), which is considerably influential on the outcome variable.

We explore two applications of the PI set with distinct purposes. 
The first application improves the variance estimator used by the Neymanian confidence interval \cite{neyman1923application}, thus boosting statistical power and reducing the required sample size. 
The second application estimates the PI set for the correlation of two potential outcomes. 
If the treatment effect is constant, the correlation will be equal to one. Therefore, an upper bound that is strictly less than one indicates heterogeneity in the treatment effect.

\subsubsection{Neymanian confidence interval}\label{sec:neymanian.CI}

We revisit the classic Neymanian confidence interval \cite{neyman1923application}. Consider the finite-sample average treatment effect estimand \( \tau := \bar{Y}(1) - \bar{Y}(0) \), where \( \bar{Y}(k) = \sum_{i=1}^N {Y}_i(k)/N \) represents the finite-sample mean of potential outcomes at treatment level \( k \). For completely randomized experiment with \( m \) treated units, $n$ control units, the Neymanian confidence interval for the difference in means estimator is 
\begin{align}\label{eq:variance.design.based}
\begin{split}        
    V :=& \frac{S_1^2}{m} + \frac{S_0^2}{n} - \frac{S_{\tau}^2}{N}, \\
    S_k^2 :=& \frac{\sum_{i=1}^N (Y_i(k) - \bar{Y}(k))^2}{N-1}, \quad k \in \{0,1\},\\
    S_{\tau}^2 :=& \frac{\sum_{i=1}^N \left( Y_{i}(1) - Y_i(0) - (\bar{Y}(1) - \bar{Y}(0))  \right)^2}{N-1}.
\end{split}
\end{align}
Here, \( S_k^2 \) represents the sample variance of outcomes for the treatment or the control group, which can be estimated using the empirical variance of units with \( W_i = k \). However, \( S_{\tau}^2 \) remains unidentifiable because $Y_i(1)$, $Y_i(0)$ are not simultaneously observable for the same individual.

The conventional variance estimator for \( V \) replaces \( S_0^2 \), \( S_1^2 \) with its empirical estimate and substitutes \( S_{\tau}^2 \) with a trivial lower bound of zero. 
This approach leads to an upward-bias variance estimator and, consequently, to overly conservative confidence intervals.
In \cite{aronow2014sharp}, the sharp lower bound for \( S_{\tau}^2 \) without using covariates is provided, which corresponds to our $V_{\uu} = V_{\ip}(0)$.
Our $V_{\ip}(\eta)$, $\eta > 0$ incorporates the covariate information and can offer a sharp lower bound for \( S_{\tau}^2 \), leading to an even narrower confidence interval and a more powerful test.
In practice, narrower confidence intervals are highly desirable as they often imply that smaller sample sizes are sufficient to achieve statistical significance. For example, in clinical trials, this can reduce the number of patients needed, thereby lowering costs and expediting discoveries.

In \Cref{tab:neyman}, we present our estimated Neymanian variance estimator in comparison with alternative methods discussed in \Cref{sec:syn_data}. 
All variance estimators are normalized by \( V_{\uu} \), allowing the resulting ratio to be interpreted as the ratio of the sample size required to achieve a specific level of statistical significance. A smaller ratio indicates that fewer units are needed. Our procedure, with \(\eta = 50\), reduces the required sample size by approximately $5\%$ compared to \( V_{\uu} \) ignoring covariates. In contrast, the \texttt{DualBounds} methods result in relative sample size ratios exceeding one, suggesting their efficiency is less favorable than both \( V_{\uu} \) and our proposal.

\subsubsection{PI set of correlation index}\label{sec:correlation.index}
We consider the correlation between two potential outcomes defined as:  
\[
\rho := \frac{\mathbb{E}[Y_i(1) Y_i(0)] - \EE[Y_i(1)] \EE[Y_i(0)]}{\sqrt{\mathrm{Var}(Y_i(0)) \mathrm{Var}(Y_i(1))}}.
\]  
Since it depends on the joint distribution of the two potential outcomes, the quantity is only partially identified.

In \Cref{tab:rho}, we present our estimated PI sets for the correlation index in comparison with alternative methods discussed in \Cref{sec:syn_data}. First, notice that the upper bound of $\rho$ being strictly smaller than $1$ indicates treatment effect heterogeneity. The upper bounds of $\rho$ for \( V_{\uu} \) and \texttt{DualBounds} are close to $1$, while our method achieves an upper bound of $0.887$ (\( \eta = 50 \)) sufficiently smaller than $1$.
Second, the PI interval from our method is significantly narrower than those of \( V_{\uu} \) and \texttt{DualBounds} (our PI set is only \( 74\% \) of that achieved by \( V_{\uu} \)), indicating a better understanding of the correlation between potential outcomes. 

\section{Conclusion And Discussion}\label{sec:discussion}
In this work, we propose a model-lean covariate-aware OT method for solving PI sets in causal inference by introducing the mirror covariates. In traditional causal inference, covariates are typically considered unaffected by the treatment and are not assigned counterfactual values.  
Our mirror covariates can be effectively interpreted as the counterfactual counterparts of observed covariates. 
To ensure the invariance of pre-treatment covariates, we introduce a penalty on the deviation between the observed and mirror covariates. 
Our approach unifies covariates and potential outcomes within a common framework and also allows for incorporating post-treatment covariates.

We point out several promising directions for future research.

    \textit{Variable selection for tighter PIs}. As additional covariates are included, the population-level PI interval narrows. However, the finite-sample estimator of the PI deteriorates in accuracy, particularly in terms of bias. 
    To address this, it is beneficial to only adjust for a selected subset of covariates that substantially influences the distribution of potential outcomes. Ideally, the population PI adjusted for this subset should be close to that adjusted for all covariates, and by limiting the analysis to this subset, the estimation error of the PI is significantly reduced compared to using the full set of covariates.
    
    Potential approaches to achieve this include (1) a forward stage-wise approach, where unselected covariates with the greatest impact are added sequentially; (2) a single-step selection method minimizing the PI width with an appropriate penalty for the number of covariates used, conceptually analogous to LASSO.
    
    


    \textit{Model-assisted PI}.
    When knowledge of the model regarding the relationship between the potential outcome and covariates is available, this information could further refine the PI compared to our model-lean approach.
    In particular, it is a promising direction to develop model-assisted estimators that are both robust and efficient. 
    Here the robustness means that the model-assisted PI should remain valid, at least asymptotically, even when the underlying model assumptions are incorrect; the efficiency means that when the model assumptions are correct, the model-assisted PI should be narrower.
    A potential approach is using the one-step correction estimator from the semi-parametric literature \cite{van2000asymptotic}.
    
    \textit{Relaxation using Causal Optimal Transport}. Causal OT (or adapted OT) extends the standard OT framework by incorporating additional temporal causality constraints \cite{yamada1971uniqueness, xu2020cot}. In our context, this temporal constraint corresponds to the property that \( Y(0) \mid Z(0), Z(1) \) has the same distribution as \( Y(0) \mid Z(0) \), or in other words, \( Y(0) \) is independent of \( Z(1) \) given \( Z(0) \). This assumption is reasonable because \( Y(0) \) and \( Z(1) \) exist in different ``worlds'', and any dependence between them should arise through \( Z(0) \).
    In Appendix~\ref{app:causalOT}, we outline a relaxation that takes advantage of Causal OT ($V_{\causal}(\eta)$) to effectively address these constraints.
    We also present a result analogous to \Cref{prop:interpolation}. In particular, $V_{\causal}(\eta)$ is guaranteed to improve upon \( V_{\ip}(\eta) \), and admits an approximation error to $V_{\cc}$ of order \( 1/\eta \).
    However, despite this desirable property of Causal OT, the computation of its consistent estimator is much more challenging compared to our current proposal due to additional causal constraints \cite{eckstein2024computational}.

\section*{Acknowledgements}
The material in this paper is partly supported by the Air Force Office of Scientific Research under award number FA9550-20-1-0397 and ONR N000142412672. Support from NSF 2229012, 2312204, 2403007 is also gratefully acknowledged.

\section*{Impact Statement}
This paper presents work whose goal is to advance the field of 
Causal Inference and Optimal Transport as well as Machine Learning. There are many potential societal consequences 
of our work, none which we feel must be specifically highlighted here.

\bibliography{reficml}
\bibliographystyle{icml2025}

\newpage
\appendix
\onecolumn
\section{Related Works}\label{sec:relatedwork}


\subsection{Partial-Identification}
In the literature of causal inference, identifiability is typically the first issue addressed when a new causal estimand is proposed. 
Under the potential outcome model, causal estimands depending on cross-world potential outcomes are generally not identifiable due to the inherent missingness.
In this case, the range of parameter values that are compatible with the data is of interest.
A multitude of works have been dedicated to finding causal estimands that can be identified, where a set of assumptions is usually necessary.
For instance, the identifiability of the average treatment effect requires the assumption of no unobserved confounders, and using instrumental variables to identify the local treatment effect requires the exogeneity \cite{imbens2015causal}.
However, verifying these assumptions can be challenging or even impossible based solely on the data, and the violation of the assumptions will render the parameter unidentified.

In the literature of econometrics, the partial identification problem has been a topic of heated discussion for decades (see \cite{kline2023recent} for a comprehensive review and references therein). 
The works generally begin with a model depending on the parameter of interest and possibly other nuisance parameters or functions.
The partially identified sets are specified to be the values that comply with a set of moment inequalities, parameters that maximize a target function, or intervals with specifically defined endpoints. 
In causal inference, there is a tendency to avoid heavy model assumptions imposed on the potential outcomes to ensure the robustness and generalizability of causal conclusions.
Methods therein often involve leveraging non-parametric or semi-parametric methods that make few structural assumptions about the data-generating process \cite{bickel1993efficient, van2000asymptotic, laan2003unified}.

We detail three particularly relevant threads of works among the studies on partial identification problems.
Copula models \cite{nelsen2006introduction} are used to describe the dependency between multiple random variables with known marginal distributions, which also aligns with the potential outcome model with marginals accessible and the coupling unknown.
The Fréchet–Hoeffding copula bounds can be used to characterize the joint distribution of the potential outcomes \cite{heckman1997making, manski1997mixing, fan2010sharp}. However, when dealing with more than two margins, one side of the Fréchet–Hoeffding theorem's bound is only point-wise sharp; moreover, copula models are generally constrained to unidimensional random variables; in addition, the bounds on the joint distribution do not necessarily translate to those of causal estimands, such as the variance of the difference of two potential outcomes. 
The second thread of related works explicitly employs the optimal transport
to address the partial identification problem in econometrics and causal inference. 
\cite{galichon2018optimal}  considers a model that includes observed variables and latent variables, and the parameter of interest regulates the distribution of the latent variables and the relationship between the observed and unobserved variables, which differ from our problem formulation.
\cite{balakrishnan2023conservative} introduce the quadratic and differential effects for continuous treatments (infinite levels of treatment) that can be considered as examples of the lower limits of our formulation with a quadratic objective function.
\cite{ji2023model} tackle the PI problem using the COT framework. Their approach involves first estimating the conditional marginal distribution and then solving for the optimal dual function at each observed covariate value. While this method consistently provides valid PIs, accurately estimating the conditional marginal distribution (nuisance function) is challenging, and inaccuracies in the nuisance function estimation may result in conservative PI sets. 
Additionally, the method is computationally demanding.

\subsection{Optimal Transport}\label{app:ot}
Optimal transport (OT) provides an effective way to compare two probability distributions (namely, source and target) by leveraging the geometry of the sample space. The OT cost between two distributions is the minimal expected cost of transporting mass from one distribution to the other, where a ground cost function dictates the cost of moving a mass from one point to another. OT is a powerful tool in machine learning and statistics, and it is used in model fitting~\cite{kolouri2017optimal, peyre2019computational}, distributionally robust optimization~\cite{mohajerin2018data, blanchet2022optimal, gao2022finite}, fair machine learning~\cite{taskesen2021statistical, ref:si2021testing}, generative modeling~\cite{arjovsky2017wasserstein, gulrajani2017improved, liu2018two}, goodness-of-fit tests and two-sample tests~\cite{ramdas2017wasserstein, sommerfeld2018inference, bernton2019parameter, tameling2019empirical}, among others. In particular, in the causal inference literature, optimal transport has been widely recognized as an effective tool for counterfactual modeling~\cite{black2020fliptest, charpentier2023optimal, de2024transport, torous2024optimal}.

When matching two probability distributions, conditional optimal transport (COT) arises naturally in the presence of covariates, as it transports measure between the corresponding conditional distributions for each covariate value. COT provides a triangular map between the source and target, which is useful for conditional sampling \cite{wang2023efficient, hosseini2023conditional, baptista2024representation}. Further, COT has been applied in training conditional generative models, including domain adaptation~\cite{rakotomamonjy2022optimal}, and flow-based generative modeling~\cite{kerrigan2024dynamic}. To address the computational challenges of COT, several methodological approaches have been developed: GAN-based optimization~(e.g. \cite{baptista2024representation}), Kullback–Leibler (KL) divergence-based relaxation \cite{tabak2021data, manupriya2024consistent}, entropic relaxation~\cite{baptista2024conditional}, OT relaxation \cite{chemseddine2024conditional}. While several simulation-based computational methods have been proposed, the statistical guarantees for estimating the COT value remain largely unexplored.


\section{Notation Table}\label{app:notation_table}
For any domain $\Omega \subseteq \RR^d$, we denote $\calP(\Omega)$ as the space of probability measures on $\Omega$; we alse denote $\calP_{ac}(\Omega)$ to be the set of measures absolutely continuous with respect to the Lebesgue measure. For measure $P$ and map $T$, we denote the push-forward measure of $P$ using $T$ as $T_{\#}P$, where $T_{\#}P(A) = P(T^{-1}(A))$ for any measurable set $A$; If the map $T$ is linear, i.e.~$T(x) = B^{\top} x$ for a matrix $B$, then we use $B^{\top}_{\#} P$ to denote the associated push-forward measure. For a function $\varphi$, $\varphi^*$ denotes its convex conjugate, i.e. $\varphi^*(y) = \sup_{x} \{x^{\top}y - \varphi(x)\}$. For a matrix $A$, we denote $\lambda_{\max}(A)$ to be its largest eigenvalue and $\lambda_{\min}(A)$ to be its smallest eigenvalue; we denote $\lnorm{A}{\textup{op}}$ to be the operator norm defined as $\lnorm{A}{\textup{op}} = \sup_{x \neq 0} \lnorm{Ax}{2} / \lnorm{x}{2}$.

\begin{itemize}
\item A closed set $S$ is said to be $C^2$ if its boundary $\partial S$ is a twice continuously differentiable manifold.

\item A convex set \( S \subset \mathbb{R}^n \) is said to be \textit{uniformly convex} if for every $\varepsilon > 0,$ there is a $\delta > 0$ such that for any two points \( x, y \in S : \lnorm{x - y}{2} \geq \delta \), the point $z = \frac{1}{2}(x + y)$ satisfies the condition:
\[
\inf_{w \in \partial S} \lnorm{z - w}{2} \geq \varepsilon.
\]

\item 
The space \( C^{k, \beta}(\Omega) \), where \( k \in \mathbb{N} \) and \( \beta \in (0,1) \), is defined as:
\[
C^{k, \beta}(\Omega) = \left\{ u \in C^k(\Omega) \, \middle| \, \sup_{x \in \Omega} |D^\alpha u(x)| < \infty \text{ for all } |\alpha| \leq k, \, \sup_{|\alpha| = k} \sup_{x, y \in \Omega, x \neq y} \frac{|D^\alpha u(x) - D^\alpha u(y)|}{|x-y|^\beta} < \infty \right\},
\]
where \( D^\alpha u \) denotes the partial derivative of \( u \) of multi-index order \( \alpha \), and \( |\alpha| \leq k \) indicates that \( \alpha \) is a multi-index of order up to \( k \).

\item The Wasserstein distances between $P, Q \in \calP(\calX)$ are defined as
\[
    \Wass_{p}(P, Q) = \inf_{\pi \in \calP(\calX^2): \pi_X = P, \pi_{\bar X}=Q} \left(\EE_{\pi}\left[\lnorm{X - \bar X}{2}^{p}\right]\right)^{\frac{1}{p}}\quad p \in \NN.
\]

\item
The optimal transport distance between $P, Q \in \calP(\calX)$ with cost function $f: \calX^2 \rightarrow \RR$ is defined as
\[
    \Wass_{f}(P, Q) = \inf_{\pi \in \calP(\calX^2): \pi_X = P, \pi_{\bar X}=Q} \EE_{\pi}\left[f(X, \bar X)\right].
\]

\end{itemize}

\section{Proof of Main Results}\label{app:proofs}
In this section, we provide our proof and the theoretical results we will refer to in our proof.
\subsection{Brenier's Theorem}\label{sec:brenier_thm}
In this section, we collect the results of Brenier's theorem (\cite{brenier1991polar}) for optimal transport using quadratic costs.
\begin{theorem}[Brenier's theorem]\label{thm:brenier}
    Let \( \mu \) and \( \nu \) be two probability measures on \( \Omega \subseteq \mathbb{R}^d \), where \( \mu \in \calP_{ac}(\Omega)\). Then, there exists a convex function \( \varphi: \mathbb{R}^n \to \mathbb{R} \) such that the optimal transport map \( T: \mathbb{R}^n \to \mathbb{R}^n \) pushing \( \mu \) to \( \nu \) is given by
$T(x) = \nabla \varphi(x)$, where $\nabla \varphi$ is uniquely defined $\mu$-almost everywhere. 
Moreover, \( T \) is the unique optimal transport map solving the the Monge problem for transporting \( \mu \) onto \( \nu \):
\[
\inf_{T: T_\# \mu = \nu} \int_{\mathbb{R}^n} \|x - T(x)\|^2 d\mu(x).
\]

If we further assume that \( \nu \in \calP_{ac}(\Omega) \), then \( \nabla \varphi^* \) is the gradient of a convex function, uniquely determined \( \nu \)-almost everywhere, satisfying $\nabla \varphi^*_{\#} \nu = \mu$, and providing a solution to the Monge problem for transporting \( \nu \) onto \( \mu \). Moreover, for Lebesgue-almost every \( x, y \in \Omega \), the following holds:

\[
\nabla \varphi^* \circ \nabla \varphi (x) = x, \quad \nabla \varphi \circ \nabla \varphi^* (y) = y.
\]

\end{theorem}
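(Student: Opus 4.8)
The plan is to prove the theorem along the classical route: pass to the Kantorovich relaxation, show that an optimal plan is supported on a cyclically monotone set, invoke Rockafellar's theorem to manufacture a convex potential, and finally use the hypothesis $\mu\in\calP_{ac}(\Omega)$ to collapse the subdifferential to a single-valued gradient map. First I would replace the Monge problem $\inf_{T_\#\mu=\nu}\int_\Omega\|x-T(x)\|^2\,d\mu(x)$ by the Kantorovich problem $\inf_{\gamma\in\Pi(\mu,\nu)}\int\|x-y\|^2\,d\gamma$, where $\Pi(\mu,\nu)$ denotes the set of couplings with marginals $\mu$ and $\nu$. Since $\Pi(\mu,\nu)$ is tight and weakly compact and $(x,y)\mapsto\|x-y\|^2$ is lower semicontinuous (and integrable in the compact / finite-second-moment setting relevant here), a minimizer $\gamma^\star$ exists. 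Writing $\|x-y\|^2=\|x\|^2-2\langle x,y\rangle+\|y\|^2$ and noting that $\int\|x\|^2\,d\gamma=\int\|x\|^2\,d\mu$ and $\int\|y\|^2\,d\gamma=\int\|y\|^2\,d\nu$ are the same for every coupling, minimizing the transport cost is equivalent to maximizing the bilinear objective $\int\langle x,y\rangle\,d\gamma$.

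Next I would show that $S:=\operatorname{supp}(\gamma^\star)$ is cyclically monotone, i.e.\ for every finite family $(x_1,y_1),\dots,(x_N,y_N)\in S$ and every permutation $\sigma$ one has $\sum_i\langle x_i,y_i\rangle\ge\sum_i\langle x_i,y_{\sigma(i)}\rangle$. This is the standard mass-swapping argument: if the inequality failed for some finite family, a small amount of mass could be redistributed among those points to strictly increase $\int\langle x,y\rangle\,d\gamma$ without changing the marginals, contradicting optimality; turning this into a rigorous statement for a general measure requires a measurable gluing/disintegration construction, which I would take from \parencite{villani2009optimal}. By Rockafellar's theorem a cyclically monotone set is contained in the graph of the subdifferential of a proper lower semicontinuous convex function; concretely, after fixing $(x_0,y_0)\in S$, the function
\[
\varphi(x):=\sup\Big\{\langle y_m,\,x-x_m\rangle+\langle y_{m-1},\,x_m-x_{m-1}\rangle+\cdots+\langle y_0,\,x_1-x_0\rangle\Big\},
\]
with supremum over all finite chains $(x_0,y_0),\dots,(x_m,y_m)$ in $S$, is convex, lower semicontinuous, not identically $+\infty$, and satisfies $y\in\partial\varphi(x)$ for every $(x,y)\in S$. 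Hence $\gamma^\star$ is concentrated on $\{(x,y):y\in\partial\varphi(x)\}$.

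Now I would exploit absolute continuity. A finite convex function is locally Lipschitz, hence differentiable off a Lebesgue-null set $E$; since $\mu\in\calP_{ac}(\Omega)$ we get $\mu(E)=0$, and at each $x\notin E$ the subdifferential reduces to the singleton $\{\nabla\varphi(x)\}$. Therefore $y=\nabla\varphi(x)$ holds $\gamma^\star$-almost surely, i.e.\ $\gamma^\star=(\operatorname{id},\nabla\varphi)_\#\mu$; in particular $(\nabla\varphi)_\#\mu=\nu$, so $T=\nabla\varphi$ is an admissible Monge map attaining the Kantorovich value. Since every Monge map induces an admissible coupling, the two infima coincide and $T$ solves the Monge problem. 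For $\mu$-a.e.\ uniqueness: if $T_1=\nabla\varphi_1$ and $T_2=\nabla\varphi_2$ were both optimal, then $\tfrac12(\gamma_1+\gamma_2)$ would be optimal as well, hence concentrated on a single graph by the argument above, forcing $T_1=T_2$ $\mu$-a.e.\ (and hence $\nabla\varphi$ is $\mu$-a.e.\ unique). When additionally $\nu\in\calP_{ac}(\Omega)$, running the same argument with the roles of $\mu$ and $\nu$ swapped produces an optimal convex potential for the reverse problem; the Fenchel--Young equivalence $y\in\partial\varphi(x)\iff x\in\partial\varphi^*(y)$ together with a.e.\ differentiability of $\varphi$ and $\varphi^*$ identifies this reverse map with $\nabla\varphi^*$, yields $(\nabla\varphi^*)_\#\nu=\mu$, and gives $\nabla\varphi^*\circ\nabla\varphi(x)=x$ for $\mu$-a.e.\ $x$ and $\nabla\varphi\circ\nabla\varphi^*(y)=y$ for $\nu$-a.e.\ $y$.

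The hard part will be the middle passage: converting the mass-swapping heuristic into a genuine proof that $\operatorname{supp}(\gamma^\star)$ is cyclically monotone for an arbitrary optimal plan (this needs the measurable gluing lemma rather than a finitary exchange), and then combining Rockafellar's construction of $\varphi$ with the measure theory so that ``$\gamma^\star$ is carried by a cyclically monotone set'' really upgrades to ``$\gamma^\star$ is the graph of $\nabla\varphi$''. The a.e.\ differentiability of convex functions is the decisive analytic input, and it is precisely where the hypothesis $\mu\in\calP_{ac}(\Omega)$ cannot be dropped---without it no single-valued optimal map need exist. The remaining ingredients (existence of $\gamma^\star$, the Legendre-transform bookkeeping for the second half) are routine.
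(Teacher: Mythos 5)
The paper does not prove this statement; it is included in Appendix B.1 purely as a cited background result, attributed to \parencite{brenier1991polar} and used downstream (via the injectivity of $\nabla\varphi$ and the convex-conjugate identities) in the proofs of Propositions B.4--B.5 and Theorem 4.3. So there is no ``paper's own proof'' to compare against. That said, your sketch is the standard classical proof --- Kantorovich relaxation, reduction of the quadratic cost to maximizing $\int\langle x,y\rangle\,d\gamma$, cyclical monotonicity of the support of an optimal plan, Rockafellar's construction of a convex potential, and then a.e.\ differentiability of convex functions combined with $\mu\in\calP_{ac}(\Omega)$ to collapse the subdifferential to a single-valued gradient map --- and it is correct. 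You also correctly identify the two genuine technical points (the measurable mass-swapping argument for cyclical monotonicity, and the Lebesgue/Rademacher--Alexandrov-type input for a.e.\ differentiability) and correctly note that the reduction $\|x-y\|^2\leadsto -2\langle x,y\rangle$ needs finite second moments, which Assumption 3.2's compactness supplies.

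Two small remarks worth recording. First, your uniqueness argument via averaging $\tfrac12(\gamma_1+\gamma_2)$ is the right move, but to conclude $T_1=T_2$ $\mu$-a.e.\ you should say explicitly that the averaged plan is again optimal, hence carried by the graph of a gradient; being a graph forces the two fiber-wise disintegrations to coincide, so $T_1(x)=T_2(x)$ for $\mu$-a.e.\ $x$. Second, the paper's phrasing ``for Lebesgue-almost every $x,y\in\Omega$'' for the inverse relations is actually a slight overstatement of what the argument yields: the natural conclusions are $\nabla\varphi^*\circ\nabla\varphi(x)=x$ for $\mu$-a.e.\ $x$ and $\nabla\varphi\circ\nabla\varphi^*(y)=y$ for $\nu$-a.e.\ $y$ (which is what you write); ``Lebesgue-a.e.'' would additionally require the densities of $\mu,\nu$ to be a.e.\ positive on $\Omega$. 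This is an imprecision in the paper's statement rather than a gap in your proof.
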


\subsection{Founier and Guillin's Result}\label{sec:rate.wasserstein}
Suppose that the sample $(X_i, i \in [N])$ i.i.d.~follows $\mu$ and denote $\mu = \frac{1}{N}\sum_{i=1}^N \delta_{X_i}$. Theorem 1 in \cite{fournier2015rate} characterizes a nearly sharp convergence rate for empirical Wasserstein distances.
\begin{theorem}[Empirical Wasserstein convergence rate, Theorem 1 \cite{fournier2015rate}]\label{thm:fournier.guillin}
    Let $\mu \in \mathcal{P}(\mathbb{R}^d)$ and let $p > 0$. Assume that $M_q(\mu) := \int \lnorm{x}{ }^q \diff \mu < \infty$ for some $q > p$. There exists a constant $C$ depending only on $p, d, q$ such that, for all $N \geq 1$,
    \[
    \EE[\Wass_p^p(\mu_N, \mu)]
    \leq CM_q^{p/q}(\mu) 
    \times
    \begin{cases}
    N^{-1/2} + N^{-(q-p)/q} & \text{if } p > d/2 \text{ and } q \neq 2p, \\[6pt]
    N^{-1/2} \log(1+N) + N^{-(q-p)/q} & \text{if } p = d/2 \text{ and } q \neq 2p, \\[6pt]
    N^{-p/d} + N^{-(q-p)/q} & \text{if } p \in (0, d/2) \text{ and } q \neq d/(d-p).
    \end{cases}
    \]
\end{theorem}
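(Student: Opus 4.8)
The statement is the Fournier--Guillin empirical Wasserstein rate, and I would prove it by the classical \emph{multiscale (dyadic) coupling} argument, which reduces everything to controlling, scale by scale, the fluctuations of the empirical cell masses. Assume first that $\mu$ is supported in the unit cube, and let $\mathcal{Q}_\ell$ denote the dyadic partition of $[0,1]^d$ into $2^{d\ell}$ cubes of side $2^{-\ell}$. The deterministic starting point is the transport bound
\[
\Wass_p^p(\rho,\rho')\ \le\ C_{d,p}\sum_{\ell\ge0}2^{-p\ell}\sum_{Q\in\mathcal{Q}_\ell}\big|\rho(Q)-\rho'(Q)\big|,
\]
valid for any $\rho,\rho'$ on $[0,1]^d$: within each cube at level $\ell+1$ match as much mass as possible between $\rho$ and $\rho'$, and transport the residual (whose total mass over a level is $\le\sum_{Q\in\mathcal{Q}_\ell}|\rho(Q)-\rho'(Q)|$) a distance $O(2^{-\ell}\sqrt d)$; the $p$-th-moment cost of the coupling so produced is the right-hand side up to constants, so the bound holds for every $p>0$.

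Applying this with $\rho=\mu_N$ and $\rho'=\mu$ and taking expectations, the only probabilistic input is that $N\mu_N(Q)\sim\mathrm{Bin}(N,\mu(Q))$, which yields
\[
\EE\big|\mu_N(Q)-\mu(Q)\big|\ \le\ \sqrt{\mathrm{Var}(\mu_N(Q))}\ \le\ \sqrt{\mu(Q)/N},\qquad \EE\big|\mu_N(Q)-\mu(Q)\big|\ \le\ 2\mu(Q).
\]
Summing the first over scale $\ell$ via Cauchy--Schwarz gives $\sum_{Q\in\mathcal{Q}_\ell}\sqrt{\mu(Q)/N}\le\sqrt{2^{d\ell}/N}$, and the second gives $\sum_Q 2\mu(Q)\le2$; hence scale $\ell$ contributes $\lesssim 2^{-p\ell}\min(1,\sqrt{2^{d\ell}/N})$ in expectation. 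Summing over $\ell$, with the crossover at $2^{d\ell}\approx N$ (i.e.\ $\ell^\ast\approx\tfrac1d\log_2 N$), reproduces exactly the three ``compactly supported'' regimes: the geometric head $\sum_{\ell\le\ell^\ast}2^{(d/2-p)\ell}$ is $O(1)$ if $p>d/2$ (rate $N^{-1/2}$), is $O(\log N)$ if $p=d/2$ (rate $N^{-1/2}\log N$), and is dominated by its last term if $p<d/2$ (rate $N^{-p/d}$), while the tail $\sum_{\ell>\ell^\ast}2^{-p\ell}$ always contributes $O(N^{-p/d})$.

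The last and hardest step is removing the compact-support assumption, and this is where the moment hypothesis $M_q(\mu)<\infty$ and the extra term $N^{-(q-p)/q}M_q^{p/q}$ enter. A crude single-radius truncation (collapse the mass beyond radius $T$ to the origin, costing $\le T^{-(q-p)}M_q$ in $\Wass_p^p$ by the moment bound $\mu(\{\|x\|>T\})\le T^{-q}M_q$, then rescale the ball of radius $T$ to the unit cube) turns out to give a rate only like $N^{-(q-p)/(2q)}$, which is too weak; the right thing is to treat the tail \emph{dyadically in radius} too, peeling off annuli $\{2^k\le\|x\|<2^{k+1}\}$ of mass $\lesssim 2^{-kq}M_q$, running the resolution-scale estimate of the previous paragraph inside each annulus with its own random number of sample points, and truncating the range of radial scales as a function of $N$. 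Carrying out and balancing this double decomposition (radial scales vs.\ resolution scales vs.\ $N$) is the main obstacle; it is this optimization that produces the exponent $(q-p)/q$, and the excluded values $q=2p$ and $q=d/(d-p)$ are precisely those at which a geometric (resp.\ radial) series in the estimate becomes borderline, summing to a logarithm and forcing a separate, log-corrected analysis. The remaining ingredients --- the hierarchical-matching inequality with explicit dimensional constants, and the observation that only $O(2^{d\ell})$ cells are nonempty at each relevant scale --- are routine.
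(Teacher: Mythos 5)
This statement is not proved in the paper: it is quoted verbatim as Theorem~1 of \parencite{fournier2015rate} and used as a black box, so there is no ``paper's proof'' against which to compare. Your sketch is, in fact, a faithful reconstruction of the actual Fournier--Guillin argument: the deterministic dyadic-partition transport lemma, the Cauchy--Schwarz/binomial bound $\sum_{Q\in\mathcal{Q}_\ell}\EE|\mu_N(Q)-\mu(Q)|\le\min\bigl(2,\sqrt{2^{d\ell}/N}\,\bigr)$, the crossover at $2^{d\ell}\asymp N$ producing the three compact-support rates, and the dyadic-in-radius peeling that introduces the $q$-th moment and the $N^{-(q-p)/q}$ tail term. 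The one place you are explicitly hand-waving is the ``hardest step'' --- the bookkeeping that couples the radial annuli (with their random occupation numbers) to the resolution scales, and the optimization over the truncation radius that produces the exponent $(q-p)/q$ and the excluded boundary values $q=2p$, $q=d/(d-p)$. That is precisely the technical core of Fournier--Guillin's Section~2, so as written your proposal is a correct and well-targeted outline rather than a complete proof; for the purposes of this paper, which only invokes the theorem, the sketch is appropriate and identifies all the right ingredients.
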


\begin{theorem}[Empirical Wasserstein convergence rate, non-i.i.d., Theorem 14 \cite{fournier2015rate}]\label{thm:fournier.guillin_noniid}
    Assume that the sample is $\alpha$-mixing, i.e. there is a function $\alpha: \NN\rightarrow\RR_{+}$, satisfying $\sum_{k} \alpha(k) < \infty$ and $\textup{corr}(f(X_i), g(X_j)) \leq \alpha(|i - j|)$ for $\forall i,j$ and all bounded functions $f,g$, then the results in \Cref{thm:fournier.guillin} still holds, with the constant $C$ may additionally depend on $\alpha$.
\end{theorem}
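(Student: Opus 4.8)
The plan is to obtain Theorem~\ref{thm:fournier.guillin_noniid} as a corollary of Theorem~\ref{thm:fournier.guillin} by auditing the i.i.d.\ argument of \parencite{fournier2015rate} and isolating the single place where independence is used. That argument rests on a multiscale (dyadic) decomposition: one covers a large ball $B$ by cubes $\calQ_\ell$ of side $2^{-\ell}$, $\ell\geq 0$, builds, sample-pathwise, a coupling of $\mu_N$ and $\mu$ that matches mass cube by cube within each scale, and thereby dominates $\Wass_p^p(\mu_N,\mu)$ by a sum whose \emph{structure} is deterministic, of the form $\sum_{\ell\geq 0}2^{-p\ell}\sum_{Q\in\calQ_\ell}|\mu_N(Q)-\mu(Q)|$, plus a tail contribution outside $B$ that is controlled purely by the moment hypothesis $M_q(\mu)<\infty$. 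Randomness enters in exactly one lemma: $\EE\,|\mu_N(Q)-\mu(Q)|\leq\sqrt{\var(\mu_N(Q))}$ together with the elementary bound $\var(\mu_N(Q))\leq\mu(Q)/N$, valid because the $X_i$ are i.i.d. Everything downstream — the partition of each scale's cube-sum into a small-mass and a large-mass regime, the Cauchy--Schwarz estimate over the finitely many cubes of $\calQ_\ell$, the summation of the geometric-type series in $\ell$, and the optimization of the truncation radius of $B$ against $M_q(\mu)$ — is a deterministic manipulation of these bounds that never revisits the joint law of the sample.

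Hence the only real step is to re-prove the variance estimate under $\alpha$-mixing. Writing $\var(\mu_N(Q))=N^{-2}\sum_{i,j}\cov(\Indc\{X_i\in Q\},\Indc\{X_j\in Q\})$ and using that each $X_i$ has marginal $\mu$ (identical marginals / stationarity), so $\var(\Indc\{X_i\in Q\})=\mu(Q)(1-\mu(Q))\leq\mu(Q)$, the defining inequality of $\alpha$-mixing applied with $f=g=\Indc_Q$ gives $\cov(\Indc\{X_i\in Q\},\Indc\{X_j\in Q\})\leq\alpha(|i-j|)\,\mu(Q)$ for $i\neq j$ and $\leq\mu(Q)$ for $i=j$. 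Summing over $j$ and then over $i$,
\[
\var(\mu_N(Q))\;\leq\;\frac{\mu(Q)}{N}\Big(1+2\sum_{k\geq 1}\alpha(k)\Big)\;=\;\frac{C_\alpha\,\mu(Q)}{N},\qquad C_\alpha:=1+2\sum_{k\geq 1}\alpha(k),
\]
and $C_\alpha<\infty$ precisely in the summable-mixing regime $\sum_k\alpha(k)<\infty$ that is in force in the application (Theorem~\ref{theo:estimation.rate.noniid}); if one only assumes $\alpha(k)\to 0$, this step must be weakened to a Cesàro bound, possibly at the price of a slower rate.

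The second step is purely mechanical: substitute $\var(\mu_N(Q))\leq C_\alpha\mu(Q)/N$ for $\mu(Q)/N$ wherever it occurs in the derivation. Since this bound enters only through $\sqrt{\var(\mu_N(Q))}\leq\sqrt{C_\alpha}\,\sqrt{\mu(Q)/N}$, every constant produced downstream is multiplied by at most $\sqrt{C_\alpha}\;(\geq 1)$, while the $N$-exponents in the three regimes $p>d/2$, $p=d/2$, $p<d/2$, the factor $M_q^{p/q}(\mu)$, and the additive $N^{-(q-p)/q}$ term are all untouched (the tail/truncation term does not involve the variance). Therefore the bound of Theorem~\ref{thm:fournier.guillin} holds verbatim with a constant $C$ that in addition depends on $\alpha$, and concretely only through $\sum_k\alpha(k)$.

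The main obstacle is fidelity to the original proof rather than any new estimate: one must verify that independence is \emph{not} used anywhere else in \parencite{fournier2015rate} — in particular that the expectation bound invokes no exponential concentration inequality and no independence-based decoupling, that the cube-matching coupling is genuinely constructed sample-pathwise, and that the moment-truncation step depends on $\mu$ alone. Pinning down exactly which lemma of \parencite{fournier2015rate} plays the role of the variance lemma, and checking these points, is what makes the reduction rigorous; once that is confirmed, the proof reduces to the one-line substitution above.
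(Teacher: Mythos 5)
This statement is a direct citation of Theorem~14 of \parencite{fournier2015rate}; the paper supplies no proof of its own, so there is nothing internal to compare against. Your proposal is therefore a reconstruction of the Fournier--Guillin argument, and in that light it is essentially sound: the multiscale/dyadic decomposition of $\Wass_p^p(\mu_N,\mu)$ is indeed deterministic, the probabilistic input is the first-moment bound $\EE|\mu_N(Q)-\mu(Q)|\le\min\bigl(2\mu(Q),\sqrt{\var(\mu_N(Q))}\bigr)$, and under the correlation-mixing hypothesis the variance bound degrades by exactly the factor $C_\alpha=1+2\sum_k\alpha(k)$ that you compute (using that $\var(\Indc\{X_i\in Q\})\le\mu(Q)$ follows from identical marginals, which your appeal to stationarity supplies). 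Since $C_\alpha\ge 1$ and only $\sqrt{C_\alpha}$ propagates, the exponents and the moment factor survive verbatim, and the constant picks up dependence on $\alpha$ only through $\sum_k\alpha(k)$. This matches how the theorem is used in Theorem~\ref{theo:estimation.rate.noniid}, which explicitly assumes $\sum_k\alpha(k)<\infty$.

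Two points worth flagging. First, the mixing condition written here, $\cor(f(X_i),g(X_j))\le\alpha(|i-j|)$ for all bounded $f,g$, is in standard terminology $\rho$-mixing (maximal-correlation mixing), not $\alpha$-mixing; the paper's label is a slight misnomer, though harmless. Second --- and this is where your argument is coarser than the source --- the statement as written requires only $\alpha(k)\to 0$, and Fournier--Guillin's Theorem~14 in fact operates under the \emph{dyadic} summability condition $\sum_{j\ge 0}\rho(2^j)<\infty$, which is strictly weaker than full summability $\sum_k\alpha(k)<\infty$. They obtain this by exploiting the dyadic structure of the partition when controlling the variance at each scale, rather than by a single global substitution of the variance lemma. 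Your proof, which swaps the variance bound in place, genuinely needs $\sum_k\alpha(k)<\infty$; you acknowledge this yourself. For the purposes of this paper the extra strength is harmless because the downstream theorem imposes $\sum_k\alpha(k)<\infty$ anyway, but your derivation does not establish the theorem as literally stated (with only $\alpha(k)\to 0$), nor does it recover the dyadic-summability generality of the cited result. To be precise you should either weaken the hypothesis of the statement you are proving to $\sum_k\alpha(k)<\infty$, or upgrade the argument to track the scale-$\ell$ dependence and sum only $\alpha$ along dyadic lags, as in the original.
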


\subsection{Proof of Proposition~\ref{prop:existuni}}
\begin{proof}[Proof of Proposition~\ref{prop:existuni}]
    By Theorem 4.1 (and its proof) in \cite{villani2009optimal}, $\Pi_{\ip}$ is tight and closed in the weak topology. Further, since $h(y, y') + \eta\lnorm{z - z'}{2}^2$ is bounded and continuous on the compact domain of $\calY^2 \times \calZ^2$, the set of minimizers $\Pi_{\ip}^{\star}(\eta)$ is non-empty. Similar reasoning can be applied to $\argmin_{\pi \in \Pi_{\textup{u}}} \EE_{\pi}[h(Y(0), Y(1))]$ and $\argmin_{\pi \in \Pi_{\textup{c}}}\EE_{\pi}[h(Y(0), Y(1))]$. Therefore, $\pi_{\uu}^\star, \pi_{\cc}^\star, \pi_{\ip}^\star(\eta)$ exist.

    We then proceed to show that the uniqueneness of $\pi_{\uu}^\star, \pi_{\ip}^\star(\eta)$. We only prove the uniqueness of $\pi_{\ip}^\star(\eta)$ and similar reasoning can be applied to  $\pi_{\uu}^\star$.

    We apply Theorem 10.28 in \cite{villani2009optimal}, and denote the three conditions in Theorem 10.28 as cond(i), (ii), (iii). By Assumption~\ref{a:cost}, $h(y, \tilde y) + \eta \lnorm{z- \tilde z}{2}^2$ is differentiable (thus superdifferentiable) everywhere, thus cond(i) is satisfied. By Assumption~\ref{a:cost}, the gradient 
    \[
        \nabla_{\tilde y, \tilde z} \left(h(y, \tilde y) + \eta \lnorm{z- \tilde z}{2}^2\right) = \left(\nabla_y h(y, \tilde y), 2\eta (\tilde z - z)\right)
    \]
    is injective with respect to $\tilde y, \tilde z$ for $\forall y, z \in \calY, \calZ$, thus cond(ii) is satisfied. Finally, by Assumption~\ref{a:basic},\ref{a:cost}, $h$ is locally Lipschitz, $P_{Y(0), Z}, P_{Y(1), Z}$ are compactly supported, and $P_{Y(0), Z}$ is absolutely continuous with respect to the Lebesgue measure on $\RR^{d_Y + d_Z}$, thus by Remark 10.33 in \cite{villani2009optimal}, cond(iii) is satistified. 

    As a result of Theorem 10.28, $\pi_{\ip}^\star(\eta)$ is uniquely defined.
\end{proof}

\subsection{Proof of Proposition~\ref{prop:interpolation}}
\begin{proof}[Proof of Proposition~\ref{prop:interpolation}]
    
    We prove the three properties as follows.
    \begin{enumerate}[label = (\roman*).]
        \item Since $V_{\text{u}}$ can be written as
        \begin{align*}
            &\min_{\pi \in \Pi_{\text{u}}'} \EE_{\pi}[h(Y(0), Y(1))],\\
    	\text{where} \quad&\Pi_{\text{u}}' = \left\{\pi \in \mathcal{P}(\calY \times \calZ \times \calY \times \calZ): \pi_{Y(0)} = P_{Y(0)}, \pi_{Y(1)} = P_{Y(1)}\right\},
        \end{align*}
        it follows that $\Pi_{\ip} \subseteq \Pi_{\text{u}}'$. Therefore, we get $V_{\text{u}} \leq V_{\ip}(\eta)$.
    
        Since $V_{\text{c}}$ can be equivalently written as
        \begin{align}\label{eq:Vc2}
            &\min_{\pi \in \Pi_{\text{c}}'} \EE_{\pi}\left[h(Y(0), Y(1)) + \eta \lnorm{Z(0) - Z(1)}{2}^2\right],\\
    	\text{where} \quad&\Pi_{\text{c}}' = \left\{\pi \in \mathcal{P}(\calY \times \calZ \times \calY \times \calZ): \pi_{Y(0), Z(0)} = P_{Y(0), Z}, \pi_{Y(1), Z(1)} = P_{Y(1), Z}, \right.\notag\\
        &\quad\qquad \left. Z(0) = Z(1)\quad \pi\textup{-almost~surely}\right\}, \notag
        \end{align}
        it follows that $\Pi_{\text{c}}' \subseteq \Pi_{\ip}$. Hence, we have
        \[
            V_{\ip}(\eta) \leq \min_{\pi \in \Pi_{\ip}} \EE_{\pi}\left[h(Y(0), Y(1)) + \eta \lnorm{Z(0) - Z(1)}{2}^2\right] \leq V_{\cc}.
        \]

        To prove that $V_{\ip}(\eta)$ is non-decreasing with respect to $\eta$, we note that, for $\eta_1 \leq \eta_2$, 
        \begin{align}
            &\EE_{\pi^{\star}_{\ip}(\eta_1)}\left[h(Y(0), Y(1)) + \eta_1 \lnorm{Z(0) - Z(1)}{2}^2\right]\label{eq:eq(1}\\
            \leq & \EE_{\pi^{\star}_{\ip}(\eta_2)}\left[h(Y(0), Y(1)) + \eta_1 \lnorm{Z(0) - Z(1)}{2}^2\right]\label{eq:eq(2}\\
            \leq & \EE_{\pi^{\star}_{\ip}(\eta_2)}\left[h(Y(0), Y(1)) + \eta_2 \lnorm{Z(0) - Z(1)}{2}^2\right]\label{eq:eq(3}\\
            \leq & \EE_{\pi^{\star}_{\ip}(\eta_1)}\left[h(Y(0), Y(1)) + \eta_2 \lnorm{Z(0) - Z(1)}{2}^2\right],\label{eq:eq(4}
        \end{align}
        where the first and last inequalities are due to the optimality of $\pi^{\star}_{\ip}(\eta_1), \pi^{\star}_{\ip}(\eta_2)$, and the second inequality is due to $\eta_1 \leq \eta_2$. 

        Therefore, we have $\eqref{eq:eq(3} - \eqref{eq:eq(2} \leq \eqref{eq:eq(4} - \eqref{eq:eq(1}$, that is, 
        \[
        (\eta_2 - \eta_1) \EE_{\pi^{\star}_{\ip}(\eta_2)}\left[\lnorm{Z(0) - Z(1)}{2}^2\right] \leq (\eta_2 - \eta_1) \EE_{\pi^{\star}_{\ip}(\eta_1)}\left[\lnorm{Z(0) - Z(1)}{2}^2\right],
        \]
        and thus, $\EE_{\pi^{\star}_{\ip}(\eta_2)}\left[\lnorm{Z(0) - Z(1)}{2}^2\right] \leq \EE_{\pi^{\star}_{\ip}(\eta_1)}\left[\lnorm{Z(0) - Z(1)}{2}^2\right]$.

        Then, combining this with $\eqref{eq:eq(1} \leq \eqref{eq:eq(2}$, we get
        \[
        \EE_{\pi^{\star}_{\ip}(\eta_1)}\left[h(Y(0), Y(1))\right] \leq \EE_{\pi^{\star}_{\ip}(\eta_2)}\left[h(Y(0), Y(1))\right],
        \]
        i.e., $V_{\ip}(\eta_1)\leq V_{\ip}(\eta_2)$.

        Next, we will prove the continuity by showing that $V_{\ip}(\eta)$ is left and right-continuous with respect to $\eta$.

        For left-continuity, consider any $\eta_0 \in (0, \infty)$. By Theorem 4.1 (and its proof) in \cite{villani2009optimal}, $\Pi_{\ip}$ is tight and closed in the weak topology. Then, consider a sequence $(\eta_k, k \geq 1)$ such that $\eta_k \rightarrow \eta_{0 -}$. Applying the Prokhorov's theorem (Theorem 5.1, \cite{billingsley2013convergence}), we deduce that there is a subsequence of $(\pi_{\ip}^\star(\eta_k), k \geq 1)$ that weakly converges to a measure $\pi^\star_\infty \in \calP((\calY \times \calZ)^2)$. Without loss of generality, we assume the subsequence is $(\pi_{\ip}^\star(\eta_k), k \geq 1)$ itself.

        Again, since $\calY, \calZ$ are compact, then $c, \lnorm{\cdot}{2}^2$ are bounded functions confined on the corresponding domains. Therefore, we have
        \[
            \lim_{k \rightarrow \infty} \EE_{\pi_{\ip}^\star(\eta_k)}\left[h(Y(0), Y(1)) + \eta_0 \lnorm{Z(0) - Z(1)}{2}^2\right] = \EE_{\pi^\star_{\infty}}\left[h(Y(0), Y(1)) + \eta_0 \lnorm{Z(0) - Z(1)}{2}^2\right].
        \]

        On the other hand, we have
        \[
            \lim_{k\rightarrow \infty}\ (\eta_k - \eta_0) \times \EE_{\pi_{\ip}^\star(\eta_k)}\left[\lnorm{Z(0) - Z(1)}{2}^2\right] = 0.
        \]
        
        Combine the above two limits together, we get
        \[
            \lim_{k \rightarrow \infty} \EE_{\pi^\star(\eta_k)}\left[h(Y(0), Y(1)) + \eta_k \lnorm{Z(0) - Z(1)}{2}^2\right] = \EE_{\pi^\star_{\infty}}\left[h(Y(0), Y(1)) + \eta_0 \lnorm{Z(0) - Z(1)}{2}^2\right].
        \]
       
        Since we have
        \begin{align*}
            & \EE_{\pi_{\ip}^\star(\eta_k)}\left[h(Y(0), Y(1)) + \eta_k \lnorm{Z(0) - Z(1)}{2}^2\right]\\
            \leq & \EE_{\pi_{\ip}^\star(\eta_0)}\left[h(Y(0), Y(1)) + \eta_0 \lnorm{Z(0) - Z(1)}{2}^2\right] \\
            \leq & \EE_{\pi^\star_{\infty}}\left[h(Y(0), Y(1)) + \eta_0 \lnorm{Z(0) - Z(1)}{2}^2\right],
        \end{align*}
        thus, we get 
        \[
            \EE_{\pi_{\ip}^\star(\eta_0)}\left[h(Y(0), Y(1)) + \eta_0 \lnorm{Z(0) - Z(1)}{2}^2\right] = \EE_{\pi^\star_{\infty}}\left[h(Y(0), Y(1)) + \eta_0 \lnorm{Z(0) - Z(1)}{2}^2\right],
        \]
        implying $\pi^\star_{\infty} = \pi_{\ip}^\star(\eta_0)$ due to the uniqueness of the minimizer (Proposition~\ref{prop:existuni}). Hence, we get $\lim_{k \rightarrow \infty} V_{\ip}(\eta_k) = V_{\ip}(\eta_0)$.

        As a result, for any sequence $(\eta_l, l \geq 1)$ such that $\lim_{l \rightarrow \infty}\eta_l = \eta_0$, we can find a subsequence of $(\eta_l, l \geq 1)$, denoted as $(\eta_{l_k}, k \geq 1)$, such that $\lim_{k \rightarrow \infty} V_{\ip}(\eta_{l_k}) = V_{\ip}(\eta_0)$, implying that $\lim_{\eta \rightarrow \eta_0} V_{\ip}(\eta) = V_{\ip}(\eta_0) $.

        For the right-continuity, similarly, we can similarly take a sequence $(\eta_k, k \geq 1)$ such that $\eta_k \rightarrow \eta_{0 +}$, such that $\pi_{\ip}^\star(\eta_k)$ converges to $\pi^\star_{\infty} \in \calP((\calY \times \calZ)^2)$ in the weak topology. 

        For any $\pi' \in \calP((\calY \times \calZ)^2)$, by definition of $\pi_{\ip}^\star(\eta_k)$, we have
        \[
            \EE_{\pi_{\ip}^\star(\eta_k)}\left[h(Y(0), Y(1)) + \eta_k \lnorm{Z(0) - Z(1)}{2}^2\right] \leq \EE_{\pi'}\left[h(Y(0), Y(1)) + \eta_k \lnorm{Z(0) - Z(1)}{2}^2\right].
        \]

        Send $k \rightarrow \infty$, similarly as above, we have
        \[
            \lim_{k \rightarrow \infty} \EE_{\pi_{\ip}^\star(\eta_k)}\left[h(Y(0), Y(1)) + \eta_k \lnorm{Z(0) - Z(1)}{2}^2\right] = \EE_{\pi^\star_{\infty}}\left[h(Y(0), Y(1)) + \eta_0 \lnorm{Z(0) - Z(1)}{2}^2\right],
        \]
        thus
        \[
            \EE_{\pi^\star_{\infty}}\left[h(Y(0), Y(1)) + \eta_0 \lnorm{Z(0) - Z(1)}{2}^2\right] \leq \EE_{\pi'}\left[h(Y(0), Y(1)) + \eta_0 \lnorm{Z(0) - Z(1)}{2}^2\right],
        \]
        which implies that $\pi^\star_{\infty} = \pi_{\ip}^\star(\eta_0)$, due to the uniqueness of the minimizer (Proposition~\ref{prop:existuni}).

        As a result, we get $\lim_{k\rightarrow\infty} V_{\ip}(\eta_k) = V_{\ip}(\eta_0)$. Then, using similar reasoning as in the proof for $\lim_{\eta \rightarrow \eta_{0 -}}V_{\ip}(\eta) = V_{\ip}(\eta_0)$, we conclude that $\lim_{\eta \rightarrow \eta_{0 +}}V_{\ip}(\eta) = V_{\ip}(\eta_0)$ .

        \item For $\eta = 0$, since $V_{\uu} \leq V_{\ip}(0)$, it remains to show that $V_{\ip}(0) \leq V_{\uu}$. 
        
        By the gluing lemma in Chapter 1 of \cite{villani2009optimal}, we can ``glue'' the three distributions $\pi_{\uu}^{\star}, P_{Y(0), Z}, P_{Y(1), Z}$, constructing a coupling $\pi \in \calP (\calY \times \calZ)^2$, such that
        \[
            \pi_{Y(0), Y(1)} = \pi_{\uu}^{\star}, \pi_{Y(0), Z(0)} = P_{Y(0), Z}, \pi_{Y(1), Z(1)} = P_{Y(1), Z}.
        \]

        Thus, $\pi \in \Pi_{\ip}$, which results in $V_{\ip}(0) \leq \EE_{\pi} [h(Y(0), Y(1)] = \EE_{\pi_{\uu}^{\star}}[h(Y(0), Y(1)] = V_{\uu}$.

        For $\eta \rightarrow \infty$, by (i), we have
        \[
            \EE_{\pi^\star_{\ip}(\eta)}\left[h(Y(0), Y(1)) + \eta \lnorm{Z(0) - Z(1)}{2}^2\right] \leq V_{\cc} < \infty \quad \forall \eta \geq 0.
        \]
        Therefore, we get 
        \[
        \lim_{\eta \rightarrow \infty} \EE_{\pi_{\ip}^\star(\eta)}\left[\lnorm{Z(0) - Z(1))}{2}^2\right] = 0.
        \]
    
        Since $(\calY \times \calZ)^2$ is compact, then take a sequence $(\eta_k, k \geq 1)$ such that $\lim_{k\rightarrow \infty} \eta_k = \infty$, and apply the Prokhorov's theorem, we get: there is a subsequence of $(\pi_{\ip}^\star(\eta_k), k \geq 1)$ that weakly converges to a measure $\pi^\star_\infty \in \calP(\calY \times \calZ)^2)$. Without loss of generality, we assume the subsequence is $(\pi_{\ip}^\star(\eta_k), k \geq 1)$ itself.
        
        Since $\lnorm{Z(0) - Z(1)}{2}^2$ is a bounded function on $\calZ ^2$, then we get $\EE_{\pi^\star_\infty}[\lnorm{Z(0) - Z(1))}{2}^2] = 0$, i.e., $Z(0) = Z(1)$ $\pi^\star_\infty$-almost surely. Therefore, $\pi^\star_\infty \in \Pi_{\text{c}}'$, where $\Pi_{\text{c}}'$ is defined in \eqref{eq:Vc2}, and thus $V_{\text{c}} \leq \EE_{\pi^\star_\infty}[h(Y(0), Y(1))]$. 
    
        As a result, we get 
        \[
        \EE_{\pi_{\ip}^\star(\eta_k)}[h(Y(0), Y(1))] \leq \EE_{\pi_{\ip}^\star(\eta_k)}\left[h(Y(0), Y(1)) + \eta_k \lnorm{Z(0) - Z(1)}{2}^2\right] \leq V_{\cc} \leq \EE_{\pi^\star_\infty}[h(Y(0), Y(1))].
        \]
        Since $\lim_{k\rightarrow \infty}\EE_{\pi_{\ip}^\star(\eta_k)}[h(Y(0), Y(1))] = \EE_{\pi^\star_\infty}[h(Y(0), Y(1))]$, then as $k$ approaches infinity, the above inequalities become equations, implying that
        \[
            \lim_{k\rightarrow \infty} V_{\ip}(\eta_k) = V_{\cc}.
        \]
        
        As a result, for any sequence $(\eta_l, l \geq 1)$ such that $\lim_{l \rightarrow \infty}\eta_l = \infty$, we can find a subsequence of $(\eta_l, l \geq 1)$, denoted as $(\eta_{l_k}, k \geq 1)$, such that $\lim_{k\rightarrow \infty} V_{\ip}(\eta_{l_k}) = V_{\cc}$, implying that $\lim_{\eta \rightarrow \infty} V_{\ip}(\eta) = V_{\cc}$.
    \end{enumerate}
\end{proof}

\subsection{Proof of Proposition~\ref{prop:consistency}}\label{sec:pf_consistency}
\begin{proof}[Proof of Proposition~\ref{prop:consistency}]
    Under Assumption~\ref{a:randomizedW}, as $n,m \rightarrow \infty$, we have $P_{n,Y(0), Z} \overset{d}{\rightarrow} P_{Y(0), Z}$ in the weak topology almost surely, and also $P_{m,Y(1), Z} \overset{d}{\rightarrow} P_{Y(1), Z}$ a.s. Then, by Assumption~\ref{a:cost} that $h$ is continuous, and Theorem 5.20 in \cite{villani2009optimal}, we get: up to extraction of a subsequence, $\pi_{\ip, n, m}^{\star}(\eta)$ almost surely weakly converges to an optimal coupling of the OT problem
    \[
    \min_{\pi \in \Pi_{\text{ip}}} \EE_{\pi}\left[h(Y(0), Y(1)) + \eta \lnorm{Z(0) - Z(1)}{2}^2\right].
    \]
    By Proposition~\ref{prop:existuni} that the optimal coupling $\pi_{\ip}^{\star}(\eta)$ of the above problem is unique, we get the subsequence of $\pi_{\ip, n, m}^{\star}(\eta) \overset{d}{\rightarrow} \pi_{\ip}^{\star}(\eta)$ a.s., and thus $\pi_{\ip, n, m}^{\star}(\eta) \overset{d}{\rightarrow} \pi_{\ip}^{\star}(\eta)$ a.s. (for every subsequence, we can extract the converging subsubsequence to converge weakly to $\pi_{\ip}^{\star}(\eta)$). As a result, $V_{\ip, n, m}(\eta)$ converges to $V_{\ip}(\eta)$ almost surely.
\end{proof}

\subsection{Proof of Theorem~\ref{theo:estimation.rate}}\label{sec:pf_complexity}
In this section, we provide the proof of Theorem~\ref{theo:estimation.rate}, which extends Proposition~14 in \cite{manole2024plugin} to general quadratic functions. To proceed, we first establish the following result as a foundation. Note that in the following discussion, we use $X$ (and $x$) to denote source random variable and $\bar X$ (and $\bar x$) to denote the target random variable, such that $X, \bar X \in \calX \subseteq \RR^d$ where $\calX$ is convex.

\begin{proposition}[Stability bound of quadratic optimal transport]\label{prop:stability}
    Let $(X_i, 1\leq i \leq n)$ be i.i.d.~samples following distribution $P$ and $(\bar X_j, 1\leq j\leq m)$ be i.i.d.~samples following distribution $Q$. Let $P_n = \frac{1}{n}\sum_{i=1}^n \delta_{X_i}$, $Q_m = \frac{1}{m}\sum_{j=1}^m \delta_{\bar X_j}$, and suppose that $\hat \pi = (\hat \pi_{ij}, 1\leq i \leq n, 1 \leq j \leq m)$ is a coupling of $P_n, Q_m$. Additionally, suppose that $T$ is a coupling map such that $T_{\#}P = Q$.

    Let the cost function defined by $h_B(x,\bar x) = \mathbf{x}^{\top} B \mathbf{x}$, where $\mathbf{x} = (x, \bar x)^{\top}$, and 
    \[
    B = 
    \begin{bmatrix}
    B_{11} & -B_{12} \\
    -B_{12}^{\top} & B_{22}
    \end{bmatrix},\qquad B_{11}, B_{22},  B_{12}\in \RR^{d \times d}.
    \]
    Then, we have
    \[
        \EE\left[\sum_{i=1}^n \sum_{j=1}^m \hat \pi_{ij} h_B(X_i, \bar X_j) \right] - \EE_{P}\left[h_B(X, T(X))\right] = -\EE\left[\sum_{i=1}^n \sum_{j=1}^m \hat \pi_{ij} \left[2 X_i^{\top} B_{12}(\bar X_j - T(X_i))\right]\right].
    \]
\end{proposition}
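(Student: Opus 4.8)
The plan is a direct computation that exploits only the quadratic form of $h_B$, the marginal constraints on $\hat\pi$, and the pushforward identity $T_{\#}P=Q$; no optimality of $\hat\pi$ or $T$ is needed. First I would expand
\[
h_B(x,\bar x)=x^{\top}B_{11}x-2\,x^{\top}B_{12}\bar x+\bar x^{\top}B_{22}\bar x ,
\]
so that
\[
\sum_{i,j}\hat\pi_{ij}\,h_B(X_i,\bar X_j)=\sum_{i,j}\hat\pi_{ij}\,X_i^{\top}B_{11}X_i-2\sum_{i,j}\hat\pi_{ij}\,X_i^{\top}B_{12}\bar X_j+\sum_{i,j}\hat\pi_{ij}\,\bar X_j^{\top}B_{22}\bar X_j .
\]
Because $\hat\pi$ is a coupling of $P_n$ and $Q_m$, its row sums equal $1/n$ and its column sums equal $1/m$, so the first and third sums collapse to $\tfrac1n\sum_i X_i^{\top}B_{11}X_i$ and $\tfrac1m\sum_j \bar X_j^{\top}B_{22}\bar X_j$; these depend on the data but not on the particular coupling.

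Next I would take expectations. Since the $X_i$ are i.i.d.\ $\sim P$ and the $\bar X_j$ are i.i.d.\ $\sim Q$, and $T_{\#}P=Q$, we get $\EE[\tfrac1n\sum_i X_i^{\top}B_{11}X_i]=\EE_{P}[X^{\top}B_{11}X]$ and $\EE[\tfrac1m\sum_j \bar X_j^{\top}B_{22}\bar X_j]=\EE_{Q}[\bar X^{\top}B_{22}\bar X]=\EE_{P}[T(X)^{\top}B_{22}T(X)]$. Expanding the target term gives $\EE_{P}[h_B(X,T(X))]=\EE_{P}[X^{\top}B_{11}X]-2\,\EE_{P}[X^{\top}B_{12}T(X)]+\EE_{P}[T(X)^{\top}B_{22}T(X)]$. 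Subtracting, the $B_{11}$ and $B_{22}$ contributions cancel exactly, leaving
\[
\EE\Big[\sum_{i,j}\hat\pi_{ij}\,h_B(X_i,\bar X_j)\Big]-\EE_{P}[h_B(X,T(X))]=-2\,\EE\Big[\sum_{i,j}\hat\pi_{ij}\,X_i^{\top}B_{12}\bar X_j\Big]+2\,\EE_{P}[X^{\top}B_{12}T(X)] .
\]
Finally I would rewrite the last term with the row-sum constraint once more, $\EE_{P}[X^{\top}B_{12}T(X)]=\EE\big[\tfrac1n\sum_i X_i^{\top}B_{12}T(X_i)\big]=\EE\big[\sum_{i,j}\hat\pi_{ij}\,X_i^{\top}B_{12}T(X_i)\big]$, so the two surviving terms merge into a single expectation over $\sum_{i,j}\hat\pi_{ij}$ and produce exactly $-\EE\big[\sum_{i,j}\hat\pi_{ij}\,(2X_i^{\top}B_{12}(\bar X_j-T(X_i)))\big]$, which is the claimed identity.

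The argument is essentially bookkeeping, so I do not expect a serious obstacle; the only point needing care is integrability — I would invoke the finite fourth-moment assumptions on $Y(0),Y(1),Z$ (equivalently on $X,\bar X$) in force in Theorem~\ref{theo:estimation.rate}, together with $\EE_{P}[\lnorm{T(X)}{2}^2]=\EE_{Q}[\lnorm{\bar X}{2}^2]<\infty$, to justify interchanging expectation with the finite sums and splitting the quadratic forms into their pieces. Conceptually, this is the exact ``first-order'' decomposition on which the subsequent stability bound rests: the residual $\bar X_j-T(X_i)$ records how far the coupling $\hat\pi$ deviates from routing mass along the map $T$, and bounding this residual (using smoothness/strong convexity of the Brenier potential and the empirical Wasserstein rate) will yield the convergence rate.
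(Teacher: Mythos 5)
Your proof is correct, and it reaches the paper's identity by the same underlying mechanism (marginal constraints on $\hat\pi$, the pushforward $T_{\#}P=Q$, and a final row-sum rewrite of $\EE_P[X^\top B_{12}T(X)]$), but your bookkeeping is cleaner. The paper first performs a completion-of-square expansion of $h_B(X_i,\bar X_j)$ around $T(X_i)$, producing a residual quadratic $(\bar X_j - T(X_i))^\top B_{22}(\bar X_j - T(X_i))$ and a cross term that it must then show cancel in a second algebraic step. You instead observe at the outset that the $B_{11}$ and $B_{22}$ contributions to $h_B$ depend only on one variable each, so under a coupling they collapse to marginal averages; once expectations are taken these exactly reproduce the $B_{11}$ and $B_{22}$ pieces of $\EE_P[h_B(X,T(X))]$ and cancel, leaving only the $B_{12}$ cross term. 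This avoids the intermediate square-completion and, as a side benefit, does not rely on $B_{22}$ being symmetric (which the paper's expansion implicitly uses when it writes $(\bar X_j+T(X_i))^\top B_{22}(\bar X_j-T(X_i)) = \bar X_j^\top B_{22}\bar X_j - T(X_i)^\top B_{22}T(X_i)$). Your remark on integrability — needing second moments so that $X^\top B_{11}X$, $\bar X^\top B_{22}\bar X$, $X^\top B_{12}T(X)$ are integrable and the finite-sum/expectation interchange is valid — is the only hypothesis the statement leaves implicit, and it is indeed in force in the downstream theorem. No gaps.
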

\begin{proof}[Proof of Proposition~\ref{prop:stability}]
    We first expand the term $h_B(X_i, \bar X_j)$ as follows: 
    \begin{align*}
        &h_B(X_i, \bar X_j) \\
        =& X_{i}^{\top} B_{11} X_{i} - 2 X_{i}^{\top} B_{12} \bar X_j + \bar X_j^{\top} B_{22} \bar X_j \\
        =& \underbrace{X_{i}^{\top} B_{11} X_{i}}_{(1)} - \underbrace{2 X_{i}^{\top} B_{12} T(X_i)}_{(2)} - 2X_{i}^{\top} B_{12} (\bar X_j - T(X_i))\\
        &+ \underbrace{T(X_i)^{\top} B_{22} T(X_i)}_{(3)} + (\bar X_j + T(X_i))^{\top} B_{22} (\bar X_j - T(X_i))\\
        =& \underbrace{h_B(X_i, T(X_i))}_{(1) + (2) + (3)} + (\bar X_j - T(X_i))^{\top} B_{22} (\bar X_j - T(X_i)) + 2 (-X_i^{\top} B_{12} + T(X_i)^{\top}B_{22})(\bar X_j - T(X_i)).
    \end{align*}

    Note that $\sum_{j=1}^m \hat \pi_{ij} = \frac{1}{n} \forall i$ since $\hat \pi$ is a coupling between $P_n, Q_m$, thus 
    \[
        \EE\left[\sum_{i=1}^n \sum_{j=1}^m \hat \pi_{ij} h_B(X_i, T(X_i))\right] - \EE_{P}\left[h_B(X, T(X))\right] = 0.
    \]
    Therefore, we get
    \begin{align}
        &\EE\left[\sum_{i=1}^n \sum_{j=1}^m \hat \pi_{ij} h_B(X_i, \bar X_j) \right] - \EE_{P}\left[h_B(X, T(X))\right]\notag\\
        =& \EE\left[\sum_{i=1}^n \sum_{j=1}^m \hat \pi_{ij} \left[(\bar X_j - T(X_i))^{\top} B_{22} (\bar X_j - T(X_i)) + 2 (-X_i^{\top} B_{12} + T(X_i)^{\top}B_{22})(\bar X_j - T(X_i))\right]\right]\label{eq:eq8_}
    \end{align}

    Also, note that 
    \begin{align*}
         2T(X_i)^{\top}B_{22}(\bar X_j - T(X_i)) = - (\bar X_j - T(X_i))^{\top}B_{22}(\bar X_j - T(X_i))  + \bar X_j B_{22} \bar X_j - T(X_i)^{\top} B_{22} T(X_i).
    \end{align*}
    Thus, continuing from Equation~\eqref{eq:eq8_}, we get
    \begin{align}
        &\EE\left[\sum_{i=1}^n \sum_{j=1}^m \hat \pi_{ij} h_B(X_i, \bar X_j) \right] - \EE_{P}\left[h_B(X, T(X))\right]\notag\\
        =&\EE\left[\sum_{i=1}^n \sum_{j=1}^m \hat \pi_{ij} \left(-2 X_i^{\top} B_{12}(\bar X_j - T(X_i)) + \bar X_j B_{22} \bar X_j - T(X_i)^{\top} B_{22} T(X_i)\right)\right]. \label{eq:eq9_}
    \end{align}

    Again, since $\sum_{j=1}^m \hat \pi_{ij} = \frac{1}{n} \forall i$ and $T_{\#}P = Q$, we have
    \begin{align*}
        &\EE\left[\sum_{i=1}^n \sum_{j=1}^m \hat \pi_{ij} \left[\bar X_j B_{22} \bar X_j - T(X_i)^{\top} B_{22} T(X_i)\right]\right]\\
        =& \EE\left[\frac{1}{m}\sum_{j=1}^m \bar X_j B_{22} \bar X_j - \frac{1}{n}\sum_{i=1}^n  T(X_i)^{\top} B_{22} T(X_i)\right]\\
        =& \EE_{Q}\left[\bar X B_{22} \bar X\right] - \EE_{P}\left[T(X)^{\top} B_{22} T(X)\right]\\
        =& 0.
    \end{align*}
    Then, from Equation~\eqref{eq:eq9_}, we get
    \begin{align}
        \EE\left[\sum_{i=1}^n \sum_{j=1}^m \hat \pi_{ij} h_B(X_i, \bar X_j) \right] - \EE_{P}\left[h_B(X, T(X))\right] = - \EE\left[\sum_{i=1}^n \sum_{j=1}^m \hat \pi_{ij} \left[2 X_i^{\top} B_{12}(\bar X_j - T(X_i))\right]\right].
    \end{align}
\end{proof}


When the the cost function is defined by $h_B(x,\bar x) = \mathbf{x}^{\top} B \mathbf{x}$, where $\mathbf{x} = (x, \bar x)^{\top}$, and 
\[
B = 
\begin{bmatrix}
B_{11} & -B_{12} \\
-B_{12}^{\top} & B_{22}
\end{bmatrix},\qquad B_{11}, B_{22}, B_{12} \in \RR^{d \times d},
\]
the optimal transport problem is actually equivalent to the problem that uses the cost function $-2x^{\top} B_{12} \bar x$ since the remaining parts in $h_{B}$ are determined by the marginals. Then, if we consider a linear transformation $x \rightarrow B^{\top}_{12} x \Let \tilde x$, then the cost function becomes $-2\tilde x^\top \bar x$, which possesses a Brenier map as the optimal solution (see Section~\ref{sec:brenier_thm}). To utilize the convexity of the corresponding Brenier potential, we present the following result. 

\begin{proposition}[Coordinate-shift optimal transport]\label{prop:shift}
    In the same setting as Proposition~\ref{prop:stability}, additionally,
    \begin{enumerate}[label=(\roman*).]
        \item Assume that $P,Q$ have densities on $\calX$.
        \item For the Monge's problem
        \[
            \min_{\tilde T: (\tilde T B_{12}^{\top})_{\#}P = Q} \EE_{{B_{12}^{\top}}_{\#}P}[h_{\tilde B}(X, \tilde T(X))],
        \]
        where the cost function is defined by $h_{\tilde B} (x,\bar x) = \mathbf{x}^{\top} \tilde B \mathbf{x}$, with $\mathbf{x} = (x, \bar x)^{\top}$, and 
        \[
        \tilde B = 
        \begin{bmatrix}
        B_{12}^{-1}B_{11}B_{12}^{-\top} & -I_{d} \\
        -I_d & B_{22}
        \end{bmatrix}.
        \]
        (Here, $\tilde T B_{12}^{\top}$ is a map defined by $x \rightarrow \tilde T (B_{12}^{\top}x)$, and $B_{12}^{\top}$ is a linear map defined by $x \rightarrow B_{12}^{\top}x$), assume that the Brenier map of $\tilde T^{\star}$, and the corresponding Brenier potential $\tilde \varphi$, satisfies that $\tilde \varphi \in C^2 (\calX)$. Further, there exists a constant $\lambda > 0$, such that
        \[
            \frac{1}{\lambda}I_{d} \preceq \nabla^2 \tilde \varphi(x) \preceq \lambda I_{d}\qquad \forall x: B_{12}^{-T} x \in \textup{supp}(P).
        \]
        
        \item Assume that $B_{12}$ is non-singular.
    \end{enumerate}
    Then, we have
    \begin{enumerate}[label=(\alph*).]
        \item There is a $P$-a.s.~unique solution $T^{\star} = \argmin_{T: T_{\#}P = Q} \EE_{P}[h_B(X, T(X))]$. Further, $T^{\star}(x) = \tilde T^{\star}(B_{12}^{\top} x)$ for $P$-a.s. $x \in \textup{supp}(P)$.
        
        \item Let $(\hat \pi_{ij}, i \in [n], j \in [m])$ be the solution of 
        \begin{align*}
            \Wass_{h_B}(P_n, Q_m) := &\min_{\hat \pi \in \Pi_{n,m}} \sum_{i=1}^n \sum_{j=1}^m \hat \pi_{ij} h_B(X_i, \bar X_j)\\
            \text{where} \quad&\Pi_{n,m} = \left\{\hat \pi \in \mathcal{P}(\calX\times \calX): \hat \pi_{X} = P_{n}, \hat \pi_{\bar X} = Q_{m}\right\}.
        \end{align*}

        Then, we have
        \[
        \EE\left[\sum_{i=1}^n \sum_{j=1}^m \hat \pi_{ij}  \lnorm{\bar X_j - T^{\star}(X_i)}{2}^2\right] \leq \lambda \times \EE\left[\Wass_{h_B}(P_n, Q_m) - \Wass_{h_B}(P, Q)\right].
        \]
    \end{enumerate}
    
\end{proposition}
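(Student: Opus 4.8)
The plan is to establish part~(a) by a linear change of variables that reduces the $h_B$-transport to a quadratic one, and part~(b) by a strong-convexity inequality for the conjugate of the Brenier potential, combined with Proposition~\ref{prop:stability}. For part~(a): since $h_B(x,\bar x)=x^\top B_{11}x-2x^\top B_{12}\bar x+\bar x^\top B_{22}\bar x$, the first and third summands contribute only marginal-determined constants under any $\pi$ with $\pi_X=P$, $\pi_{\bar X}=Q$, so minimizing $\EE_\pi[h_B]$ over such $\pi$ is equivalent to minimizing $\EE_\pi[-2x^\top B_{12}\bar x]$. The linear bijection $\tilde x=B_{12}^\top x$ (a bijection because $B_{12}$ is nonsingular) pushes $P$ to ${B_{12}^\top}_\#P$ and turns the cross term into $-2\tilde x^\top\bar x$; a direct substitution verifies $h_B(x,\bar x)=h_{\tilde B}(B_{12}^\top x,\bar x)$, so the problem is precisely the $h_{\tilde B}$-transport of item~(ii) between ${B_{12}^\top}_\#P$ and $Q$. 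Since $P$ has a density and $B_{12}$ is nonsingular, ${B_{12}^\top}_\#P$ has a density, and Theorem~\ref{thm:brenier} provides the unique optimal map $\tilde T^\star=\nabla\tilde\varphi$ with $\tilde\varphi$ convex; pulling back yields the $P$-a.s.\ unique solution $T^\star(x)=\tilde T^\star(B_{12}^\top x)=\nabla\tilde\varphi(B_{12}^\top x)$ of~(a), and in particular $\Wass_{h_B}(P,Q)=\EE_P[h_B(X,T^\star(X))]$.

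For part~(b), set $\tilde X_i:=B_{12}^\top X_i$, so that $T^\star(X_i)=\nabla\tilde\varphi(\tilde X_i)$ by part~(a). The key is to invoke, instead of the smoothness of $\tilde\varphi$, the dual fact that its conjugate $\tilde\varphi^*$ is $1/\lambda$-strongly convex (a consequence of $\nabla^2\tilde\varphi\preceq\lambda I$), together with $\nabla\tilde\varphi^*\circ\nabla\tilde\varphi=\mathrm{id}$. Applying the strong-convexity inequality for $\tilde\varphi^*$ at the points $p=\nabla\tilde\varphi(\tilde X_i)=T^\star(X_i)$ (whence $\nabla\tilde\varphi^*(p)=\tilde X_i$) and $q=\bar X_j$ gives, for every $i,j$,
\[
\tfrac{1}{2\lambda}\,\lnorm{\bar X_j-T^\star(X_i)}{2}^2\;\le\;\tilde\varphi^*(\bar X_j)-\tilde\varphi^*\big(T^\star(X_i)\big)-\iprod{\tilde X_i}{\bar X_j-T^\star(X_i)}.
\]
Summing against $\hat\pi_{ij}$ and using that $\hat\pi$ has row sums $1/n$ and column sums $1/m$, the conjugate terms reduce to $\tfrac1m\sum_j\tilde\varphi^*(\bar X_j)-\tfrac1n\sum_i\tilde\varphi^*(T^\star(X_i))$.

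To conclude I take expectations. Because the $\bar X_j$ are i.i.d.\ from $Q$, the $X_i$ are i.i.d.\ from $P$, and $T^\star_\#P=Q$, we have $\EE[\tfrac1m\sum_j\tilde\varphi^*(\bar X_j)]=\EE_Q[\tilde\varphi^*]=\EE_P[\tilde\varphi^*(T^\star(X))]=\EE[\tfrac1n\sum_i\tilde\varphi^*(T^\star(X_i))]$, so these terms cancel in expectation, their integrability against $Q$ being guaranteed by the compactness/moment hypotheses. For the remaining term, $\iprod{\tilde X_i}{\bar X_j-T^\star(X_i)}=X_i^\top B_{12}(\bar X_j-T^\star(X_i))$, and Proposition~\ref{prop:stability} with $T=T^\star$ and the optimal $\hat\pi$, together with $\EE[\sum_{i,j}\hat\pi_{ij}h_B(X_i,\bar X_j)]=\EE[\Wass_{h_B}(P_n,Q_m)]$ and $\EE_P[h_B(X,T^\star(X))]=\Wass_{h_B}(P,Q)$, yields
\[
-\,\EE\Big[\sum_{i,j}\hat\pi_{ij}\,\iprod{\tilde X_i}{\bar X_j-T^\star(X_i)}\Big]=\tfrac12\,\EE\big[\Wass_{h_B}(P_n,Q_m)-\Wass_{h_B}(P,Q)\big].
\]
Combining the three displays, $\tfrac{1}{2\lambda}\EE[\sum_{i,j}\hat\pi_{ij}\lnorm{\bar X_j-T^\star(X_i)}{2}^2]\le\tfrac12\EE[\Wass_{h_B}(P_n,Q_m)-\Wass_{h_B}(P,Q)]$, and multiplying by $2\lambda$ gives~(b).

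The main obstacle is conceptual: the naive route---bounding $\lnorm{\bar X_j-T^\star(X_i)}{2}^2$ through $\lambda$-smoothness or co-coercivity of $\tilde\varphi$---generates cross inner-product terms that do not obviously telescope or cancel, whereas passing to $1/\lambda$-strong convexity of the conjugate $\tilde\varphi^*$ both produces the inequality in the direction we need and makes the unwanted pieces cancel after taking expectations via the push-forward identity $(\nabla\tilde\varphi)_\#\,{B_{12}^\top}_\#P=Q$. The rest is technical bookkeeping: checking that the strong-convexity inequality applies legitimately---i.e., that $\tilde\varphi^*$ is $1/\lambda$-strongly convex on a convex set containing the segment from $\bar X_j$ to $T^\star(X_i)$, both lying in $\textup{supp}(Q)\subseteq\calX$, using that the Hessian bound on $\tilde\varphi$ transfers to $\nabla^2\tilde\varphi^*\succeq\tfrac1\lambda I$ on $\nabla\tilde\varphi(\textup{supp}({B_{12}^\top}_\#P))$---and that the relevant quadratic quantities are integrable.
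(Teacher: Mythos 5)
Your proof is correct and follows essentially the same route as the paper's: part (a) via the linear change of variables $\tilde x = B_{12}^\top x$ together with Brenier's theorem, and part (b) by passing to the $1/\lambda$-strong convexity of the conjugate $\tilde\varphi^*$, using $\nabla\tilde\varphi^*\circ\nabla\tilde\varphi = \mathrm{id}$ to rewrite the cross term as $X_i^\top B_{12}(\bar X_j - T^\star(X_i))$, cancelling the conjugate-potential terms in expectation via $T^\star_\#P=Q$, and invoking Proposition~\ref{prop:stability} to conclude. The bookkeeping with the factor of $2$ is handled correctly and the final bound matches.
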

    
\begin{proof}[Proof of Proposition~\ref{prop:shift}]
\begin{enumerate}[label=(\alph*).]
    \item Note that $h_{\tilde B}(B_{12}^\top x, \bar x) = h_{B}(x, \bar x)$. Then, we have
    \begin{align*}
        &\min_{\tilde T: (\tilde T B_{12}^{\top})_{\#}P = Q} \EE_{{B_{12}^{\top}}_{\#}P}[h_{\tilde B}(X, \tilde T(X))]\\
        =& \min_{\tilde T: (\tilde T B_{12}^{\top})_{\#}P = Q} \EE_{P}[h_{\tilde B}(B_{12}^\top X, \tilde T(B_{12}^\top X))] &(\text{change of measure: ${B_{12}^{\top}}_{\#}P \rightarrow P$})\\
        =& \min_{\tilde T: (\tilde T B_{12}^{\top})_{\#}P = Q} \EE_{P}[h_B(X, \tilde T(B_{12}^\top X))]&(\text{due to:}~h_{\tilde B}(B_{12}^\top x, \bar x) = h_B(x, \bar x))\\
        =& \min_{T: T_{\#}P = Q} \EE_{P}[h_B(X, T(X))]&(\text{since $B_{12}$ is invertible}).
    \end{align*}
    The above deduction implies that there is a one-to-one correspondence between $\tilde T: (\tilde T B_{12}^{\top})_{\#}P = Q$ and $T: T_{\#}P = Q$, such that the corresponding maps share the same value of the objective functions equipped with $h_{\tilde B}$ and $h_{B}$, respectively. Therefore, we get $T^{\star}(x) = \tilde T^{\star}(B_{12}^{\top} x)$ is the $P$-a.s.~unique solution of $\argmin_{T: T_{\#}P = Q} \EE_{P}[h_B(X, T(X))]$ since $\tilde T^{\star}$ is the unique Brenier map of the corresponding Monge's problem.
    

    \item 
    By (a), we can write
    \begin{align}
        &X_i^{\top} B_{12}(\bar X_j - T(X_i)) \notag\\
        =& X_i^{\top} B_{12}(\bar X_j - \tilde T(B_{12}^{\top}X_i))\notag\\
        =& \nabla \tilde \varphi^* \left(\tilde T(B_{12}^{\top}X_i)\right)^{\top}\left(\bar X_j - \tilde T(B_{12}^{\top}X_i)\right), \label{eq:eq13_}
    \end{align}
    where $\tilde \varphi^* $ is the convex conjugate of $\tilde \varphi$ and $\nabla \tilde \varphi^* \left(\tilde T(x)\right) = x~\forall x: B_{12}^{-T} x \in \textup{supp}(P)$ (Theorem~\ref{thm:brenier}).

    By (c) and Theorem 4.2.2 in \cite{hiriart-urruty2004fundamentals}, we have
    \[
        \frac{1}{\lambda}I_{d} \preceq \nabla^2 \tilde \varphi^*(x) \preceq \lambda I_{d}\qquad \forall x \in \textup{supp}(Q).
    \]
    Then apply the second order Taylor expansion, we get
    \begin{align*}
        \frac{1}{2\lambda} \lnorm{\bar X_j -\tilde T(B_{12}^{\top}X_i)}{2}^2 &\leq \tilde \varphi^*(\bar X_j) - \tilde \varphi^*(\tilde T(B_{12}^{\top}X_i)) - \nabla \tilde \varphi^* (\tilde T(B_{12}^{\top}X_i))^{\top}(\bar X_j - \tilde T(B_{12}^{\top}X_i)).
    \end{align*}
    Thus, by \eqref{eq:eq13_}, we get
    \begin{align*}
        \frac{1}{2\lambda} \lnorm{\bar X_j -T(X_i)}{2}^2 &\leq \tilde \varphi^*(\bar X_j) - \tilde \varphi^*(T(X_i)) - X_i^{\top} B_{12}(\bar X_j - T(X_i)).
    \end{align*}

    Note that by $T_{\#}P = Q$, we have
    \[
        E[\tilde \varphi^*(\bar X_j) - \tilde \varphi^*(T(X_i))] = E_{Q}[\tilde \varphi^*(\bar X)] - E_{P}[\tilde \varphi^*(T(X))] = 0.
    \]
    Therefore, by Proposition~\ref{prop:stability}, we have
    \begin{align*}
        &\frac{1}{\lambda} \EE\left[\sum_{i=1}^n \sum_{j=1}^m \hat \pi_{ij} \lnorm{\bar X_j -T(X_i)}{2}^2\right]\\
        \leq & -\EE\left[\sum_{i=1}^n \sum_{j=1}^m \hat \pi_{ij} \left[2 X_i^{\top} B_{12}(\bar X_j - T(X_i))\right]\right]\\
        =& \EE\left[\sum_{i=1}^n \sum_{j=1}^m \hat \pi_{ij} h_B(X_i, \bar X_j) \right] - \EE_{P}\left[h_B(X, T(X))\right] \\
        =& \EE[\Wass_{h_B}(P_n, Q_m) - \Wass_{h_B}(P, Q)].
    \end{align*}
\end{enumerate}    
\end{proof}

The following result is a direct corollary of Proposition~13 in \cite{manole2024plugin}.
\begin{proposition}[Bounding $\Wass_2$ distances by curvature]\label{prop:bndW2_curvature}
    In the same setting as Proposition~\ref{prop:shift}, for any random measure $\hat P, \hat Q$ such that
    \[
    \EE\left[\int f \diff \hat P\right] = \int f \diff P,~~\EE\left[\int f \diff \hat Q\right] = \int f \diff Q\quad \forall~f.
    \]
    we have
    \begin{align*}
        0 \leq & \EE\left[\Wass_{h_B}(\hat P, \hat Q) - \Wass_{h_B}(P, Q)\right]\\
        \leq & \lambda \times \EE\left[\Wass_2({B_{12}^{\top}}_{\#}\hat P,  {B_{12}^{\top}}_{\#}P) + \Wass_2(\hat Q, Q)\right]^2.
    \end{align*}
    
\end{proposition}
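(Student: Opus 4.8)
The plan is to obtain this as a direct consequence of the reduction already set up for Proposition~\ref{prop:shift}, combined with Proposition~13 in \parencite{manole2024plugin}. First I would record the algebraic identity that turns $\Wass_{h_B}$ into a squared $2$-Wasserstein distance up to terms linear in each marginal. Expanding $h_B(x,\bar x) = x^{\top}B_{11}x - 2x^{\top}B_{12}\bar x + \bar x^{\top}B_{22}\bar x$, the quadratic-in-$x$ and quadratic-in-$\bar x$ pieces are fixed by the marginals of any coupling, and writing $\tilde x = B_{12}^{\top}x$ we have $-2\tilde x^{\top}\bar x = \lnorm{\tilde x - \bar x}{2}^2 - \lnorm{\tilde x}{2}^2 - \lnorm{\bar x}{2}^2$. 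Since $B_{12}$ is invertible (Proposition~\ref{prop:shift}(iii)), this yields, for any $\mu,\nu \in \calP(\calX)$,
\[
\Wass_{h_B}(\mu,\nu) \;=\; \int g_1\,d\mu + \int g_2\,d\nu + \Wass_2^2\!\left({B_{12}^{\top}}_{\#}\mu,\ \nu\right),
\]
where $g_1(x) = x^{\top}B_{11}x - \lnorm{B_{12}^{\top}x}{2}^2$ and $g_2(\bar x) = \bar x^{\top}B_{22}\bar x - \lnorm{\bar x}{2}^2$ are fixed functions.

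Applying this identity to $(\hat P,\hat Q)$ and to $(P,Q)$, subtracting, and taking expectations, the two linear terms drop out because $\EE[\int g_1\,d\hat P] = \int g_1\,dP$ and $\EE[\int g_2\,d\hat Q] = \int g_2\,dQ$ by the unbiasedness hypothesis. Hence
\[
\EE\!\left[\Wass_{h_B}(\hat P,\hat Q) - \Wass_{h_B}(P,Q)\right] \;=\; \EE\!\left[\Wass_2^2\!\left({B_{12}^{\top}}_{\#}\hat P,\ \hat Q\right) - \Wass_2^2\!\left({B_{12}^{\top}}_{\#}P,\ Q\right)\right].
\]
Next I would note that ${B_{12}^{\top}}_{\#}\hat P$ is an unbiased random perturbation of $\mu := {B_{12}^{\top}}_{\#}P$ (pushing forward by the fixed linear map $B_{12}^{\top}$ commutes with the expectation in the hypothesis) and $\hat Q$ is an unbiased perturbation of $\nu := Q$; moreover the Brenier potential between $\mu$ and $\nu$ for the quadratic cost is exactly the potential $\tilde\varphi$ of Proposition~\ref{prop:shift}(ii), which is $C^2$ and satisfies $\tfrac{1}{\lambda}I_d \preceq \nabla^2\tilde\varphi \preceq \lambda I_d$ on $\textup{supp}(\mu)$.

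These are precisely the hypotheses of Proposition~13 in \parencite{manole2024plugin}, which for two unbiased random measures whose Brenier potential has curvature bounded by $\lambda$ delivers both $0 \le \EE[\Wass_2^2(\hat\mu,\hat\nu) - \Wass_2^2(\mu,\nu)]$ (this lower bound alternatively follows from joint convexity of $(\mu,\nu)\mapsto\Wass_2^2(\mu,\nu)$ together with Jensen's inequality) and the upper bound $\lambda\,\EE[\Wass_2(\hat\mu,\mu) + \Wass_2(\hat\nu,\nu)]^2$. Substituting $\hat\mu = {B_{12}^{\top}}_{\#}\hat P$, $\mu = {B_{12}^{\top}}_{\#}P$, $\hat\nu = \hat Q$, $\nu = Q$ gives exactly the two-sided bound in the statement. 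The only point requiring care is the bookkeeping of which of $\mu,\nu$ the potential pushes forward and the alignment of the regularity and support conditions between Proposition~\ref{prop:shift}(ii) and Proposition~13 in \parencite{manole2024plugin}; since the Hessian bound $\tfrac{1}{\lambda}I_d \preceq \nabla^2\tilde\varphi \preceq \lambda I_d$ is preserved under convex conjugation --- as already exploited in the proof of Proposition~\ref{prop:shift}(b) via Theorem~4.2.2 in \parencite{hiriart-urruty2004fundamentals} --- this symmetry is harmless and the result follows as a direct corollary.
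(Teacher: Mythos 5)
Your argument is correct and matches the paper's proof in all essentials: both reduce $\Wass_{h_B}$ to a squared $2$-Wasserstein distance between the pushforward ${B_{12}^{\top}}_{\#}(\cdot)$ and the second measure plus marginal-determined additive terms that vanish in expectation by the unbiasedness hypothesis, and then invoke Proposition~13 of \parencite{manole2024plugin} together with the curvature bound on $\tilde\varphi$ from Proposition~\ref{prop:shift}. The one slightly glossed-over point is that Proposition~13 is a pathwise inequality whose statement subtracts linear correction terms $\int\phi_0\,\diff(\hat\mu-\mu)$, $\int\psi_0\,\diff(\hat\nu-\nu)$ involving dual potentials; these also require the unbiasedness hypothesis to disappear after taking expectations, exactly as the paper notes when it says ``taking expectation over the terms of this inequality finishes the proof,'' but your remark that the nonnegativity follows from joint convexity plus Jensen shows you have the right mechanism in mind.
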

\begin{proof}[Proof of Proposition~\ref{prop:bndW2_curvature}]
    Let $h'(x, \bar x) = - 2 x^{\top}\bar x$, we have
\begin{align*}
    & \EE\left[\Wass_{h_{B}}(\hat P,  \hat Q) - \Wass_{h_{B}}(P,  Q)\right]\notag\\
    =& \EE\left[\Wass_{h_{\tilde B}}({B_{12}^{\top}}_{\#}\hat P,  \hat Q) - \Wass_{h_{\tilde B}}({B_{12}^{\top}}_{\#} P,  Q)\right] \notag\\
    =& \EE\left[\min_{\pi \in \hat \Pi'} \iprod{\pi}{h_{\tilde B}} - \min_{\pi \in \Pi'} \iprod{\pi}{h_{\tilde B}}\right] \notag\\
    =& \EE\left[\min_{\pi \in \hat \Pi'} \iprod{\pi}{h'} - \min_{\pi \in \Pi'} \iprod{\pi}{h'}\right]\notag\\
    =& \EE\left[\Wass_{2}^2({B_{12}^{\top}}_{\#}\hat P,  \hat Q) - \Wass_2^2({B_{12}^{\top}}_{\#} P,  Q)\right], 
\end{align*}
where 
\begin{align*}
    \hat \Pi' = \left\{\pi \in \mathcal{P}(\calX^2): \pi_{X} = {B_{12}^{\top}}_{\#}\hat P, \pi_{\bar X} = \hat Q\right\}.
\end{align*}
and 
\begin{align*}
    \Pi' = \left\{\pi \in \mathcal{P}(\calX^2): \pi_{X} = {B_{12}^{\top}}_{\#}\hat P, \pi_{\bar X} = Q\right\}.
\end{align*}

Proposition~13 \cite{manole2024plugin} results in:
\begin{align*}
    0 \leq & \Wass_{2}^2({B_{12}^{\top}}_{\#}\hat P, \hat Q  ) - \Wass_2^2({B_{12}^{\top}}_{\#} P,  Q) - \int  \phi_0 \diff ({B_{12}^{\top}}_{\#}\hat P - {B_{12}^{\top}}_{\#} P) - \int \psi_0 \diff (\hat Q - Q)  \\
    \leq &  \lambda \times \left(\Wass_2({B_{12}^{\top}}_{\#}\hat P,  {B_{12}^{\top}}_{\#}P) + \Wass_2(\hat Q, Q)\right)^2.
\end{align*}
Taking expectation over the terms of this inequality finishes the proof.

\end{proof}

Now we are ready to prove our main result Theorem~\ref{theo:estimation.rate} stated again as follows.

\begin{proof}[Proof of Theorem~\ref{theo:estimation.rate}]
Equivalently, we can write 
\[
    P_{n, Y(0), Z} =  \frac{1}{n} \sum_{i=1}^n \delta_{Y_i, Z_i},
    \quad P_{m, Y(1), Z} = \frac{1}{m}\sum_{j = 1}^m \delta_{\bar Y_j, \bar Z_j},
\]
where $((Y_i, Z_i), i \in [n])$ are i.i.d.~samples following $P_{Y(0), Z}$ and $((\bar Y_j, \bar Z_j), j \in [m])$ are i.i.d.~samples following $P_{Y(1), Z}$.

As previously discussed, we write the source variable $X = (Y(0), Z(0))$ and the target variable $\bar X = (Y(1), Z(1))$. Then we introduce the cost function is $h_B(x,\bar x) = \mathbf{x}^{\top} B \mathbf{x}$, where $\mathbf{x} = (x, \bar x)^{\top}$, and 
\[
B = 
\begin{bmatrix}
A_{11} &  & -A_{12} & \\
& \eta I & & -\eta I \\
-A_{12}^{\top}& & A_{22} & \\
&-\eta I & & \eta I
\end{bmatrix}.
\]

Recall that 
\begin{align}\label{eq:Vipnm}
    V_{\ip, n, m}(\eta) =  \EE_{\pi_{\ip, n, m}^{\star}(\eta)} [h(Y(0), Y(1))],
\end{align}
where
\begin{align*}
   &\pi_{\ip, n, m}^{\star}(\eta) \in \argmin_{\pi \in \Pi_{\ip, n, m}} \EE_{\pi}\left[h_{B}((Y(0), Z(0)), (Y(1), Z(1)))\right],\\ 
    &\Pi_{\ip, n, m} = \left\{\pi \in \mathcal{P}(\calY \times \calZ \times \calY \times \calZ): \pi_{Y(0), Z(0)} = P_{n, Y(0), Z}, \pi_{Y(1), Z(1)} = P_{m, Y(1), Z}\right\}.
\end{align*}
and 
\begin{align}\label{eq:Vipeta}
    V_{\ip}(\eta) =  \EE_{\pi_{\ip}^{\star}(\eta)}[h(Y(0), Y(1))],
\end{align}
where
\begin{align*}
   &\pi_{\ip}^{\star}(\eta) = \argmin_{\pi \in \Pi_{\ip}} \EE_{\pi}\left[h_{B}((Y(0), Z(0)), (Y(1), Z(1)))\right],\\ 
    &\Pi_{\ip} = \left\{\pi \in \mathcal{P}(\calY \times \calZ \times \calY \times \calZ): \pi_{Y(0), Z(0)} = P_{Y(0), Z}, \pi_{Y(1), Z(1)} = P_{Y(1), Z}\right\}.
\end{align*}




By Proposition~\ref{prop:existuni}, there is an OT map $T$ coupling $P_{Y(0), Z}, P_{Y(1), Z}$, such that $\pi^{\star}_{\ip}(\eta) = (I_d, T)_{\#} P_{Y(0), Z}$. Therefore, we have, $\pi_{\ip}^{\star}(\eta)$-almost surely, 
\begin{align}\label{eq:eqT}
    (Y(1), Z(1)) = T(Y(0), Z(0)).
\end{align}

Note that $h(y,\bar y) = \mathbf{x}^{\top} B_a \mathbf{x} =: h_{B_a}(x, \bar x)$, where $\mathbf{x} = (x, \bar x)^{\top} = ((y,z), (\bar y, \bar z))^{\top}$, and 
\[
B_a = 
\begin{bmatrix}
A_{11} &  & -A_{12} & \\
& 0 & & 0 \\
-A_{12}^{\top}& & A_{22} & \\
& 0 & & 0
\end{bmatrix}.
\]


Then, let $X_i \Let (Y_i, Z_i), \bar X_j \Let (\bar Y_j, \bar Z_j), X \Let (Y(0), Z(0)), \bar X \Let (Y(1), Z(1))$, we have
\begin{align*}
     & \left| V_{\ip, n, m}(\eta) - V_{\ip}(\eta) \right|\\
     = & \left |\sum_{i=1}^{n}\sum_{j=1}^{m} \hat \pi_{ij} h_{B_a}(X_i, \bar X_j) - \EE[h_{B_a}(X, T(X))] \right| \qquad (\text{due to \eqref{eq:eqT}})\\
     \leq & \underbrace{2 \left|\sum_{i=1}^{n}\sum_{j=1}^{m} \hat \pi_{ij} Y_i^{\top} A_{12} \bar Y_j - \EE[Y(0) A_{12} T(Y(0), Z(0))] \right|}_{\text{(Term A)}} \\
     & + \left|\frac{1}{n}\sum_{i=1}^{n}  Y_i^{\top} A_{11} Y_i - \EE[Y(0)^{\top} A_{11} Y(0)]\right|  \qquad (\text{due to $\sum_{j}\hat \pi_{ij} = \frac{1}{n}$})\\
     & + \left|\frac{1}{m}\sum_{j=1}^{m}  \bar Y_j^{\top} A_{22} \bar Y_j - \EE[Y(1)^{\top} A_{22} Y(1)]\right| \qquad (\text{due to $\sum_{i}\hat \pi_{ij} = \frac{1}{m}$})
\end{align*}

For (Term A), we further expand it to be
\begin{align*}
    \text{(Term A)} \leq & \underbrace{\left|\sum_{i=1}^{n}\sum_{j=1}^{m} \hat \pi_{ij} Y_i^{\top} A_{12} (\bar Y_j - T(Y_i, Z_i)) \right|}_{(\text{Term B})}\\
    & + \left| \frac{1}{n}\sum_{i=1}^{n} Y_i^{\top} A_{12} T(Y_i, Z_i) - \EE[Y(0) A_{12} T(Y(0), Z(0))] \right| \qquad (\text{due to $\sum_{j}\hat \pi_{ij} = \frac{1}{n}$}).
\end{align*}

For (Term B), we have
\begin{align*}
    \EE[(\text{Term B})] \leq & \EE \left[\sum_{i=1}^{n}\sum_{j=1}^{m} \hat \pi_{ij} \left|Y_i^{\top} A_{12} (\bar Y_j - T(Y_i, Z_i)) \right|\right]\\
    \leq & \lnorm{A_{12}}{\text{op}} \EE \left[\sum_{i=1}^{n}\sum_{j=1}^{m} \hat \pi_{ij} \lnorm{Y_i}{2}^2\right]^{\frac{1}{2}}  \EE \left[\sum_{i=1}^{n}\sum_{j=1}^{m} \hat \pi_{ij} \lnorm{\bar Y_j - T(Y_i, Z_i)}{2}^2\right]^{\frac{1}{2}} (\text{Cauchy-Schwarz ineq.})\\
    \leq & \lnorm{A_{12}}{\text{op}} \EE \left[\lnorm{Y(0)}{2}^2\right]^{\frac{1}{2}}  \underbrace{\EE \left[\sum_{i=1}^{n}\sum_{j=1}^{m} \hat \pi_{ij} \lnorm{\bar X_j - T(X_i)}{2}^2\right]^{\frac{1}{2}}}_{(\text{Term C})} \qquad (\text{due to $\sum_{j}\hat \pi_{ij} = \frac{1}{n}$}).
\end{align*}

For the expectations of the terms above except for (Term A), (Term B) and (Term C), we apply Markov's inequality.

For (Term C), by Proposition~\ref{prop:shift}, we have
\begin{align}\label{eq:eq21}
        \EE\left[\sum_{i=1}^n \sum_{j=1}^m \hat \pi_{ij}  \lnorm{\bar X_j - T(X_i)}{2}^2\right] \leq \lambda \times \EE\left[\Wass_{h_B}(P_{n, Y(0), Z},  P_{m, Y(1), Z}) - \Wass_{h_B}(P_{Y(0), Z},  P_{Y(1), Z})\right].
\end{align}

Therefore, we get
\begin{align}\label{eq:eq22}
    & \EE\left[\left|V_{\ip, n, m}(\eta) - V_{\ip}(\eta)\right|\right] \notag\\
    \leq &~2\lambda^{\frac{1}{2}} \lnorm{A_{12}^{\top}}{\textup{op}} \EE_{P_{Y(0)}}\left[\lnorm{Y(0)}{2}^2\right]^{\frac{1}{2}} \sqrt{\Delta_{n,m}} \notag\\
    &+ \underbrace{\frac{\textup{Var}(Y(0)^{\top} A_{11} Y(0))^{\frac{1}{2}}}{\sqrt{n}}  + \frac{\textup{Var}(Y(1)^{\top} A_{22} Y(1))^{\frac{1}{2}}}{\sqrt{m}} +  \frac{2\textup{Var}(Y(0)^{\top} A_{12} T(Y(0), Z(0)))^{\frac{1}{2}}}{\sqrt{n}}}_{(\text{Term D})}
\end{align}
where 
\[
    \Delta_{n,m} \Let \EE\left[\Wass_{h_B}(P_{n, Y(0), Z},  P_{m, Y(1), Z}) - \Wass_{h_B}(P_{Y(0), Z},  P_{Y(1), Z})\right]. 
\]

(Term D) can be further bounded by
\begin{align}\label{eq:termD}
    (\textup{Term D}) \leq \frac{\sqrt{2}}{\sqrt{N}} \left(\lnorm{A_{11}}{\textup{op}} \EE[\lnorm{Y(0)}{2}^4]^{\frac{1}{2}} + \lnorm{A_{22}}{\textup{op}} \EE[\lnorm{Y(1)}{2}^4]^{\frac{1}{2}} + 2 \lnorm{A_{12}}{\textup{op}} \EE[\lnorm{Y(0)}{2}^4]^{\frac{1}{4}}\EE[\lnorm{Y(1)}{2}^4]^{\frac{1}{4}}\right),
\end{align}
where $N = \min(n,m)$.

As for $\Delta_{n,m}$, applying Proposition~\ref{prop:bndW2_curvature}, we have
\begin{align}\label{eq:eq23}
    \Delta_{n,m} \leq & \lambda \times \EE\left[\left(\Wass_2(P_{n, A_{12}^{\top}Y(0), \eta Z},  P_{A_{12}^{\top}Y(0), \eta Z}) + \Wass_2(P_{m, Y(1),Z}, P_{Y(1),Z})\right)^2\right]\notag\\
    \leq & 2\lambda \times \EE\left[\Wass_2^2(P_{n, A_{12}^{\top}Y(0), \eta Z},  P_{A_{12}^{\top}Y(0), \eta Z}) + \Wass_2^2(P_{m, Y(1),Z}, P_{Y(1),Z})\right]
\end{align}

Finally, apply Theorem~\ref{thm:fournier.guillin} to directly bound the terms $\EE\left[\Wass_2^2(P_{n, A_{12}^{\top}Y(0), \eta Z},  P_{A_{12}^{\top}Y(0), \eta Z})\right]$ (and similarly for $\Wass_2^2(P_{m, Y(1),Z}, P_{Y(1),Z})$), we get
\begin{align}\label{eq:eq26}
   \EE\left[\Wass_2^2(P_{n, A_{12}^{\top}Y(0), \eta Z},  P_{A_{12}^{\top}Y(0), \eta Z})\right] \leq & C \lnorm{A_{12}}{\textup{op}}^2 \left(\EE\left[\lnorm{Y(0)}{2}^2\right] + \EE\left[\lnorm{Z}{2}^2\right]\right)\eta^2 \gamma_{n,d_{Y} + d_Z}^2,
\end{align}
where $C$ is a constant only depending on $d_{Y} + d_{Z}$, $\gamma_{n,d_{Y} + d_Z}$ is defined in the statement of Theorem~\ref{theo:estimation.rate}.

Combine Eq.~\eqref{eq:termD}~\eqref{eq:eq23}~\eqref{eq:eq26}, and bring back to Eq.~\eqref{eq:eq22}, we get the desired result.

\end{proof}

\subsection{Curvature Condition in~Theorem~\ref{theo:estimation.rate}}\label{app:curv_mainthm}
In this section, We provide an informal discussion on the general relationship between the curvature $\lambda$ and $\eta$. Without loss of generality, we suppose $A_{12} = -I$.

As $\eta\rightarrow\infty$, $V_{\ip}(\eta)$ converges to $V_{\cc}$ as stated in Proposition~\ref{prop:interpolation}, so does the corresponding OT maps. That is, the OT map between $P_{Y(1), Z}$ and $P_{Y(0), Z}$ converges to the conditional optimal transport map for the COT problem $V_{\cc}$. Consequently, the Brenier map $T_{\eta}: \calY \times \calZ \rightarrow \calY \times \calZ$ between $P_{Y(1), Z}$ and $P_{Y(0), \eta Z}$ satifies that
\[
    \begin{bmatrix}
        I &  \\
         &  \eta^{-1} I
        \end{bmatrix} T_{\eta}(y,z)
    = 
    \begin{bmatrix}
        f(y,z) + o(1)  \\
        z + o(1) 
        \end{bmatrix}, \quad
    \textup{as}~\eta \rightarrow \infty.
\]
where $f(y,z) \Let f_z(y)$ is the Brenier map (\Cref{thm:brenier}) of the conditional optimal transport problem
\[
    \min_{\pi \in \Pi(z)} \EE_{\pi}\left[(Y(0) - Y(1))^2\right],
\]
where $\Pi(z) := \left\{\pi \in \calP(\calY^2): \pi_{Y(0)} = P_{Y(0) \mid Z = z}, \pi_{Y(1)} = P_{Y(1) \mid Z = z}\right\}$.

As a result, we may get the Hessian matrix of the associated Brenier map to be
\begin{align*}
    \nabla T_{\eta} &= 
    \begin{bmatrix}
        \partial_y f(y,z) + o(1) & \partial_z f(y,z) + o(1) \\
        \partial_z f(y,z) + o(1) & \eta (1 + o(1)) 
        \end{bmatrix}. \\
        &= 
        \begin{bmatrix}
        I &  \\
         &  \sqrt{\eta} I
        \end{bmatrix}
    \underbrace{\begin{bmatrix}
        \partial_y f(y,z) + o(1) & \frac{1}{\sqrt{\eta}} \left(\partial_z f(y,z) + o(1)\right) \\
        \frac{1}{\sqrt{\eta}} \left(\partial_z f(y,z) + o(1)\right) & 1 + o(1) 
        \end{bmatrix}}_{\Theta_{\eta}}
        \begin{bmatrix}
        I &  \\
         &  \sqrt{\eta} I
        \end{bmatrix}.
\end{align*}

Then, we have $\lambda_{\max}(\nabla T_{\eta}) \leq \lambda_{\max}(\Theta_{\eta}) (\eta \vee 1)$, where $\Theta_{\eta}$ converges to a matrix independent of $\eta$. As for $\lambda_{\min}$, similar argument can be applied with reversed inequalities, thus $\lambda_{\min}(\nabla T_{\eta}) \geq \lambda_{\min}(\Theta_{\eta}) (\eta \wedge 1)$. In Section~\ref{sec:pf_lemma4.6}, we provide a rigorous proof following the above discussion in the case that $(Y(0), Z)$ and $(Y(1), Z)$ are Gaussian vectors.

\subsection{Proof of Lemma~\ref{lemm:curvature}}\label{sec:pf_lemma4.6}
\begin{proof}[Proof of Lemma~\ref{lemm:curvature}]
    Denote the covariance matrix of $(Y(0), Z)$ to be $[[\Sigma_{0}, \Sigma_{0,2}]; [\Sigma_{0, 2}^{\top}, \Sigma_{2}]]$, and that of $(Y(1), Z)$ to be $[[\Sigma_{1}, \Sigma_{1,2}]; [\Sigma_{1, 2}^{\top}, \Sigma_{2}]]$. Suppose the two covariance matrices are non-singular. We shall analyze the curvature of the Brenier potential between $P_{Y(1), Z}$ and $P_{A_{12}^{\top}Y(0), \eta Z}$. Without loss of generality, we let $A_{12} = I$.

    By Eq.~\eqref{eq:otA}, the Hessian matrix of the Brenier potential is:
    \[
        \nabla^2 \tilde \varphi \equiv \tilde \Sigma_1^{-\frac{1}{2}}(\tilde \Sigma_1^{\frac{1}{2}}\tilde \Sigma_0\tilde \Sigma_1^{\frac{1}{2}})^{\frac{1}{2}}
    \tilde \Sigma_1^{-\frac{1}{2}},
    \]
    where 
    \[
        \tilde \Sigma_0 = 
        \begin{bmatrix}
        \Sigma_0 & \eta \Sigma_{0,2} \\
        \eta \Sigma_{0,2}^{\top} & \eta^2 \Sigma_{2}
        \end{bmatrix},\quad
        \tilde \Sigma_1 = 
        \begin{bmatrix}
        \Sigma_1 & \Sigma_{1,2} \\
        \Sigma_{1,2}^{\top} & \Sigma_{2}
        \end{bmatrix}
    \]

    First, we have
    \[
        \frac{x^{\top} \nabla^2 \tilde \varphi x}{\lnorm{x}{2}^2} =  \frac{(\tilde \Sigma_1^{-\frac{1}{2}}x)^{\top}  (\tilde \Sigma_1^{\frac{1}{2}}\tilde \Sigma_0\tilde \Sigma_1^{\frac{1}{2}})^{\frac{1}{2}} (\tilde \Sigma_1^{-\frac{1}{2}}x)}{\lnorm{\tilde \Sigma_1^{-\frac{1}{2}}x}{2}^2} 
        \frac{\lnorm{\tilde \Sigma_1^{-\frac{1}{2}}x}{2}^2}{\lnorm{x}{2}^2},
    \]
    thus $\lambda_{\max}(\nabla^2 \tilde \varphi) \leq \lambda_{\max}(\tilde \Sigma_1^{\frac{1}{2}}\tilde \Sigma_0\tilde \Sigma_1^{\frac{1}{2}})^{\frac{1}{2}} \lambda_{\max}(\tilde \Sigma_1^{-1})$.

    Apply the similar reasoning, we have $\lambda_{\max}(\tilde \Sigma_1^{\frac{1}{2}}\tilde \Sigma_0\tilde \Sigma_1^{\frac{1}{2}}) \leq \lambda_{\max}(\tilde \Sigma_0) \lambda_{\max}(\tilde \Sigma_1)$.

    Note that 
    \[
        \tilde \Sigma_0 = 
        \underbrace{\begin{bmatrix}
        I &  \\
         &  \eta I
        \end{bmatrix}}_{\Lambda_{\eta}}
        \underbrace{\begin{bmatrix}
        \Sigma_0 & \Sigma_{0,2} \\
        \Sigma_{0,2}^{\top} & \Sigma_{2}
        \end{bmatrix}}_{\tilde \Sigma_0'}
        \begin{bmatrix}
        I &   \\
         & \eta I
        \end{bmatrix}
    \]
    Then, $\lambda_{\max}(\tilde \Sigma_0) \leq \lambda_{\max}(\tilde \Sigma_0') \lambda_{\max}(\Lambda_{\eta}^2) = \lambda_{\max}(\tilde \Sigma_0') \times (\eta \vee 1)^2$. Therefore, $\lambda_{\max}(\nabla^2 \tilde \varphi) \leq \bar C (\eta \vee 1)$ for a constant $\bar C$.

    For $\lambda_{\min}$, the same argument as above can be applied, except that the inequalities are reversed. As a result, we get $\lambda_{\min}(\nabla^2 \tilde \varphi) \geq \bar c (\eta \wedge 1)$.
\end{proof}

\subsection{Proof of Theorem~\ref{theo:estimation.rate.noniid}}
\begin{proof}[Proof of Theorem~\ref{theo:estimation.rate.noniid}]
    Follow the same reasoning as in the proof of Theorem~\ref{theo:estimation.rate}, we can derive the following inequality similar to \eqref{eq:eq22}. 
    \begin{align*}
    & \EE\left[\left|V_{\ip, n, m}(\eta) - V_{\ip}(\eta)\right|\right] \notag\\
    \leq &~2\lambda^{\frac{1}{2}} \lnorm{A_{12}^{\top}}{\textup{op}} \EE_{P_{Y(0)}}\left[\lnorm{Y(0)}{2}^2\right]^{\frac{1}{2}} \sqrt{\Delta_{n,m}} \notag\\
    &+ \EE \left|\frac{1}{n}\sum_{i=1}^{n}  Y_i^{\top} A_{11} Y_i - \EE[Y(0)^{\top} A_{11} Y(0)]\right|\\
    &+ \EE \left|\frac{1}{m}\sum_{j=1}^{m}  \bar Y_j^{\top} A_{22} \bar Y_j - \EE[Y(1)^{\top} A_{22} Y(1)]\right| \\
    & +\EE \left| \frac{1}{n}\sum_{i=1}^{n} Y_i^{\top} A_{12} T(Y_i, Z_i) - \EE[Y(0) A_{12} T(Y(0), Z(0))] \right|
\end{align*}

Similarly, by Proposition~\ref{prop:bndW2_curvature} and Theorem~\ref{thm:fournier.guillin_noniid}, we have $\sqrt{\Delta_{n,m}} \leq C \lambda^{\frac{1}{2}}\eta\gamma_{N,d}$ for a constant $C$ that depends on $\alpha$, $d$, $\lnorm{A_{12}}{\textup{op}}$, as well as the second moments of $Y(0), Y(1), Z$.

For the remaining terms, we have
\begin{align*}
    \EE \left|\frac{1}{n}\sum_{i=1}^{n}  Y_i^{\top} A_{11} Y_i - \EE[Y(0)^{\top} A_{11} Y(0)]\right|
    \leq &  \sqrt{\EE \left(\frac{1}{n}\sum_{i=1}^{n}  Y_i^{\top} A_{11} Y_i - \EE[Y(0)^{\top} A_{11} Y(0)]\right)^2}\qquad (\textup{Jensen's inequality})\\
    = & \frac{1}{n} \sqrt{\sum_{i,j}\textup{Cov}(Y_i^{\top} A_{11} Y_i , Y_j^{\top} A_{11} Y_j )}\\
    \leq & \frac{\textup{Var}(Y(0)^{\top} A_{11} Y(0))^{\frac{1}{2}}}{n} \sqrt{\sum_{k=0}^{n-1} (n-k)\alpha(k)} \qquad (\textup{Definition of $\alpha$-mixing})\\
    \leq & \frac{\textup{Var}(Y(0)^{\top} A_{11} Y(0))^{\frac{1}{2}}}{\sqrt{n}} \sqrt{\sum_{k=0}^{n-1} \alpha(k)} \\
    \leq & \frac{\textup{Var}(Y(0)^{\top} A_{11} Y(0))^{\frac{1}{2}}}{\sqrt{n}} \sqrt{\sum_{k=0}^{\infty} \alpha(k)} \qquad (\textup{due to $\sum_{k=0}^{\infty} \alpha(k) < \infty$}).
\end{align*}
The other terms can be bounded similarly, which finishes the proof.
\end{proof}

\subsection{Proof of Lemma~\ref{lem:gauss_noise_Vc}}\label{sec:pf_lemma5.1}
\begin{proof}[Proof of Lemma~\ref{lem:gauss_noise_Vc}]
    We have
    \begin{align*}
        V_{\cc} &= \min_{\pi \in \Pi_{\cc}} \EE_{\pi}[(Y(0) + Y(1))^2]\\
        &= \min_{\pi \in \Pi_{\cc}} \EE_{P_Z}\left[\EE_{\pi}[(Y(0) + Y(1))^2 | Z]\right]\\
        &= \EE_{P_Z}\left[\min_{\pi' \in \Pi(z)} \EE_{\pi'}[(Y(0) + Y(1))^2]\right],
    \end{align*}
    where $\Pi(z) = \{\pi \in \calP(\calY^2): \pi_{Y(0)} = P_{Y(0) | Z = z}, \pi_{Y(1)} = P_{Y(1) | Z = z}\}$. 

    Recall that 
    \begin{equation}
        \min_{\substack{\pi: \pi_{X_0} = \calN(\mu_0, \Sigma_0), \\
            \pi_{X_1} = \calN(\mu_1, \Sigma_1)}} \EE_{\pi}\left[\lnorm{X_0 - X_1}{2}^2\right] = \lnorm{\mu_0 - \mu_1}{2}^2 + \Tr\left(\Sigma_0 + \Sigma_1 - 2(\Sigma_0^{\frac{1}{2}} \Sigma_1 \Sigma_0^{\frac{1}{2}})^{\frac{1}{2}}\right).
    \end{equation}
    
    Therefore, when $k = 1$, $Y(0) | Z = z \sim \calN (f_0(z), \Sigma_0), -Y(1) | Z = z \sim \calN (-f_1(z), \Sigma_1)$, then we get
    \[
        \min_{\pi' \in \Pi(z)} \EE_{\pi'}[(Y(0) + Y(1))^2] = \lnorm{f_0(z) + f_1(z)}{2}^2 + S(\Sigma_0, \Sigma_1),
    \]
    where $S(\Sigma_0, \Sigma_1) = \Tr\left(\Sigma_0 + \Sigma_1 - 2(\Sigma_0^{\frac{1}{2}} \Sigma_1 \Sigma_0^{\frac{1}{2}})^{\frac{1}{2}}\right)$.

    When $k = 2$, $Y(0) | Z = z \sim \calN (0, \calD(f_0(z))\Sigma_0\calD(f_0(z))), -Y(1) | Z = z \sim \calN (0, \calD(f_1(z))\Sigma_1\calD(f_1(z)))$, then we get
    \[
        \min_{\pi' \in \Pi(z)} \EE_{\pi'}[(Y(0) + Y(1))^2] = S(\calD(f_0(z))\Sigma_0\calD(f_0(z)), \calD(f_1(z))\Sigma_1\calD(f_1(z))).
    \]
    where $\calD(v)$ is the matrix with the vector $v$ placed at the diagonal. The desired results are proved.
\end{proof}

\subsection{Computation Details in Example~\ref{exp:gaussian}}\label{sec:computeExp1}
Recall that 
\begin{equation}
    \min_{\substack{\pi: \pi_{X_0} = \calN(\mu_0, \Sigma_0), \\
        \pi_{X_1} = \calN(\mu_1, \Sigma_1)}} \EE_{\pi}\left[\lnorm{X_0 - X_1}{2}^2\right] = \lnorm{\mu_0 - \mu_1}{2}^2 + \Tr\left(\Sigma_0 + \Sigma_1 - 2(\Sigma_0^{\frac{1}{2}} \Sigma_1 \Sigma_0^{\frac{1}{2}})^{\frac{1}{2}}\right). \label{eq:wass_gaussian}
\end{equation}
The optimal coupling $\pi^\star$ is defined by the distribution of  $(X_0, A(X_0 - \mu_0) + \mu_1)$, where
\begin{equation}
    A = \Sigma_0^{-\frac{1}{2}}(\Sigma_0^{\frac{1}{2}}\Sigma_1\Sigma_0^{\frac{1}{2}})^{\frac{1}{2}}
    \Sigma_0^{-\frac{1}{2}}. \label{eq:otA}
\end{equation}

In Example~\ref{exp:gaussian}, we have $h(y_0, y_1) = (y_0 + y_1)^2$. Then, we have
\begin{enumerate}[label = (\arabic*).]
    \item For $V_{\uu}$, since we have $Y(0) \sim \calN(0, \beta_0^2 + \sigma_0^2), - Y(1) \sim \calN(0, \beta_1^2 + \sigma_1^2)$, a direct application of \eqref{eq:wass_gaussian} leads to the result. 

    \item For $V_{\cc}$, we can directly apply Lemma~\ref{lem:gauss_noise_Vc}. 

    \item For $V_{\ip}$, we have $X(0) \Let (Y(0), Z(0))^{\top} \sim \calN(0, \Sigma_0), X(1) \Let (-Y(1), Z(1))^{\top} \sim \calN(0, \Sigma_1)$, where 
    \[
        \Sigma_0 = 
        \begin{pmatrix}
        \beta_0^2 + \sigma_0^2 & \sqrt{\eta} \beta_0 \\
        \sqrt{\eta} \beta_0 & \eta
        \end{pmatrix},
        \quad\quad
        \Sigma_1 = 
        \begin{pmatrix}
        \beta_1^2 + \sigma_1^2 & -\sqrt{\eta} \beta_1 \\
        -\sqrt{\eta} \beta_1 & \eta
        \end{pmatrix}.
    \]

    Consequently, the optimal coupling of $(Y(0), - Y(1))$, i.e. $\pi^{\star}_{\ip}(\eta)$, is the distribution of $(Y(0), e^{\top}A X(0))$, where $e = (1,0)^{\top}$, $A$ is defined in \eqref{eq:otA}.

    Then, we have
    \begin{align*}
        V_{\ip}(\eta) &= E[Y(0)^2] + E[Y(1)^2] - 2 E_{\pi^{\star}_{\ip}(\eta)}[Y(0)(-Y(1))]\\
        &= (\beta_0^2 + \sigma_0^2) + (\beta_1^2 + \sigma_1^2) - 2 E_{\pi^{\star}_{\ip}(\eta)}[Y(0)(-Y(1))].
    \end{align*}
        
    Specifically, we compute
    \begin{align*}
        \EE_{\pi^{\star}_{\ip}(\eta)}[Y(0)(-Y(1))] &= \EE_{\pi^{\star}_{\ip}(\eta)}[e^{\top} X(0) X(0)^\top A^{\top} e]\\
        &= \EE_{\pi^{\star}_{\ip}(\eta)}[\Tr(e^{\top} X(0) X(0)^\top A^{\top} e)]\\
        &= \EE_{\pi^{\star}_{\ip}(\eta)}[\Tr(X(0) X(0)^\top A^{\top} ee^{\top})]\\
        &= \Tr(\EE_{\pi^{\star}_{\ip}(\eta)}[X(0) X(0)^\top A^{\top} ee^{\top}])\\
        &= \Tr(\Sigma_0^{\frac{1}{2}}(\Sigma_0^{\frac{1}{2}}\Sigma_1\Sigma_0^{\frac{1}{2}})^{\frac{1}{2}}\Sigma_0^{-\frac{1}{2}}ee^{\top}).
    \end{align*}

    A quick algebra shows that for $2$-by-$2$ matrix $B$,
    \[
        B^{\frac{1}{2}} = \frac{1}{\sqrt{\Tr(B) + 2\sqrt{\det(B)}}} (B + \sqrt{\det(B)} I_2),
    \]
    where $I_2$ is the $2$-by-$2$ identity matrix.

    Then, plug in $B = \Sigma_0^{\frac{1}{2}}\Sigma_1\Sigma_0^{\frac{1}{2}}$ with 
    \begin{align*}
        \Tr(\Sigma_0^{\frac{1}{2}} \Sigma_1 \Sigma_0^{\frac{1}{2}}) &= \Tr(\Sigma_0 \Sigma_1) = (\beta_0^2 + \sigma_0^2)(\beta_1^2 + \sigma_1^2) - 2 \eta \beta_0 \beta_1 + \eta^2,\\
        \det(\Sigma_0^{\frac{1}{2}} \Sigma_1 \Sigma_0^{\frac{1}{2}}) &= \det(\Sigma_0)\det(\Sigma_1) = \eta^2 \sigma_0^2 \sigma_1^2,
    \end{align*}
    we get
    \begin{align*}
        \EE_{\pi^{\star}_{\ip}(\eta)}[Y(0)(-Y(1))] &= \frac{\Tr(\Sigma_0\Sigma_1 e e^{\top} + \sqrt{\det (B)}e e^{\top})}{\sqrt{\Tr(B) + 2\sqrt{\det(B)}}}\\
        &= \frac{(\beta_0^2 + \sigma_0^2)(\beta_1^2 + \sigma_1^2) - \eta \beta_0 \beta_1 + \eta \sigma_0 \sigma_1}{\left((\beta_0^2 + \sigma_0^2)(\beta_1^2 + \sigma_1^2) - 2\eta \beta_0 \beta_1 + \eta^2 + 2\eta\sigma_0\sigma_1\right)^{\frac{1}{2}}}.
    \end{align*}
    We obtain the desired result.
\end{enumerate}
\begin{remark}[Multi-dimensional covariates]
    The above computation can be directly extended to the case when $Z$ is multi-dimensional. Specifically, without simple closed-form, the quantity $$\Tr(\Sigma_0^{\frac{1}{2}}(\Sigma_0^{\frac{1}{2}}\Sigma_1\Sigma_0^{\frac{1}{2}})^{\frac{1}{2}}\Sigma_0^{-\frac{1}{2}}ee^{\top})$$ can be computed by using the eigenvalue decomposition to compute the square root of matrices, for $e = (I_{d_Y}, 0_{d_Y \times  d_Z })^{\top}$ and $\Sigma_0, \Sigma_1$ defined as:
    \[
        \Sigma_0 = 
        \begin{pmatrix}
        \beta_0\beta_0^\top + S_0 & \sqrt{\eta} \beta_0 \\
        \sqrt{\eta} \beta_0^{\top} & \eta I
        \end{pmatrix},
        \quad\quad
        \Sigma_1 = 
        \begin{pmatrix}
        \beta_1\beta_1^\top + S_1 & -\sqrt{\eta} \beta_1 \\
        -\sqrt{\eta} \beta_1^{\top} & \eta I
        \end{pmatrix},
    \]
    where $S_0, S_1$ are the covariance matrix of multi-dimensional $\varepsilon_0, \varepsilon_1$.
\end{remark}

\section{Plots for Synthetic Experiments in Section~\ref{sec:syn_data}}\label{app:syn_fig}
In Figure~\ref{fig:compare2}, we provide the plots of estimator values for bias comparison between our method and the method in \cite{ji2023model}.
\begin{figure*}[ht]
    \centering
    \subfigure[Linear location model]{
        \includegraphics[width=0.3\textwidth]{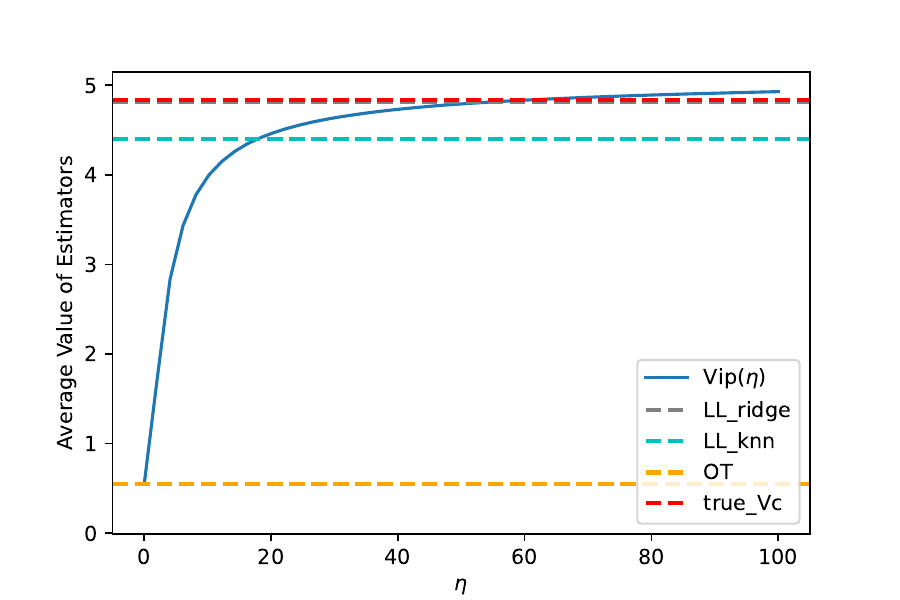} 
    }
    \subfigure[Quadratic location model]{
        \includegraphics[width=0.3\textwidth]{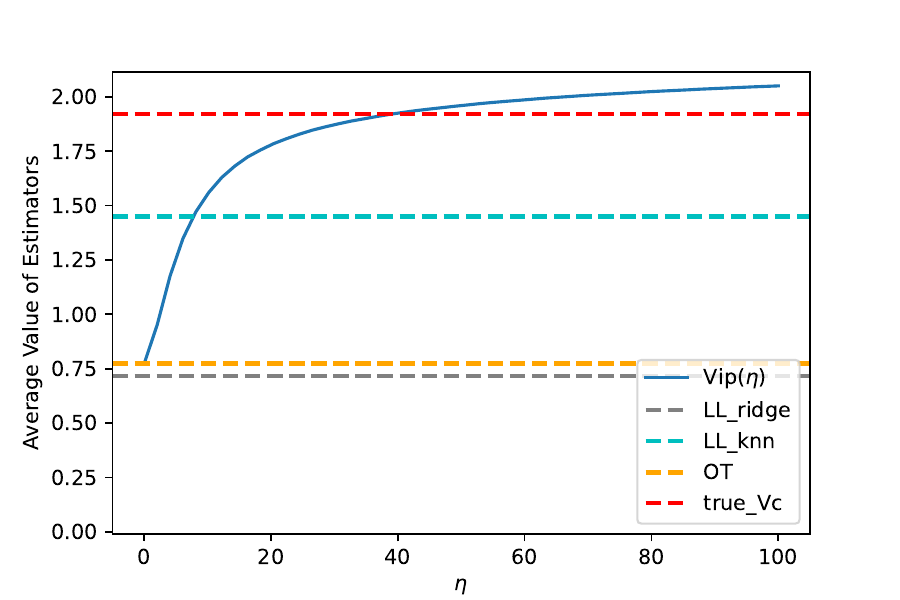} 
    }
     \subfigure[Scale model]{
        \includegraphics[width=0.3\textwidth]{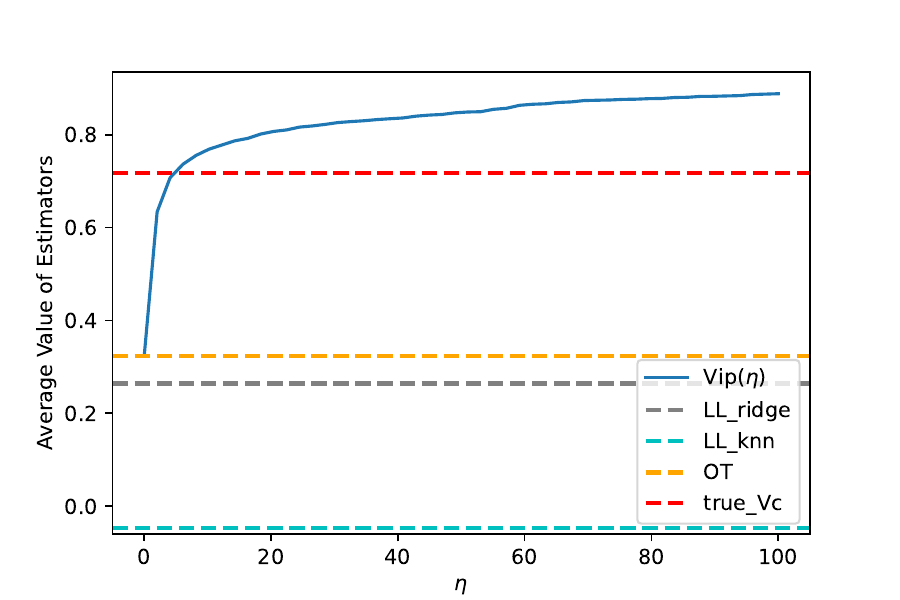} 
    }
    
    ~
    \subfigure[Linear location model]{
        \includegraphics[width=0.3\textwidth]{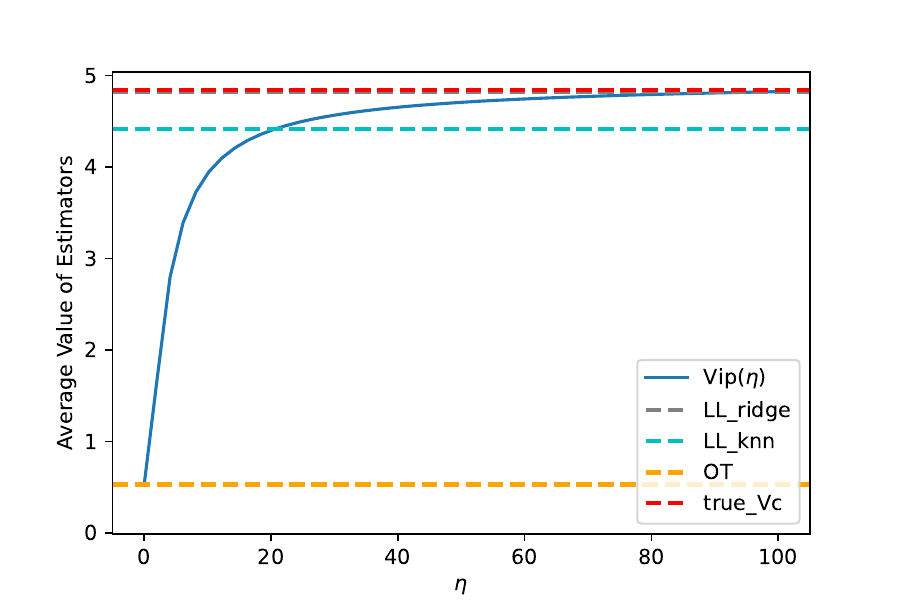} 
    }
    \subfigure[Quadratic location model]{
        \includegraphics[width=0.3\textwidth]{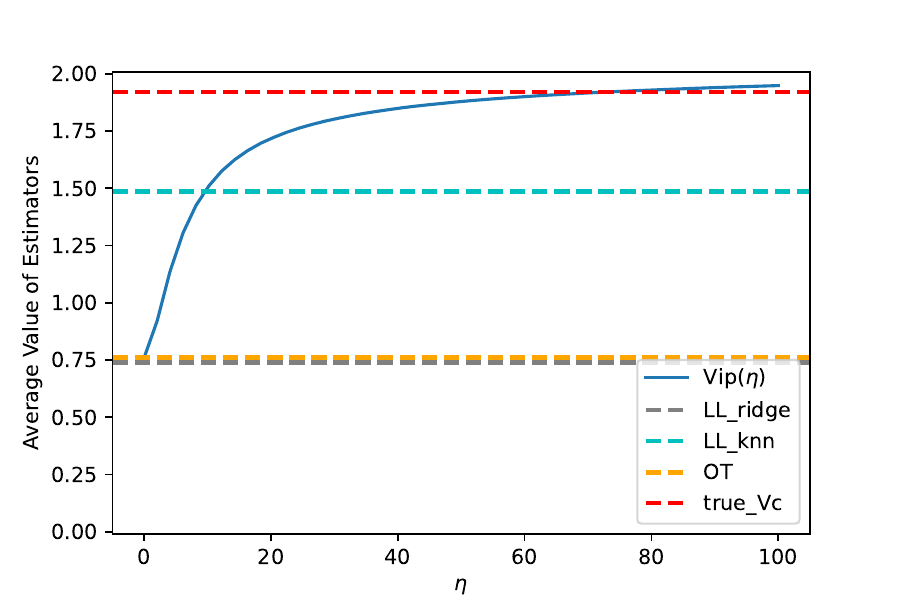} 
    }
     \subfigure[Scale model]{
        \includegraphics[width=0.3\textwidth]{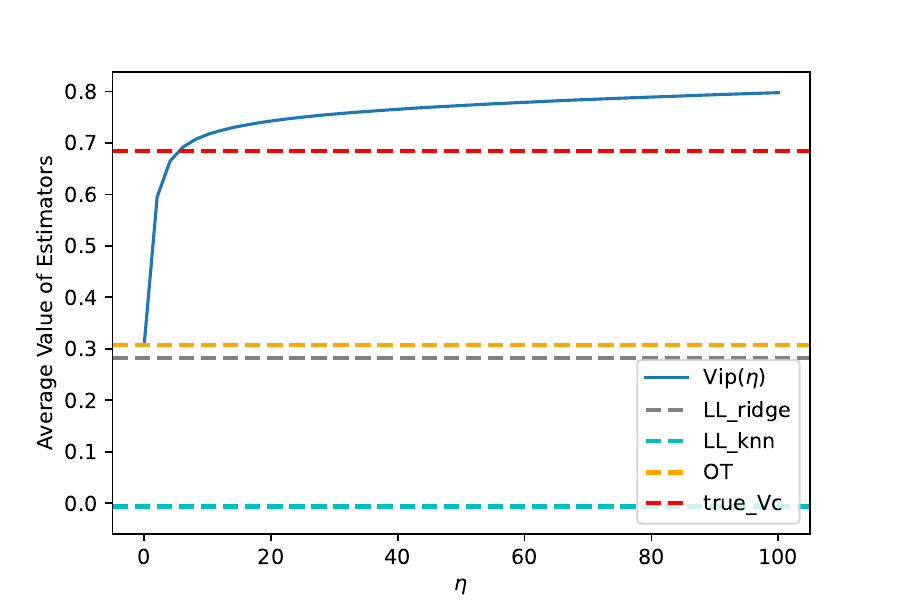} 
    }
    \caption{Values of estimators for $V_{\cc}$ (averaged over $800$ repetitions) for a well-specified model and two mis-specified models. The first row: $m=n=500$, the second row: $m=n=1500$. The models presented here correspond one-to-one with those in Figure~\ref{fig:compare}. Specifically, ``true\_Vc'' stands for the true value of $V_{\cc}$.}
    \label{fig:compare2}
\end{figure*}

\section{Examples of Quadratic Causal Estimands}\label{appe:sec:example}
We provide examples of causal quantities involving quadratic functions.
    \begin{itemize}
        \item \textit{Neymanian confidence interval.} As discussed in \Cref{sec:neymanian.CI}, the classic Neymanian confidence interval \cite{neyman1923application} depends on the variance of the potential outcome difference, that is, \(\mathrm{Var}(Y_i(1) - Y_i(0))\). The variance can be rewritten as the partially-identifiable term \(\EE[(Y_i(1) - Y_i(0))^2]\) with a quadratic function minus the identifiable term \(\EE[(Y_i(1) - Y_i(0))]^2\).
        
        \item \textit{Testing null effect.}  The null hypothesis of no treatment effect: $Y_i(0) = Y_i(1)$ for any unit $i$ is equivalently to \(\mathbb{E}[(Y_i(1) - Y_i(0))^2] = 0\). The term \(\mathbb{E}[(Y_i(1) - Y_i(0))^2]\) is partially identifiable involving a quadratic objective function, and the null hypothesis of no effect can be rejected if the PI set for this quantity is significantly above zero.

        \item \textit{Correlation index of potential outcomes \cite{fan2023partial}.}  As discussed in \Cref{sec:correlation.index}, 
        the correlation between potential outcomes quantifies the linear dependence between \( Y_i(0) \) and \( Y_i(1) \): a large positive correlation indicates that the potential outcomes are positively associated, suggesting that the treatment effect is relatively homogeneous across individuals.
        This correlation is only partially identifiable, primarily because the covariance between the potential outcomes, \( \mathrm{Cov}(Y_i(0), Y_i(1)) = \mathbb{E}[Y_i(0)Y_i(1)] - \mathbb{E}[Y_i(1)]\mathbb{E}[Y_i(0)] \), which involves a quadratic objective function, is only partially identifiable.
        
        \item \textit{Covariance of treatment effects of different potential outcomes.}
        For $2$-dimensional potential outcomes, the covariance between treatment effects of different dimensions of the outcomes $\mathrm{Cov}(\tau^1_i, \tau^2_i)$ with $\tau_{i}^j = Y_{i}^{(j)}(1) - Y_{i}^{(j)}(0)~j = 1, 2,~ Y_i = (Y_i^{(1)}, Y_i^{(2)})$, involves a quadratic function of the potential outcome vectors and is only partially identifiable.
        By investigating the lower and upper bounds of the covariance, we can detect positively or negatively associated treatment effects.
        There are many relevant applications in economics, e.g.~the effects of a promotion scheme on the sale of complementary goods shall be positively correlated.
    \end{itemize}

\section{Causal OT Relaxation}\label{app:causalOT}


In this section, we propose a Causal OT relaxation (\Cref{defi:Vcausal}) of the target COT. 
We show that the population optimal objective value of the Causal OT relaxation improves over $V_{\ip}$ (closer to $V_{\cc}$ compared to $V_{\ip}$), and we provide an explicit upper bound of the error  $|V_{\cc} - V_{\causal}|$ in \Cref{prop:interpolation.causal}. 
For limitations of this Causal OT relaxation of $V_{\cc}$, please refer to \Cref{sec:discussion}. 

\begin{definition}[Causal OT lower bound $V_{\causal}(\eta)$]\label{defi:Vcausal}
    For $\eta \in [0, \infty)$, let 
    \begin{align*}
        \pi_{\causal}^{\star}(\eta) =
    \argmin_{\pi \in \Pi_{\causal}} \EE_{\pi}\left[h(Y(0), Y(1)) + \eta \lnorm{Z(0) - Z(1)}{2}^2\right],
    \end{align*}
    where the set of joint coupling $\pi$ consistent with joint distributions of outcome and associated mirror covariate in treatment and control groups, as well as obeying additional causal constraints, is defined as
    \begin{align*}
        \Pi_{\causal} =& \left\{\pi \in \mathcal{P}(\calY^2 \times \calZ^2): \pi_{Y(0), Z(0)} = P_{Y(0), Z}, ~\pi_{Y(1), Z(1)} = P_{Y(1), Z}, \right.\\
        &\left.\pi_{Y(1) \mid Z(0), Z(1)} = \pi_{Y(1) \mid Z(1)}, ~\pi_{Y(0) \mid Z(0), Z(1)} = \pi_{Y(0) \mid Z(0)} \right\}.
    \end{align*}
    Then the Causal OT lower bound $V_{\causal}(\eta)$ is defined as 
    \[
        V_{\causal}(\eta) = \EE_{\pi_{\causal}^{\star}(\eta)}[h(Y(0), Y(1))].
    \]
\end{definition}

Further, in this section, we impose the following assumption to justify Definition~\ref{defi:Vcausal} and our analysis.
\begin{assumption}[Uniqueness of $\pi_{\causal}^{\star}(\eta), \pi_{\cc}^{\star}$]\label{a:uni_picausal}
    Assume that $\pi_{\causal}^{\star}(\eta)$ is uniquely defined, and the $\pi_{\cc}^{\star}$ in Table~\ref{tab:definitions} is also uniquely defined.
\end{assumption}

Compared to \( \Pi_{\ip} \), \( \Pi_{\causal} \) imposes an additional constraint on the coupling to satisfy the causal condition that \( Y(0) \mid Z(0), Z(1) \) has the same distribution as \( Y(0) \mid Z(0) \). In other words, \( Y(0) \) is independent of \( Z(1) \) given \( Z(0) \). This assumption is reasonable since \( Y(0) \) and \( Z(1) \) belong to different ``worlds'', and any dependence between them should operate exclusively through \( Z(0) \).

The following result shows that $V_{\causal}(\eta)$ interpolates the lower bounds $V_{\ip}$ and $V_{\cc}$ as $\eta$ ranges from $0$ to $\infty$, particularly dominating $V_{\ip}(\eta)$ for any $\eta \geq 0$.



\begin{proposition}[Interpolation between lower bounds]\label{prop:interpolation.causal}
    Under Assp.~\ref{a:basic},~\ref{a:cost},~\ref{a:uni_picausal}, for any $\eta \geq 0$, we have
    
    (i). (Monotonicity) $V_{\ip}(\eta) \leq V_{\causal}(\eta) \leq V_{\cc}$, and  $V_{\causal}(\eta)$ is non-decreasing and continuous with respect to $\eta$.
    
    (ii). (Interpolation) $\lim_{\eta \rightarrow \infty} V_{\causal}(\eta) = V_{\cc}$.
    
    (iii). (Convergence rate) Further assume that there exists $L_Z > 0$, such that
    \begin{align}\label{cond:condA}
        \Wass_1\left(P_{Y(0) \mid Z = z_0}, P_{Y(0)\mid Z = z_1}\right) \leq L_{Z} \lnorm{z_0 - z_1}{2} \quad \forall z_0, z_1 \in \calZ,       \tag{Condition A}
    \end{align}

or 
\begin{align}\label{cond:condB}
    \Wass_1\left(P_{Y(1) \mid Z = z_0}, P_{Y(1)\mid Z = z_1}\right) \leq L_{Z} \lnorm{z_0 - z_1}{2} \quad \forall z_0, z_1 \in \calZ,    \tag{Condition B}
\end{align}

then we have
    \begin{align*}
        0 \le V_{\cc} - V_{\causal}(\eta) \le \frac{(L_h L_Z)^2}{\eta},
    \end{align*}
where $L_h = \sup_{y, y_0, y_1 \in \calY} \frac{|h(y,y_0) - h(y, y_1)|}{\lnorm{y_0 - y_1}{2}}$.
\end{proposition}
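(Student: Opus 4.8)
<br>

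The plan is to prove Proposition~\ref{prop:interpolation.causal} in three parts, mirroring the structure already established for $V_{\ip}(\eta)$ in Proposition~\ref{prop:interpolation}, and then add the quantitative bound in part (iii).

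\textbf{Parts (i) and (ii).} First I would observe the inclusions of constraint sets. Since $\Pi_{\causal} \subseteq \Pi_{\ip}$ (the causal conditional-independence constraints only shrink the feasible set), for every $\eta$ the same penalized objective is minimized over a smaller set, giving $V_{\ip}(\eta) \leq \langle \pi^\star_{\causal}(\eta), h + \eta\|Z(0)-Z(1)\|_2^2\rangle$; and since the limiting set $\Pi_{\cc}'$ (with $Z(0)=Z(1)$ a.s., as in \eqref{eq:Vc2}) trivially satisfies the causal constraints, $\Pi_{\cc}' \subseteq \Pi_{\causal}$, so the penalized value is $\leq V_{\cc}$. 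Then the argument that $V_{\causal}(\eta) = \langle\pi^\star_{\causal}(\eta),h\rangle$ itself lies in $[V_{\ip}(\eta), V_{\cc}]$ needs a touch of care: the lower bound $V_{\ip}(\eta) \le V_{\causal}(\eta)$ does not follow immediately from set inclusion because the objectives differ by the penalty term. Instead I would re-run the three-inequality ``four-line'' monotonicity argument from the proof of Proposition~\ref{prop:interpolation}(i), comparing $\pi^\star_{\ip}(\eta)$ and $\pi^\star_{\causal}(\eta)$ at a common $\eta$: optimality of each over its own feasible set, plus $\Pi_{\causal}\subseteq\Pi_{\ip}$, yields $\EE_{\pi^\star_{\causal}(\eta)}[\|Z(0)-Z(1)\|_2^2] \leq \EE_{\pi^\star_{\ip}(\eta)}[\|Z(0)-Z(1)\|_2^2]$ and hence $V_{\ip}(\eta)\le V_{\causal}(\eta)$. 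The non-decreasingness and continuity of $V_{\causal}(\eta)$ in $\eta$ carry over verbatim from the proof of Proposition~\ref{prop:interpolation}(i): the monotonicity-in-$\eta$ comparison only uses optimality and boundedness, and the left/right continuity argument uses tightness of $\Pi_{\causal}$ (which is closed in the weak topology, being an intersection of $\Pi_{\ip}$ with the closed set cut out by the conditional-independence constraints), Prokhorov, and the uniqueness in Assumption~\ref{a:uni_picausal}. For the interpolation limit $\lim_{\eta\to\infty}V_{\causal}(\eta)=V_{\cc}$, I would again copy the structure of Proposition~\ref{prop:interpolation}(ii): the uniform bound $\langle\pi^\star_{\causal}(\eta), h+\eta\|Z(0)-Z(1)\|_2^2\rangle \le V_{\cc}$ forces $\EE_{\pi^\star_{\causal}(\eta)}[\|Z(0)-Z(1)\|_2^2]\to 0$, pass to a weak subsequential limit $\pi^\star_\infty$, note $Z(0)=Z(1)$ a.s.\ under $\pi^\star_\infty$ and that the causal constraints are preserved in the limit so $\pi^\star_\infty\in\Pi_{\cc}'$, giving $V_{\cc}\le\langle\pi^\star_\infty,h\rangle = \lim V_{\causal}(\eta_k)$, sandwiched above by $V_{\cc}$.

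\textbf{Part (iii): the convergence rate.} This is the genuinely new content and the main obstacle. The idea is that the Causal OT relaxation gives order-$1/\eta$ accuracy (versus no explicit rate for $V_{\ip}$) precisely because the causal constraint lets one ``decouple'' $Y$ from the covariate mismatch. I would take the optimal COT coupling $\pi^\star_{\cc}$, which couples $Y(0)$ and $Y(1)$ conditionally on a shared $Z$; I need to build from it a feasible $\pi\in\Pi_{\causal}$ whose penalized cost exceeds $V_{\cc}$ by at most $(L_hL_Z)^2/\eta$. The natural construction is a ``mismatched'' coupling: sample $Z(0)=z_0$ and $Z(1)=z_1$ from some joint $\gamma$ on $\calZ^2$ (to be chosen), then draw $Y(0)\mid (z_0,z_1) \sim P_{Y(0)\mid Z=z_0}$ and $Y(1)\mid(z_0,z_1)\sim P_{Y(1)\mid Z=z_1}$ using, conditionally, the optimal COT coupling structure ``transported'' between fibers. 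Concretely, under Condition~A, I would: fix the optimal COT coupling conditional on $z_1$ for the pair $(Y(0)\mid Z=z_1, Y(1)\mid Z=z_1)$, then replace the $Y(0)$-marginal $P_{Y(0)\mid Z=z_1}$ by $P_{Y(0)\mid Z=z_0}$ via a $\Wass_1$-optimal map, which by Condition~A moves mass a distance controlled by $L_Z\|z_0-z_1\|_2$, changing the $h$-cost by at most $L_h L_Z \|z_0-z_1\|_2$ in expectation. This $\pi$ satisfies the causal constraints by construction ($Y(0)$ depends on $(Z(0),Z(1))$ only through $Z(0)$, etc.) and has the right marginals if $\gamma$ has marginals $P_Z, P_Z$. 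Its penalized excess cost over $V_{\cc}$ is then at most $\EE_\gamma[L_hL_Z\|Z(0)-Z(1)\|_2 + \eta\|Z(0)-Z(1)\|_2^2]$. Optimizing the scalar choice: taking $\gamma$ concentrated so that $\|Z(0)-Z(1)\|_2\equiv t$ (or, more carefully, a coupling where the typical displacement is $t$), the bound is $L_hL_Z t + \eta t^2$, minimized near $t = L_hL_Z/(2\eta)$ to give $\le (L_hL_Z)^2/(4\eta)$ — and even the crude choice $t=L_hL_Z/\eta$ yields $\le (L_hL_Z)^2/\eta + (L_hL_Z)^2/\eta$; I would be a bit more careful to land exactly on the stated $(L_hL_Z)^2/\eta$, likely by the $t\to 0^+$ limiting version where one splits $\calZ$ into a fine partition and pairs each cell with a neighbor at distance $t$, or simply by noting that since $V_{\causal}(\eta)\ge V_{\ip}(\eta)$ is monotone we only need the bound for, say, $t = L_hL_Z/\eta$ after absorbing constants — but the cleanest route is the infimal-convolution optimization $\inf_t (L_hL_Zt+\eta t^2)$ evaluated honestly.

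\textbf{Anticipated difficulty.} The delicate point in part (iii) is making the ``mismatched coupling'' construction rigorous: one must exhibit an actual element of $\mathcal{P}(\calY^2\times\calZ^2)$ satisfying \emph{all four} constraints in $\Pi_{\causal}$ simultaneously — the two joint-marginal constraints $\pi_{Y(0),Z(0)}=P_{Y(0),Z}$, $\pi_{Y(1),Z(1)}=P_{Y(1),Z}$ and the two conditional-independence constraints — while controlling $\EE_\pi[h]$. Glueing lemmas (as used in Proposition~\ref{prop:interpolation}(ii)) handle the measure-theoretic assembly, but verifying that the $Y(0),Z(0)$-joint-marginal really comes out as $P_{Y(0),Z}$ after the $\Wass_1$ re-coupling requires that the $\Wass_1$ transport between fibers is done ``in the right direction'' and integrated against the correct $\gamma$; a symmetric construction is needed to handle Condition~B (where one perturbs the $Y(1)$ side instead). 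I would present the Condition~A case in full and remark that Condition~B is entirely analogous by symmetry.
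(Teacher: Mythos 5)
Your treatment of parts (i) and (ii) follows the paper's approach (set inclusions $\Pi_{\cc}\subseteq\Pi_{\causal}\subseteq\Pi_{\ip}$, then transport the monotonicity/continuity/interpolation machinery from Proposition~\ref{prop:interpolation} verbatim) and is essentially fine, though your proposed ``four-line'' repair of $V_{\ip}(\eta)\le V_{\causal}(\eta)$ does not work as stated: that technique needs \emph{both} cross-feasibility inequalities, and here $\pi^\star_{\ip}(\eta)\notin\Pi_{\causal}$ in general, so you only obtain $\EE_{\pi^\star_{\ip}(\eta)}[h+\eta\,\text{pen}]\le\EE_{\pi^\star_{\causal}(\eta)}[h+\eta\,\text{pen}]$ and cannot subtract the way the proof of Proposition~\ref{prop:interpolation}(i) does; you would need a different argument to compare $\EE_{\pi^\star_{\causal}(\eta)}[\lVert Z(0)-Z(1)\rVert_2^2]$ with $\EE_{\pi^\star_{\ip}(\eta)}[\lVert Z(0)-Z(1)\rVert_2^2]$.

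The substantive gap is in part (iii), where your approach points in the wrong direction. You propose to build a feasible $\pi\in\Pi_{\causal}$ whose \emph{penalized} cost is at most $V_{\cc}+O(1/\eta)$. But exhibiting a good feasible point only upper-bounds the penalized optimum $\min_{\pi\in\Pi_{\causal}}\EE_\pi[h+\eta\,\text{pen}]$ --- and this quantity is already $\le V_{\cc}$ exactly, since $\Pi_{\cc}\subseteq\Pi_{\causal}$ (take your $\gamma$ to be the diagonal coupling $Z(0)=Z(1)$; you recover $V_{\cc}$ with zero penalty, so the construction is vacuous). What the statement requires is a \emph{lower} bound on $V_{\causal}(\eta)=\EE_{\pi^\star_{\causal}(\eta)}[h]$, which is a property of the \emph{optimal} coupling, not of any feasible coupling you construct. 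No amount of care in the construction, nor any choice of $t$ in your infimal-convolution step, fixes this directional mismatch.

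The missing ingredient is precisely the paper's Lemma~\ref{lemm:interpolation.causal}. The causal constraints force $\pi^\star_{\causal}(\eta)$ to have conditional marginals $P_{Y(0)\mid Z=z_0}$ and $P_{Y(1)\mid Z=z_1}$ given $(Z(0),Z(1))=(z_0,z_1)$, and since the penalty does not depend on $(Y(0),Y(1))$, optimality forces the conditional law of $(Y(0),Y(1))$ to realize $\min_{\pi\in\Pi(z_0,z_1)}\EE_\pi[h]$. One then compares this fiberwise OT value to $\min_{\pi\in\Pi(z_0,z_0)}\EE_\pi[h]$ (the COT kernel) via a gluing argument: re-couple one marginal using the $\Wass_1$-optimal plan, and the Lipschitz hypothesis on the conditional marginals gives a perturbation of at most $L_h L_Z\lVert z_0-z_1\rVert_2$. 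Integrating against $\pi^\star_{\causal,Z(0),Z(1)}(\eta)$ and applying Cauchy--Schwarz yields $V_{\cc}-V_{\causal}(\eta)\le L_hL_Z\sqrt{\EE_{\pi^\star_{\causal}(\eta)}[\lVert Z(0)-Z(1)\rVert_2^2]}$. Combining this with the crude optimality bound $\EE_{\pi^\star_{\causal}(\eta)}[\lVert Z(0)-Z(1)\rVert_2^2]\le\eta^{-1}(V_{\cc}-V_{\causal}(\eta))$ gives a self-referential inequality in $\delta:=V_{\cc}-V_{\causal}(\eta)$, namely $\delta\le L_hL_Z\sqrt{\delta/\eta}$, from which $\delta\le(L_hL_Z)^2/\eta$ follows. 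Your intuition that the causal constraint ``decouples $Y$ from the covariate mismatch'' is the right one, but it must be applied to the optimizer rather than to a hand-built coupling.
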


\begin{proof}[Proof of \Cref{prop:interpolation.causal}]
\begin{enumerate}[label=(\roman*).]
    \item Since $\Pi_{\causal} \subseteq \Pi_{\ip}$, then $V_{\ip}(\eta) \le V_{\causal}(\eta)$. Similar to Proposition~\ref{prop:interpolation}, $\Pi_{\cc} \subseteq \Pi_{\causal}$ since any coupling such that $Z(0) = Z(1)$ a.s.~satisfies the causal constraints in $\Pi_{\causal}(\eta)$, thus $V_{\causal}(\eta) \leq V_{\cc}$.
    \item This follows from $\lim_{\eta \rightarrow \infty} V_{\ip}(\eta) = V_{\cc}$ (Proposition~\ref{prop:interpolation}) and $V_{\ip}(\eta) \leq V_{\causal}(\eta) \leq V_{\cc}$ from (i).
    \item 
We will prove under \eqref{cond:condB}, and the proof under \eqref{cond:condA} is similar by swapping $Y(0)$ and $Y(1)$.

By the optimality of $\pi_{\causal}^{\star}(\eta)$ and $\Pi_{\cc} \subseteq \Pi_{\causal}$, we have $V_{\causal}(\eta) + \eta\mathbb{E}_{\pi_{\causal}^{\star}(\eta)}\big[\|Z(0) - Z(1)\|_2^2\big] 
    \le V_{\cc}$, which gives
\begin{align}\label{eq:bound}
    \mathbb{E}_{\pi_{\causal}^{\star}(\eta)}\big[\|Z(0) - Z(1)\|_2^2\big] 
    \le \eta^{-1} (V_{\cc} - V_{\causal}(\eta)).
\end{align}
\Cref{lemm:interpolation.causal} says 
\begin{align*}
    V_{\cc} - V_{\causal}(\eta)
\le L_h L_Z\sqrt{\EE_{\pi_{\causal}^{\star}(\eta)}\left[\|Z(1) - Z(0)\|_2^2\right]}.
\end{align*}
Combining Eq.~\eqref{eq:bound} and \Cref{lemm:interpolation.causal}, we arrive at
\begin{align*}
     V_{\cc} - V_{\causal}(\eta) \le  L_h L_Z \sqrt{\eta^{-1} (V_{\cc} - V_{\causal}(\eta))},
\end{align*}
which further yields the desired result $V_{\cc} - V_{\causal}(\eta) \le  ( L_h L_Z)^2/\eta$.
\end{enumerate}
\end{proof}

\begin{lemma}[Upper bound of $V_{\cc} - V_{\causal}$]\label{lemm:interpolation.causal}
    Under the assumptions of \Cref{prop:interpolation.causal} and the additional assumption in (iii),
    \begin{align*}
        V_{\cc} - V_{\causal}(\eta)
    \le L_h L_Z \sqrt{\EE_{\pi_{\causal}^{\star}(\eta)}\left[\|Z(1) - Z(0)\|_2^2\right]}.
    \end{align*}
\end{lemma}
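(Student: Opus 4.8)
The plan is to bound $V_{\cc}-V_{\causal}(\eta)$ from above by building an explicit competitor for the COT problem out of the optimal causal coupling $\pi_{\causal}^{\star}(\eta)$: I will ``collapse'' its two mirror covariates onto a single copy of $Z$ and then control the induced change in transport cost using the smoothness hypothesis. I argue under \eqref{cond:condB}; the case \eqref{cond:condA} is identical after interchanging $Y(0)$ and $Y(1)$. First I disintegrate $\pi_{\causal}^{\star}(\eta)$ along $(y_0,y_1,z_0,z_1)\mapsto(z_0,z_1)$, writing $\mu$ for the law of $(Z(0),Z(1))$. Combining the marginal constraints $\pi_{Y(0),Z(0)}=P_{Y(0),Z}$, $\pi_{Y(1),Z(1)}=P_{Y(1),Z}$ with the causal constraints $\pi_{Y(0)\mid Z(0),Z(1)}=\pi_{Y(0)\mid Z(0)}$ and $\pi_{Y(1)\mid Z(0),Z(1)}=\pi_{Y(1)\mid Z(1)}$ that define $\Pi_{\causal}$, one sees that for $\mu$-a.e.\ $(z_0,z_1)$ the conditional law of $Y(0)$ is exactly $P_{Y(0)\mid Z=z_0}$ and that of $Y(1)$ is exactly $P_{Y(1)\mid Z=z_1}$; I denote by $\kappa_{z_0,z_1}$ the (otherwise arbitrary) conditional joint law of $(Y(0),Y(1))$, which carries these two marginals.

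Next I glue $\pi_{\causal}^{\star}(\eta)$ with a family of conditional $\Wass_1$-optimal plans. For each $(z_0,z_1)$ I pick a $\Wass_1$-optimal coupling $\gamma_{z_0,z_1}$ of $P_{Y(1)\mid Z=z_1}$ (as first marginal) and $P_{Y(1)\mid Z=z_0}$ (as second marginal); existence follows from compactness of $\calY$ (Assp.~\ref{a:basic}), and $(z_0,z_1)\mapsto\gamma_{z_0,z_1}$ can be taken measurable by a standard measurable-selection argument for optimal transport plans. Then I define a new measure $\tilde\pi$ generatively: draw $(z_0,z_1)\sim\mu$; draw $(y_0,y_1)\sim\kappa_{z_0,z_1}$; draw $\tilde y_1$ from the conditional of $\gamma_{z_0,z_1}$ given that its first coordinate equals $y_1$; output $(y_0,\tilde y_1,z_0,z_0)$. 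Integrating out $z_1$ and $y_1$ shows that under $\tilde\pi$ one has $(Y(0),Z(0))\sim P_{Y(0),Z}$, $(\tilde Y(1),Z(0))\sim P_{Y(1),Z}$, and $Z(0)=Z(1)$ a.s., so $\tilde\pi\in\Pi_{\text{c}}'$ in the notation of \eqref{eq:Vc2}.

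For the cost comparison I use the reformulation \eqref{eq:Vc2} of $V_{\cc}$ (whose penalty term vanishes for $\tilde\pi$), getting $V_{\cc}\le\EE_{\tilde\pi}[h(Y(0),Y(1))]=\EE[h(y_0,\tilde y_1)]$, while $(y_0,y_1)$ has the same joint law under the construction as under $\pi_{\causal}^{\star}(\eta)$, so $V_{\causal}(\eta)=\EE[h(y_0,y_1)]$. Subtracting, bounding $|h(y,y_0)-h(y,y_1)|\le L_h\lnorm{y_0-y_1}{2}$, and conditioning on $(z_0,z_1)$ --- for which $(y_1,\tilde y_1)$ is distributed exactly as $\gamma_{z_0,z_1}$ --- gives
\[
V_{\cc}-V_{\causal}(\eta)\le L_h\,\EE_{\pi_{\causal}^{\star}(\eta)}\left[\Wass_1\left(P_{Y(1)\mid Z=Z(1)},P_{Y(1)\mid Z=Z(0)}\right)\right]\le L_hL_Z\,\EE_{\pi_{\causal}^{\star}(\eta)}\left[\lnorm{Z(1)-Z(0)}{2}\right],
\]
the last step being \eqref{cond:condB}; a final application of Jensen's (Cauchy--Schwarz) inequality to pass from $\EE[\lnorm{Z(1)-Z(0)}{2}]$ to $\sqrt{\EE[\lnorm{Z(1)-Z(0)}{2}^2]}$ yields the claim.

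The step I expect to be the main obstacle is the gluing construction: one must select the plans $\gamma_{z_0,z_1}$ measurably in $(z_0,z_1)$ so that $\tilde\pi$ is genuinely a probability measure, and then verify carefully that marginalizing $\tilde\pi$ reproduces $P_{Y(0),Z}$ and $P_{Y(1),Z}$ exactly. Both points rely crucially on the causal constraints of $\Pi_{\causal}$ --- that is, $Y(0)\perp Z(1)\mid Z(0)$ and $Y(1)\perp Z(0)\mid Z(1)$ --- which guarantee that the conditional laws appearing in the disintegration are precisely the true conditional outcome laws and do not depend on the ``other world's'' covariate; without this, collapsing the mirror covariates would perturb the marginals and $\tilde\pi$ would fail to belong to $\Pi_{\cc}$.
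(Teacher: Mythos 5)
Your proof is correct, and it reaches the bound by a route that is structurally different from the paper's even though it relies on the same core ingredients: gluing with a $\Wass_1$-optimal plan between $P_{Y(1)\mid Z=z_1}$ and $P_{Y(1)\mid Z=z_0}$, the Lipschitz constant $L_h$ of $h$ in its second argument, \eqref{cond:condB}, and Cauchy--Schwarz. The paper argues pointwise in $(z_0,z_1)$: using (conditional) optimality of both $\pi_{\cc}^\star$ and $\pi_{\causal}^\star(\eta)$, it identifies $\EE_{\pi_{\cc}^\star}[h\mid Z(0)=z_0]$ and $\EE_{\pi_{\causal}^\star(\eta)}[h\mid Z(0)=z_0,Z(1)=z_1]$ with the conditional transport costs $\min_{\pi\in\Pi(z_0,z_0)}\EE_\pi[h]$ and $\min_{\pi\in\Pi(z_0,z_1)}\EE_\pi[h]$, bounds the difference of these two scalars by a gluing argument at fixed $(z_0,z_1)$, and then integrates the scalar estimate against $\pi_{\causal,Z(0),Z(1)}^\star(\eta)$. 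You instead assemble a single explicit competitor $\tilde\pi\in\Pi_{\cc}'$ (in the notation of \eqref{eq:Vc2}) from $\pi_{\causal}^\star(\eta)$ and a measurable family of $\Wass_1$-optimal plans $\gamma_{z_0,z_1}$, and compare $\EE_{\tilde\pi}[h]$ directly with $\EE_{\pi_{\causal}^\star(\eta)}[h]=V_{\causal}(\eta)$. Your route trades one technical burden for another: it avoids invoking the conditional optimality of $\pi_{\cc}^\star$ and $\pi_{\causal}^\star(\eta)$ (which the paper uses but does not spell out), but it requires the measurable-selection step you flag --- that $(z_0,z_1)\mapsto\gamma_{z_0,z_1}$ can be chosen jointly measurable so that $\tilde\pi$ is a bona fide probability measure (standard, e.g.\ the measurable-selection corollary in \parencite{villani2009optimal}). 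Your verification that $\tilde\pi$ reproduces $P_{Y(0),Z}$ and $P_{Y(1),Z}$ and satisfies $Z(0)=Z(1)$ a.s.\ is correct, and you are right that it hinges essentially on the two causal constraints defining $\Pi_{\causal}$; the final chain of inequalities, ending with Jensen/Cauchy--Schwarz to pass from $\EE[\lnorm{Z(1)-Z(0)}{2}]$ to $\sqrt{\EE[\lnorm{Z(1)-Z(0)}{2}^2]}$, matches the paper's.
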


\begin{proof}[Proof of \Cref{lemm:interpolation.causal}]

For notation simplicity, in this proof, we use $\diff \pi_{\causal}^{\star}(z_1 \mid z_0)$ to denote $\diff \pi_{\causal,Z(1) \mid Z(0) = z_0}^{\star}(\eta)(z_1)$, and $\diff \pi_{\causal}^{\star}(z_0, z_1)$ to denote $\diff \pi_{\causal,Z(0), Z(1)}^{\star}(\eta)(z_0, z_1)$. Then
\begin{align}\label{lemm:eq:bound}
\begin{split}
    &\quad~V_{\cc} - V_{\causal}(\eta) \\
    &= \EE_{\pi_{\cc}^{\star}}\left[\EE_{\pi_{\cc}^{\star}}\left[h(Y(0), Y(1)) \mid Z(0)\right]\right]  -\EE_{\pi_{\causal}^{\star}(\eta)} \left[\EE_{\pi_{\causal}^{\star}(\eta)} \left[\EE_{\pi_{\causal}^{\star}(\eta)} \left[h(Y(0), Y(1)) \mid Z(0), Z(1) \right] \mid Z(0) \right]\right]\\ 
    &= \int \EE_{\pi_{\cc}^{\star}} \left[h(Y(0), Y(1)) \mid Z(0) = z_0 \right] - \left(\int \EE_{\pi_{\causal}^{\star}(\eta)} [h(Y(0), Y(1)) \mid Z(0) = z_0, Z(1) = z_1] \diff\pi_{\causal}^{\star}(z_1 \mid z_0) \right) \diff \PP_{Z(0)}(z_0)\\
    &\le \int \int 
    \left|\EE_{\pi_{\cc}^{\star}} \left[h(Y(0), Y(1)) \mid Z(0) = z_0 \right] -  \EE_{\pi_{\causal}^{\star}(\eta)} \left[h(Y(0), Y(1)) \mid Z(0) = z_0, Z(1) = z_1\right] \right | \diff\pi_{\causal}^{\star}(z_1 \mid z_0) ~\diff\PP_{Z(0)}(z_0)\\
    &= \int 
    \left|\EE_{\pi_{\cc}^{\star}} \left[h(Y(0), Y(1)) \mid Z(0) = z_0 \right] -  \EE_{\pi_{\causal}^{\star}(\eta)} \left[h(Y(0), Y(1)) \mid Z(0) = z_0, Z(1) = z_1\right] \right | \diff\pi_{\causal}^{\star}(z_0, z_1).
\end{split}
\end{align}
Let $\Pi(z_0, z_1) := \left\{\pi \in \calP(\calY^2): \pi_{Y(0)} = P_{Y(0) \mid Z = z_0}, \pi_{Y(1)} = P_{Y(1) \mid Z = z_1}\right\}$, then
\begin{align*}
    &\EE_{\pi_{\cc}^{\star}} \left[h(Y(0), Y(1)) \mid Z(0) = z_0 \right] 
    = \EE_{\pi_{\cc}^{\star}} \left[h(Y(0), Y(1)) \mid Z(1) = Z(0) = z_0 \right] 
    = \min_{\pi \in \Pi(z_0, z_0)} \EE_\pi\left[h(Y(0), Y(1))\right].
\end{align*}
Since $\pi_{\causal}^{\star}$ satisfies the causal constraint, that is, under $\pi_{\causal}^{\star}$, $Y(1) \mid Z(1), Z(0) \stackrel{d}{=} Y(1) \mid Z(1)$ and $Y(0) \mid Z(1), Z(0) \stackrel{d}{=} Y(0) \mid Z(0)$, we have 
\begin{align*}
    &\EE_{\pi_{\causal}^{\star}(\eta)} \left[h(Y(0), Y(1)) \mid Z(0) = z_0, Z(1) = z_1\right] 
    = \min_{\pi \in \Pi(z_0, z_1)} \EE_\pi\left[h(Y(0), Y(1))\right].
\end{align*}

By Assumption~\ref{a:cost}, $h$ is locally Lipschitz, and thus on the compact domain $\calY$, there is a constant $L_h > 0$, such that 
\begin{align}\label{eq:liph}
    h(y, y_0) - h(y, y_1) \leq L_h \lnorm{y_0 - y_1}{2}\quad \forall y, y_0, y_1 \in \calY.
\end{align}

Now we prove the result using the similar technique to proving the triangle inequality of Wasserstein distances. Let $\pi_{1} \in \calP(\calY^2)$ be the optimal coupling of the OT problem:
\[
    \min_{\pi \in \Pi(z_0, z_1)} \EE_\pi\left[h(Y(0), Y(1))\right],
\]
and let $\pi_2 \in \calP(\calY^2)$ be the optimal coupling of 
\[
    \Wass_1\left(P_{Y(1) \mid Z = z_1}, P_{Y(1) \mid Z = z_0}\right).
\]

By the gluing lemma in Chapter 1 of \cite{villani2009optimal}, there is a coupling $\pi_3 \in \calP(\calY^3)$, where we denote $\calY^3 = \{(y^{(1)}, y^{(2)}, y^{(3)}), y^{(k)} \in \calY\}$, such that  
\begin{align*}
    &\pi_3(y^{(1)}, y^{(2)}) = \pi_1,\\
    &\pi_3(y^{(2)}, y^{(3)}) = \pi_2.
\end{align*}

Therefore, we have
\begin{align*}
    \min_{\pi \in \Pi(z_0, z_0)} \EE_\pi\left[h(Y(0), Y(1))\right] \leq & \EE_{\pi_3}\left[h(y^{(1)}, y^{(3)})\right]\\
    \leq & \EE_{\pi_3}\left[h(y^{(1)}, y^{(2)}) + L_h\lnorm{y^{(2)} - y^{(3)}}{2}\right]\\
    = & \min_{\pi \in \Pi(z_0, z_1)} \EE_\pi\left[h(Y(0), Y(1))\right] + L_h \Wass_1\left(P_{Y(1) \mid Z = z_1}, P_{Y(1) \mid Z = z_0}\right)\\
    \leq & \min_{\pi \in \Pi(z_0, z_1)} \EE_\pi\left[h(Y(0), Y(1))\right] + L_{h} L_{Z} \lnorm{z_0 - z_1}{2},
\end{align*}
where the first inequality is due to $\pi_3(y^{(1)}, y^{(3)}) \in \Pi(z_0, z_0)$; the second inequality is due to \eqref{eq:liph}; the equation is due to the construction of $\pi_3$ by gluing $\pi_1, \pi_2$; the last inequality is due to \eqref{cond:condB}. Let $L := L_h L_Z$, we get
\[
    \min_{\pi \in \Pi(z_0, z_0)} \EE_\pi\left[h(Y(0), Y(1))\right] - \min_{\pi \in \Pi(z_0, z_1)} \EE_\pi\left[h(Y(0), Y(1))\right] \leq L \lnorm{z_0 - z_1}{2}.
\]

Similarly for the other direction, we can get
\[
    \min_{\pi \in \Pi(z_0, z_0)} \EE_\pi\left[h(Y(0), Y(1))\right] - \min_{\pi \in \Pi(z_0, z_1)} \EE_\pi\left[h(Y(0), Y(1))\right] \geq - L \lnorm{z_0 - z_1}{2}.
\]

Plug the upper and lower bounds back to Eq.~\eqref{lemm:eq:bound},
\begin{align*}
    \quad~V_{\cc} - V_{\causal}(\eta)
    &\le  \int 
    L \|z_0 - z_1\|_2 \diff \pi_{\causal}^{\star}(z_0, z_1)
    \\
    &= L \EE_{\pi_{\causal}^{\star}(\eta)}\left[\|Z(1) - Z(0)\|_2\right]\le L\sqrt{\EE_{\pi_{\causal}^{\star}(\eta)}\left[\|Z(1) - Z(0)\|_2^2\right]},
\end{align*}
where we use the Cauchy-Schwarz inequality in the second inequality.

\end{proof}

\end{document}